\pgfplotsset{compat = 1.18}
\newif\ifExtAbs
\theoremstyle{plain}
\newtheorem{theorem}{Theorem}
\newtheorem*{theorem*}{Theorem}
\newtheorem{proposition}[theorem]{Proposition}
\newtheorem*{proposition*}{Proposition}
\newtheorem{corollary}[theorem]{Corollary}
\newtheorem*{corollary*}{Corollary}
\newtheorem{lemma}[theorem]{Lemma}
\newtheorem*{lemma*}{Lemma}
\newtheorem*{conjecture*}{Conjecture}
\newtheorem{result}{Result}
\newtheorem*{result*}{Result}
\newtheorem*{question*}{Question}
\theoremstyle{definition}
\newtheorem{definition}{Definition}
\newtheorem*{definition*}{Definition}
\newtheorem*{example*}{Example}
\theoremstyle{plain}
\newtheorem{remark}{Remark}
\newtheorem*{remark*}{Remark}
\title{
\ifExtAbs
\else
  \vspace{-1cm}
\fi
Optimal Untelegraphable Encryption and \\ Implications for Uncloneable Encryption}
  \author{Anne Broadbent}
\author{Eric Culf}
\author{Denis Rochette}
\affil{}
\author[1]{Anne Broadbent \thanks{anne.broadbent@uottawa.ca}}
\author[2]{Eric Culf \thanks{eculf@uwaterloo.ca}}
\author[1]{Denis Rochette \thanks{denis.rochette@uottawa.ca}}
\affil[1]{Department of Mathematics and Statistics, University of Ottawa}
\affil[2]{Institute for Quantum Computing and University of Waterloo \vspace{-1cm}}
\date{}
\begin{document}

\maketitle

\ifExtAbs
\ifExtAbs
\else
  \section{Introduction} \label{sec:introduction}
\fi

\emph{Uncloneable encryption} (UE), originally introduced by Broadbent and Lord \cite{BL20}, and named after the no-cloning principle \cite{WZ82,Die82}, is an 
encryption scheme that provides a level of security unachievable classically: no pirating adversary $\advA$, receiving a ciphertext encoding a message $m$ as a quantum state, can apply a quantum operation that produces a bipartite state over registers $\regB$ and~$\regC$ such that two local receiving adversaries, $\advB$ and $\advC$, each accessing one of these registers, can both recover information about $m$, even when given the classical secret~key.
When $m$ is sampled uniformly from the message space and both $\advB$ and $\advC$ are required to recover the entire message, this is referred to as \emph{one-way, or search} security. A more robust notion, called \emph{indistinguishability, or decision} security, allows $\advA$ to fix two messages, receive an encryption of one of them at random, and then have $\advB$ and $\advC$ attempt to guess which one. 
Search security has been achieved in an information-theoretic sense in \cite{BL20}, but indistinguishability security remains an open problem. Existing positive results rely on additional assumptions such as oracles \cite{BL20, AKL+22}, relaxed definitions \cite{KN23, AKY25, CG24,BC25arXiv}, or new conjectures \cite{AB24,BBCBP24}.

Since its introduction, uncloneable encryption has become a fundamental building block of quantum cryptography, supporting applications such as
secure software leasing \cite{KNY20arxiv,BJL+21,ALP21},
uncloneable zero-knowledge proofs \cite{JK24},
quantum copy-protection \cite{AK21,ALL+21,CLLZ21,CMP24},
private-key quantum money \cite{BL20},
uncloneable decryption \cite{GZ20eprint,CLLZ21,SW22arxiv,KT25},
quantum functional encryption \cite{MM24arxiv}.
It has moreover inspired the study of numerous other uncloneable cryptographic primitives.

\emph{Untelegraphable encryption} (UTE), introduced in \cite{CKNY24arxiv}, is a natural relaxation of UE, inspired by the \emph{no-telegraphing principle}, 
 which asserts that, without pre-shared entanglement, quantum information cannot be transmitted through classical channel alone \cite{Wer98}\footnote{The no-telegraphing principle is also referred to as the no-classical-teleportation theorem.}.
The security guarantee for UTE is as follows: any telegraphing adversary $\advA$ who receives a quantum ciphertext of a message cannot generate \emph{classical} information that would enable another adversary, the receiver $\advB$, to recover information about the message even when given the classical secret key. The notions of search and indistinguishability security in UTE are analogous to those defined for~UE. 
We note that early work \cite{BL20} has already established indistinguishable-UE in the oracle model for unentangled adversaries, thus implying  indistinguishable-UTE in the oracle model.  \cite{CKNY24arxiv} improves this to an unconditional construction. They further introduce the notion of collusion-resistant UTE, in which security is preserved even when the adversary~$\advA$ adaptively selects multiple pairs of messages across successive rounds. Constructions of such schemes are obtained from pseudorandom functions, or from pseudorandom states when the number of rounds is bounded. Finally, they present an everlasting collusion-resistant UTE scheme in the quantum random oracle model.

A central reason for studying UTE lies in its conceptual connection to the two foundational no-go theorems already mentionned: the no-cloning theorem and the no-telegraphing theorem.
These two no-go theorems are in fact informationally equivalent~\cite{NZ24}: to construct a copy using a telegrapher, one may simply copy the intermediate classical output produced by the telegrapher’s channel and subsequently apply the reconstruction procedure to each of these classical copies. Conversely, to achieve telegraphing from a cloner, if one can generate a sufficiently large number of copies to enable a full classical characterization of the quantum state (\emph{e.g.}, via quantum state tomography), then this classical description can be transmitted through the telegrapher’s channel to reconstruct the original state. Naturally, this latter direction incurs a significantly higher computational cost. The computational separation between the no-cloning and no-telegraphing principles has been explored in detail in~\cite{NZ24}. The informational equivalence between the no-cloning and no-telegraphing theorems motivates our investigation of UE via UTE.

\ifExtAbs
   \paragraph{Results}
\else
  \subsection{Results}
\fi

Our work investigates the existence of a UTE scheme that achieves private-key untelegraphable-indistinguishability security unconditionally, as well as its implications for UE. 
In particular, we analyze different notions of \emph{indistinguishability} security for \emph{quantum encryption of classical messages} (QECM) schemes, \emph{i.e.}~protocols that encrypt classical messages using quantum ciphertexts and classical keys. These notions can be formalized through multiplayer games involving pirating and receiving parties; we say that a 
scheme achieves indistinguishable security if the \emph{winning probability} of the following security game is negligible in the security parameter:
\begin{enumerate}[itemsep=1pt,parsep=1pt]
    \item The adversaries decide on the strategy and a pair of messages to distinguish $(m_0,m_1)$.
    \item The referee samples a uniformly random bit $b$ and sends the encryption of $m_b$ to the pirate adversary $\mc{A}$.
    \item $\mc{A}$ applies a CPTP map to the ciphertext states, getting a state in the receiving adversaries' registers. She sends these registers to the corresponding adversaries.
    \item The receiving adversaries are given the encryption key by the referee and attempt to guess the bit $b$, without communicating.
    \item The adversaries win if all  receivers are able to guess $b$ correctly.
\end{enumerate}
The difference between \emph{uncloneable-indistinguishable} security (for UE) and \emph{untelegraphable-indistinguishable} security (for UTE) lies in the nature of the CPTP map: in UE it may be arbitrary, while in UTE it must be quantum-to-classical (in which case the output can be freely copied). Also, for UTE, the security game is played with one receiver rather than two or more as for UE, but since classical information can be perfectly cloned, the optimal winning probability is the same for any number of receivers. When $\advA$ receives $t$ copies of the ciphertext, we obtain the notions of \emph{$t$-copy uncloneable-indistinguishable} security and \emph{$t$-copy untelegraphable-indistinguishable} security, respectively.

In this work, we prove that the Haar-measure scheme (\cref{def:Haar}), satisfies several forms of untelegraphable security.

\begin{definition}[Haar-measure encryption] \label{def:Haar}
     For a $\log n$-bit message $m \in [n]$ and a Haar-random unitary $U \in \mathcal{U}(d)$ as the key, encryption is given by
    \begin{equation*}
        \Enc(m,U) \coloneqq U \big( \,\underbracket{\,\ketbra{m}{m}\,{}}_{n \times n \text{ matrix}} \;\otimes\, \underbracket{\,I_{d/n}\,{}}_{\text{identity}} \big) U^*,
    \end{equation*}
    and decryption is performed by first applying $U^*$ to cancel the conjugation, and then measuring the first register.
\end{definition}

We also study efficient versions of this encryption arising from unitary designs and pseudorandom unitaries 
\ifExtAbs
\else
  (see \Cref{rem:efficient_construction})
\fi
. This yields efficient schemes which satisfy the security bounds of \cref{res:1,res:2,res:3,res:4} below.

\begin{result}
\ifExtAbs
\else
[\Cref{thm:haar_random_encryption_scheme_untelegraphable_indistinguishable}]
\fi
\label{res:1}
    The Haar-measure encryption scheme for classical bits ($n=2$ messages) achieves one-copy untelegraphable-indistinguishable security with winning probability upper bounded by
\ifExtAbs
   $\tfrac{1}{2} + \tfrac{1}{2 \sqrt{d+1}}$.
\else
    \begin{equation*}
        \tfrac{1}{2} + \tfrac{1}{2 \sqrt{d+1}}.
    \end{equation*}
\fi
\end{result}

\begin{result}
\ifExtAbs
\else
[\Cref{thm:haar_random_encryption_scheme_t_copy_untelegraphable_indistinguishable}]
\fi
\label{res:2}
    The Haar-measure encryption scheme for $n$ messages achieves $t$-copy untelegraphable-indistinguishable security with winning probability upper bounded by
    \ifExtAbs
        $\tfrac{1}{2} + \tfrac{7t\sqrt{n}}{\sqrt{d}}.$
    \else
    \begin{equation*}
        \tfrac{1}{2} + \tfrac{7t\sqrt{n}}{\sqrt{d}}.
    \end{equation*}
    \fi
\end{result}

We also consider the notion of collusion-resistant security, introduced by \cite{CKNY24arxiv}, in which $\advA$ may adaptively select multiple message pairs across $Q$ successive rounds, and its everlasting variant where adversaries are computationally bounded.

\begin{result}
\ifExtAbs
\else
[\Cref{thm:collusion}]
\fi
\label{res:3}
    The Haar-measure encryption scheme for $n$ messages achieves $Q$-round collusion-resistant untelegraphable-indistinguishable security with winning probability upper bounded by
    \ifExtAbs
        $\tfrac{1}{2} + \tfrac{7Q\sqrt{n}}{\sqrt{d}}.$
    \else
        \begin{equation*}
            \tfrac{1}{2} + \tfrac{7Q\sqrt{n}}{\sqrt{d}}.
        \end{equation*}
    \fi
\end{result}

\begin{result}
\ifExtAbs
\else
[\Cref{cor:collusion}]
\fi
\label{res:4}
    The Haar-measure encryption scheme achieves unconditional collusion-resistant untelegraphable-indistinguishable security for a polynomially-bounded number of rounds; and everlasting security for an unbounded number of rounds, under the assumption of pseudorandom unitaries.
\end{result}

This improves upon the result of \cite{CKNY24arxiv}, where collusion-resistant security relied on pseudorandom functions (an assumption stronger than pseudorandom unitaries), bounded-round security relied on pseudorandom states, and everlasting security relied on the QROM. Our plain-model achievability of everlasting security in the private key model contrasts with the result of~\cite{CKNY24arxiv} showing impossibility in the public-key~model.\looseness=-1

Then, in analogy with the informational equivalence of no-cloning and no-telegraphing principles in specific asymptotic regimes, we show that UTE can be expressed as a limit of UE as the number of receiving adversaries increases.

\begin{result}
\ifExtAbs
\else
[\Cref{thm:UTE_UE_asymptotic_equivalence}]
\fi
    For any UE scheme, the winning probability of the uncloneability-indistinguishability game with $s$ receiving adversaries converges to that of the untelegraphability-indistinguishability game at rate
    \ifExtAbs
        $\mathcal{O} \big( \tfrac{1}{\sqrt[3]{s}} \big).$
    \else
        \begin{equation*}
            \mathcal{O} \big( \tfrac{1}{\sqrt[3]{s}} \big).
        \end{equation*}
    \fi
\end{result}

Additionally we derive several lower bounds for the securities of the Haar-measure scheme. Since the winning probability for UE indistinguishability is always at least that for UTE, any lower bound for UTE also applies to UE
\ifExtAbs
  (see \Cref{fig:haar-bounds}). 
\else
(see \Cref{fig:telegraphing}). 
\fi

\ifExtAbs
\begin{figure}[t]
  \centering
  \begin{subfigure}[t]{0.48\textwidth}
    \centering
    \includegraphics[width=\linewidth]{section/Fig_1_compressed.jpg}
    \caption{Telegraphing value bounds.
    \label{fig1-a}}
  \end{subfigure}
  \hfill
  \begin{subfigure}[t]{0.48\textwidth}
    \centering
    \includegraphics[width=\linewidth]{section/Fig_2_compressed.jpg}
    \caption{Cloning value bounds.
    \label{fig1-b}}
  \end{subfigure}
  \caption{Bounds on the Haar-measure encryption of one bit. The outlined white regions show the range of possible values. See the full version for higher-resolution images and detailed captions. \cref{res:1} provides the upper bound in \cref{fig1-a} and \cref{res:6} provides the lower bounds in \cref{fig1-a,fig1-b}.}
  \label{fig:haar-bounds}
\end{figure}
 
\else
  \begin{figure}
\centering
\begin{tikzpicture}
        \draw[gray] (0.0,0) node[below]{{\color{black}$0$}} -- (0.0,8) (1.5,0) node[below]{{\color{black}$8$}} -- (1.5,8) (3.0,0) node[below]{{\color{black}$16$}} -- (3.0,8) (4.5,0) node[below]{{\color{black}$24$}} -- (4.5,8) (6.0,0) node[below]{{\color{black}$32$}} -- (6.0,8) (7.5,0) node[below]{{\color{black}$40$}} -- (7.5,8) (9.0,0) node[below]{{\color{black}$48$}} -- (9.0,8) (10.5,0) node[below]{{\color{black}$56$}} -- (10.5,8) (12.0,0) node[below]{{\color{black}$64$}} -- (12.0,8) (13.5,0) node[below]{{\color{black}$72$}} -- (13.5,8) (0,0.0) node[left]{{\color{black}$0.4$}} -- (15,0.0) (0,1.3333333333333333) node[left]{{\color{black}$0.5$}} -- (15,1.3333333333333333) (0,2.6666666666666665) node[left]{{\color{black}$0.6$}} -- (15,2.6666666666666665) (0,4.0) node[left]{{\color{black}$0.7$}} -- (15,4.0) (0,5.333333333333333) node[left]{{\color{black}$0.8$}} -- (15,5.333333333333333) (0,6.666666666666667) node[left]{{\color{black}$0.9$}} -- (15,6.666666666666667);
        \draw[Latex-Latex] (0,8) -- (0,0) -- (15,0);
        \node at (7.5,-0.8){Dimension};
        \node at (-1,4.0){\rotatebox{90}{Telegraphing value}};
        \fill[pattern=dots] (0.0,8.0) -- (0.375,4.666666666666667) -- (0.75,3.833333333333333) -- (1.125,3.4166666666666665) -- (1.5,3.15625) -- (1.875,2.973958333333333) -- (2.25,2.837239583333333) -- (2.625,2.729817708333333) -- (3.0,2.642537434895833) -- (3.375,2.5698038736979165) -- (3.75,2.5079803466796875) -- (4.125,2.4545873006184893) -- (4.5,2.4078683853149414) -- (4.875,2.3665401140848794) -- (5.25,2.329639871915181) -- (5.625,2.296429653962453) -- (6.0,2.266332893942793) -- (6.375,2.238891730395456) -- (6.75,2.2137373304770636) -- (7.125,2.190568804236439) -- (7.5,2.1691379174638614) -- (7.875,2.1492378083178965) -- (8.25,2.13069452479552) -- (8.625,2.11336058585069) -- (9.0,2.0971100180899116) -- (9.375,2.08183448439478) -- (9.75,2.067440231489752) -- (10.125,2.0538456593016705) -- (10.5,2.0409793677665222) -- (10.875,2.028778574069398) -- (11.25,2.01718782005713) -- (11.625,2.00615790898094) -- (12.0,1.9956450249864455) -- (12.375,1.9856099993553387) -- (12.75,1.97601769544325) -- (13.125,1.9668364902702504) -- (13.5,1.9580378353127945) -- (13.875,1.9495958825833408) -- (14.25,1.9414871648300513) -- (14.625,1.933690320836504) -- (15.0,1.9261858584927143) --(15,0) -- (0,0) -- cycle;
        \fill[pattern=crosshatch] (0.0,8.0) -- (0.375,5.1823351279308385) -- (0.75,4.314757303333052) -- (1.125,3.853096486728182) -- (1.5,3.555555555555555) -- (1.875,3.343408963851758) -- (2.25,3.1823339874174312) -- (2.625,3.054659264981073) -- (3.0,2.9502375002422205) -- (3.375,2.8627715591370784) -- (3.75,2.788119268239949) -- (4.125,2.7234294270471646) -- (4.5,2.666666666666666) -- (4.875,2.616333931532502) -- (5.25,2.5713022545136788) -- (5.625,2.5307020135118328) -- (6.0,2.493851039704652) -- (6.375,2.460205672971355) -- (6.75,2.4293265820357153) -- (7.125,2.400854358700581) -- (7.5,2.3744917459240407) -- (7.875,2.349990468884031) -- (8.25,2.3271413233332394) -- (8.625,2.305766609985963) -- (9.0,2.285714285714285) -- (9.375,2.266853389352006) -- (9.75,2.249070426324593) -- (10.125,2.232266483284322) -- (10.5,2.2163549047100295) -- (10.875,2.2012594065388256) -- (11.25,2.1869125328859726) -- (11.625,2.173254384464949) -- (12.0,2.1602315639280567) -- (12.375,2.1477962957087016) -- (12.75,2.1359056872384605) -- (13.125,2.124521105462569) -- (13.5,2.1136076479742036) -- (13.875,2.1031336922528343) -- (14.25,2.093070509730919) -- (14.625,2.0833919339506832) -- (15.0,2.0740740740740744) -- (15,8) -- (0,8) -- cycle;
        \draw[thick] (15.0,1.9261858584927143) -- (14.625,1.933690320836504) -- (14.25,1.9414871648300513) -- (13.875,1.9495958825833408) -- (13.5,1.9580378353127945) -- (13.125,1.9668364902702504) -- (12.75,1.97601769544325) -- (12.375,1.9856099993553387) -- (12.0,1.9956450249864455) -- (11.625,2.00615790898094) -- (11.25,2.01718782005713) -- (10.875,2.028778574069398) -- (10.5,2.0409793677665222) -- (10.125,2.0538456593016705) -- (9.75,2.067440231489752) -- (9.375,2.08183448439478) -- (9.0,2.0971100180899116) -- (8.625,2.11336058585069) -- (8.25,2.13069452479552) -- (7.875,2.1492378083178965) -- (7.5,2.1691379174638614) -- (7.125,2.190568804236439) -- (6.75,2.2137373304770636) -- (6.375,2.238891730395456) -- (6.0,2.266332893942793) -- (5.625,2.296429653962453) -- (5.25,2.329639871915181) -- (4.875,2.3665401140848794) -- (4.5,2.4078683853149414) -- (4.125,2.4545873006184893) -- (3.75,2.5079803466796875) -- (3.375,2.5698038736979165) -- (3.0,2.642537434895833) -- (2.625,2.729817708333333) -- (2.25,2.837239583333333) -- (1.875,2.973958333333333) -- (1.5,3.15625) -- (1.125,3.4166666666666665) -- (0.75,3.833333333333333) -- (0.375,4.666666666666667) -- (0.0,8.0) -- (0.375,5.1823351279308385) -- (0.75,4.314757303333052) -- (1.125,3.853096486728182) -- (1.5,3.555555555555555) -- (1.875,3.343408963851758) -- (2.25,3.1823339874174312) -- (2.625,3.054659264981073) -- (3.0,2.9502375002422205) -- (3.375,2.8627715591370784) -- (3.75,2.788119268239949) -- (4.125,2.7234294270471646) -- (4.5,2.666666666666666) -- (4.875,2.616333931532502) -- (5.25,2.5713022545136788) -- (5.625,2.5307020135118328) -- (6.0,2.493851039704652) -- (6.375,2.460205672971355) -- (6.75,2.4293265820357153) -- (7.125,2.400854358700581) -- (7.5,2.3744917459240407) -- (7.875,2.349990468884031) -- (8.25,2.3271413233332394) -- (8.625,2.305766609985963) -- (9.0,2.285714285714285) -- (9.375,2.266853389352006) -- (9.75,2.249070426324593) -- (10.125,2.232266483284322) -- (10.5,2.2163549047100295) -- (10.875,2.2012594065388256) -- (11.25,2.1869125328859726) -- (11.625,2.173254384464949) -- (12.0,2.1602315639280567) -- (12.375,2.1477962957087016) -- (12.75,2.1359056872384605) -- (13.125,2.124521105462569) -- (13.5,2.1136076479742036) -- (13.875,2.1031336922528343) -- (14.25,2.093070509730919) -- (14.625,2.0833919339506832) -- (15.0,2.0740740740740744);
\end{tikzpicture}
\caption{Bounds on the telegraphing value of the Haar-measure encryption of one bit, where the outlined white region is the range of possible values. The crosshatched~\begin{tikzpicture}\fill[pattern=crosshatch] (0,0) rectangle (1em,1em);\end{tikzpicture} region represents the upper bound of \cref{thm:haar_random_encryption_scheme_untelegraphable_indistinguishable} and the dotted~\begin{tikzpicture}\fill[pattern=dots] (0,0) rectangle (1em,1em);\end{tikzpicture} region represents the general lower bound of \cref{prop:lowerbound}. \label{fig:telegraphing}}
\end{figure}
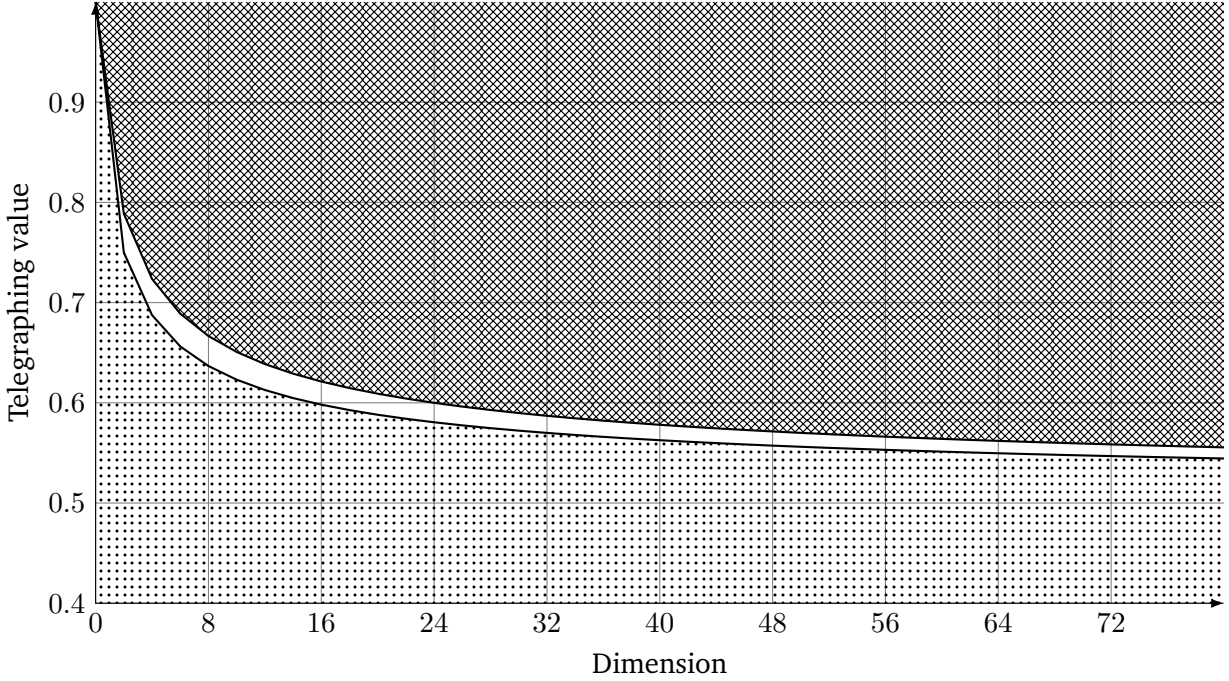
\fi

\begin{result}\label{res:6}
\ifExtAbs
\else
[\Cref{cor:better-lowerbound}]
\fi
    The Haar-measure encryption scheme for classical bits ($n=2$ messages) achieves one-copy untelegraphable-indistinguishable security with winning probability lower bounded by
    \ifExtAbs
       $\tfrac{1}{2} + \tfrac{1}{\sqrt{2\pi d}} + \mathcal{O} \big( \tfrac{1}{d^{3/2}} \big)$.
    \else
        \begin{equation*}
            \tfrac{1}{2} + \tfrac{1}{\sqrt{2\pi d}} + \mathcal{O} \big( \tfrac{1}{d^{3/2}} \big).
        \end{equation*}
    \fi
\end{result}

\begin{result}
\ifExtAbs
\else
[\Cref{cor:Haar_measure_scheme_lower_bound}]
\fi
    The Haar-measure encryption scheme for $n$ messages achieves $t$-copy untelegraphable-indistinguishable security with winning probability lower bounded by
    \ifExtAbs
       $\tfrac{1}{2} + \tfrac{\sqrt{tn}}{6\sqrt{\pi d}} + \mathcal{O} \big( \tfrac{\sqrt{t}\sqrt[3]{n}}{\sqrt[3]{d}} \big)$.
    \else
        \begin{equation*}
            \tfrac{1}{2} + \tfrac{\sqrt{tn}}{6\sqrt{\pi d}} + \mathcal{O} \big( \tfrac{\sqrt{t}\sqrt[3]{n}}{\sqrt[3]{d}} \big).
        \end{equation*}
    \fi
\end{result}

\ifExtAbs
\else
\begin{figure}[ht]
\centering
\begin{tikzpicture}
        \draw[gray] (0.0,0) node[below]{{\color{black}$0$}} -- (0.0,8) (1.875,0) node[below]{{\color{black}$5$}} -- (1.875,8) (3.75,0) node[below]{{\color{black}$10$}} -- (3.75,8) (5.625,0) node[below]{{\color{black}$15$}} -- (5.625,8) (7.5,0) node[below]{{\color{black}$20$}} -- (7.5,8) (9.375,0) node[below]{{\color{black}$25$}} -- (9.375,8) (11.25,0) node[below]{{\color{black}$30$}} -- (11.25,8) (13.125,0) node[below]{{\color{black}$35$}} -- (13.125,8) (0,0.0) node[left]{{\color{black}$0.4$}} -- (15,0.0) (0,1.3333333333333333) node[left]{{\color{black}$0.5$}} -- (15,1.3333333333333333) (0,2.6666666666666665) node[left]{{\color{black}$0.6$}} -- (15,2.6666666666666665) (0,4.0) node[left]{{\color{black}$0.7$}} -- (15,4.0) (0,5.333333333333333) node[left]{{\color{black}$0.8$}} -- (15,5.333333333333333) (0,6.666666666666667) node[left]{{\color{black}$0.9$}} -- (15,6.666666666666667);
        \draw[Latex-Latex] (0,8) -- (0,0) -- (15,0);
        \node at (7.5,-0.8){Number of qubits};
        \node at (-1,4.0){\rotatebox{90}{Cloning value}};
        \fill[pattern=crosshatch] (3.75,7.977189523108058) -- (4.125,7.623209003582964) -- (4.5,7.30827083453526) -- (4.875,7.02631751508886) -- (5.25,6.772411793415625) -- (5.625,6.542520794144692) -- (6.0,6.333333333333333) -- (6.375,6.1421131465690255) -- (6.75,5.966583334935902) -- (7.125,5.804835979063423) -- (7.5,5.6552614282206966) -- (7.875,5.51649278359882) -- (8.25,5.387362077549058) -- (8.625,5.266865469035085) -- (9.0,5.15413541726763) -- (9.375,5.0484182851531125) -- (9.75,4.949056193441867) -- (10.125,4.855472223824792) -- (10.5,4.767158277660193) -- (10.875,4.6836650541109695) -- (11.25,4.604593730405679) -- (11.625,4.529589017453897) -- (12.0,4.458333333333333) -- (12.375,4.390541890520274) -- (12.75,4.325958534068827) -- (13.125,4.264352200159029) -- (13.5,4.205513889690173) -- (13.875,4.149254071511144) -- (14.25,4.095400445672062) -- (14.625,4.043796009672948) -- (15.0,3.9942973807770152) -- (15,8) -- (4.125,8) -- cycle;
        \fill[pattern=dots] (0.375,2.1666666666666665) -- (0.75,1.7499999999999998) -- (1.125,1.5416666666666665) -- (1.5,1.4374999999999998) -- (1.875,1.3854166666666665) -- (2.25,1.3593749999999998) -- (2.625,1.3463541666666665) -- (3.0,1.3398437499999998) -- (3.375,1.3365885416666665) -- (3.75,1.3349609374999998) -- (4.125,1.3341471354166665) -- (4.5,1.3337402343749998) -- (4.875,1.3335367838541665) -- (5.25,1.3334350585937498) -- (5.625,1.3333841959635415) -- (6.0,1.3333587646484373) -- (6.375,1.3333460489908853) -- (6.75,1.3333396911621092) -- (7.125,1.3333365122477212) -- (7.5,1.3333349227905271) -- (7.875,1.3333341280619302) -- (8.25,1.3333337306976316) -- (8.625,1.3333335320154824) -- (9.0,1.3333334326744077) -- (9.375,1.3333333830038705) -- (9.75,1.3333333581686018) -- (10.125,1.3333333457509675) -- (10.5,1.3333333395421503) -- (10.875,1.3333333364377418) -- (11.25,1.3333333348855374) -- (11.625,1.3333333341094353) -- (12.0,1.3333333337213842) -- (12.375,1.3333333335273587) -- (12.75,1.3333333334303459) -- (13.125,1.3333333333818396) -- (13.5,1.3333333333575863) -- (13.875,1.3333333333454598) -- (14.25,1.3333333333393964) -- (14.625,1.3333333333363648) -- (15.0,1.333333333334849) -- (15,0) -- (0.375,0) -- cycle;
        \fill[pattern=checkerboard] (0.375,2.1666666666666665) -- (0.75,1.7499999999999998) -- (1.125,1.5416666666666665) -- (1.5,1.4374999999999998) -- (1.875,1.3854166666666665) -- (2.25,1.3593749999999998) -- (2.625,1.3463541666666665) -- (3.0,1.3398437499999998) -- (3.375,1.3365885416666665) -- (3.75,1.3349609374999998) -- (4.125,1.3341471354166665) -- (4.5,1.3337402343749998) -- (4.875,1.3335367838541665) -- (5.25,1.3334350585937498) -- (5.625,1.3333841959635415) -- (6.0,1.3333587646484373) -- (6.375,1.3333460489908853) -- (6.75,1.3333396911621092) -- (7.125,1.3333365122477212) -- (7.5,1.3333349227905271) -- (7.875,1.3333341280619302) -- (8.25,1.3333337306976316) -- (8.625,1.3333335320154824) -- (9.0,1.3333334326744077) -- (9.375,1.3333333830038705) -- (9.75,1.3333333581686018) -- (10.125,1.3333333457509675) -- (10.5,1.3333333395421503) -- (10.875,1.3333333364377418) -- (11.25,1.3333333348855374) -- (11.625,1.3333333341094353) -- (12.0,1.3333333337213842) -- (12.375,1.3333333335273587) -- (12.75,1.3333333334303459) -- (13.125,1.3333333333818396) -- (13.5,1.3333333333575863) -- (13.875,1.3333333333454598) -- (14.25,1.3333333333393964) -- (14.625,1.3333333333363648) -- (15.0,1.333333333334849) -- (15.0,1.3333384061467544) -- (14.625,1.3333405073748716) -- (14.25,1.333343478960174) -- (13.875,1.3333476814164102) -- (13.5,1.3333536245870163) -- (13.125,1.333362029499489) -- (12.75,1.333373915840698) -- (12.375,1.3333907256656434) -- (12.0,1.3334144983480631) -- (11.625,1.3334481179979552) -- (11.25,1.333495663362793) -- (10.875,1.333562902662576) -- (10.5,1.333657993392253) -- (10.125,1.3337924719918186) -- (9.75,1.3339826534511743) -- (9.375,1.3342516106503055) -- (9.0,1.3346319735690153) -- (8.625,1.335169887967278) -- (8.25,1.335930613804696) -- (7.875,1.3370064426012216) -- (7.5,1.3385278942760603) -- (7.125,1.3406795518691115) -- (6.75,1.3437224552187876) -- (6.375,1.3480257704048897) -- (6.0,1.354111577104241) -- (5.625,1.3627182074764448) -- (5.25,1.3748898208751485) -- (4.875,1.3921030816195579) -- (4.5,1.4164463084169654) -- (4.125,1.4508728299057827) -- (3.75,1.4995592835005964) -- (3.375,1.568412326478231) -- (3.0,1.6657852336678611) -- (2.625,1.8034913196231304) -- (2.25,1.9982371340023874) -- (1.875,2.2736493059129277) -- (1.5,2.642537434895833) -- (1.125,3.15625) -- (0.75,3.833333333333333) -- (0.375,4.666666666666667) -- cycle;
        \draw[thick] (15.0,3.9942973807770152) -- (14.625,4.043796009672948) -- (14.25,4.095400445672062) -- (13.875,4.149254071511144) -- (13.5,4.205513889690173) -- (13.125,4.264352200159029) -- (12.75,4.325958534068827) -- (12.375,4.390541890520274) -- (12.0,4.458333333333333) -- (11.625,4.529589017453897) -- (11.25,4.604593730405679) -- (10.875,4.6836650541109695) -- (10.5,4.767158277660193) -- (10.125,4.855472223824792) -- (9.75,4.949056193441867) -- (9.375,5.0484182851531125) -- (9.0,5.15413541726763) -- (8.625,5.266865469035085) -- (8.25,5.387362077549058) -- (7.875,5.51649278359882) -- (7.5,5.6552614282206966) -- (7.125,5.804835979063423) -- (6.75,5.966583334935902) -- (6.375,6.1421131465690255) -- (6.0,6.333333333333333) -- (5.625,6.542520794144692) -- (5.25,6.772411793415625) -- (4.875,7.02631751508886) -- (4.5,7.30827083453526) -- (4.125,7.623209003582964) -- (3.75,7.977189523108058) --(0.375,8) -- (0.375,4.666666666666667) -- (0.75,3.833333333333333) -- (1.125,3.15625) -- (1.5,2.642537434895833) -- (1.875,2.2736493059129277) -- (2.25,1.9982371340023874) -- (2.625,1.8034913196231304) -- (3.0,1.6657852336678611) -- (3.375,1.568412326478231) -- (3.75,1.4995592835005964) -- (4.125,1.4508728299057827) -- (4.5,1.4164463084169654) -- (4.875,1.3921030816195579) -- (5.25,1.3748898208751485) -- (5.625,1.3627182074764448) -- (6.0,1.354111577104241) -- (6.375,1.3480257704048897) -- (6.75,1.3437224552187876) -- (7.125,1.3406795518691115) -- (7.5,1.3385278942760603) -- (7.875,1.3370064426012216) -- (8.25,1.335930613804696) -- (8.625,1.335169887967278) -- (9.0,1.3346319735690153) -- (9.375,1.3342516106503055) -- (9.75,1.3339826534511743) -- (10.125,1.3337924719918186) -- (10.5,1.333657993392253) -- (10.875,1.333562902662576) -- (11.25,1.333495663362793) -- (11.625,1.3334481179979552) -- (12.0,1.3334144983480631) -- (12.375,1.3333907256656434) -- (12.75,1.333373915840698) -- (13.125,1.333362029499489) -- (13.5,1.3333536245870163) -- (13.875,1.3333476814164102) -- (14.25,1.333343478960174) -- (14.625,1.3333405073748716) -- (15.0,1.3333384061467544);
\end{tikzpicture}
\caption{Bounds on the cloning value of the Haar-measure encryption of one bit, where the outlined white region is the range of possible values. The crosshatched~\begin{tikzpicture}\fill[pattern=crosshatch] (0,0) rectangle (1em,1em);\end{tikzpicture} region represents the upper bound due to~\cite{BC25arXiv}, the dotted~\begin{tikzpicture}\fill[pattern=dots] (0,0) rectangle (1em,1em);\end{tikzpicture} region represents the lower bound due to~\cite{MST21arxiv}, and the checkered~\begin{tikzpicture}\fill[pattern=checkerboard] (0,0) rectangle (1em,1em);\end{tikzpicture} region represents the improved lower bound of \cref{prop:lowerbound}.
\label{fig:bounds}}
\end{figure}
\fi

In particular our result on the one-copy untelegraphable-indistinguishable security for classical bits, of the Haar-measure scheme, is tight (to order).

Finally, we prove a minimality property of the Haar-measure scheme extending the result of~\cite{MST21arxiv} that yields general lower bounds for UTE, and consequently for UE 
\ifExtAbs
  (see \Cref{fig:haar-bounds}). 
\else
(see \Cref{fig:bounds}). 
\fi

\begin{result}
\ifExtAbs
\else
[\Cref{cor:UTE_and_UE_lower_bounds}]
\fi
    For any UE scheme with ciphertext dimension $d$, the winning probability of the uncloneability-indistinguishability game is lower bounded by
    \ifExtAbs
       $\tfrac{1}{2} + \Omega \big( \tfrac{1}{\sqrt{d}} \big)$.
    \else
        \begin{equation*}
            \tfrac{1}{2} + \Omega \big( \tfrac{1}{\sqrt{d}} \big).
        \end{equation*}
    \fi
\end{result}

This lower bound improves upon the previously best-known bound for UE, namely $\tfrac{1}{2} + \Omega \big( \frac{1}{d} \big)$, established in~\cite{MST21arxiv}.

We further study a strengthened notion of UTE security where adversarial CPTP maps are restricted only to entanglement-breaking channels, which strictly generalize quantum-to-classical maps. This yields novel generalised guarantees for both untelegraphable-indistinguishable and $t$-copy untelegraphable-indistinguishable security 
\ifExtAbs
  .
\else
  (see~\cref{lem:tg-char}).
\fi

\ifExtAbs
\else
 \subsection{Open questions}
 \begin{enumerate}
    \item Untelegraphable encryption constitutes a restriction of uncloneable encryption in the sense that it imposes a strict classicality on the messages that the pirate can send to the receiving adversaries. There are also a variety of intermediate restrictions that can be imposed on the pirate channel, \textit{e.g.}, bounded storage~\cite{DFSS05} or noisy communication~\cite{WST08}. These do not achieve the full generality of cloning attacks, but they do give rise to a much wider range of possible attacks than telegraphing. Is it possible to find similar upper bounds on the value to the ones we find in those models?
    
    \item In this work, we use untelegraphable encryption to constrain the value of uncloneable encryption. Is it possible to use this line of reasoning to find stronger results? For example, could we show the existence of an uncloneable bit by reducing to an untelegraphable encryption protocol? 
\end{enumerate}
\fi

\ifExtAbs
\else

\subsection{Organization}

The remainder of this paper is organized as follows.
\Cref{sec:preliminaries} introduces notation, background in quantum information theory, and formal definitions of the encryption schemes and security notions.
\Cref{sec:haar_random} defines the Haar-random scheme and proves untelegraphable-indistinguishability security.
\Cref{sec:collusion} proves that the Haar-measure scheme is secure against telegraphing-distinguishing attacks with collusion.
\Cref{sec:UTE_as_UE_limit} establishes the asymptotic equivalence of UTE and UE.
\Cref{sec:lowerbounds} derives lower bounds for Haar-measure security.
\Cref{sec:minimality} proves a minimality property of the Haar-measure scheme and lower bounds for all UTE and UE schemes.

\fi

\else
  \begin{abstract}
    We investigate the notion of untelegraphable encryption (UTE), a quantum encryption primitive that is a special case of uncloneable encryption (UE), where the adversary’s capabilities are restricted to producing purely classical information rather than arbitrary quantum states.
    We present an unconditionally secure construction of UTE that achieves untelegraphable-indistinguishability security, together with natural multi-ciphertext and bounded collusion-resistant extensions, without requiring any additional assumptions. We also extend this to the unbounded case, assuming pseudo-random unitaries, yielding everlasting security. Furthermore, 
    we derive results on UE using approaches from UTE in the following ways: first, we provide new lower bounds on UTE, which give new lower bounds on UE; second, we prove an asymptotic equivalence between UTE and UE in the regime where the number of adversaries in UE grows.
    These results suggest that UTE may provide a new path toward achieving a central open problem in the area: indistinguishability security for UE in the plain model.
\end{abstract}
\newpage
\tableofcontents
  \newpage

\section{Preliminaries} \label{sec:preliminaries}

We let $\NN$ denote the set of natural numbers, and write $[n]$ for the set $\{0, \dots, n-1\}$. The logarithm with base $b$ is denoted by $\log_b(\cdot)$, and in particular, we use $\log(\cdot)$ to denote the binary logarithm.

All Hilbert spaces considered in this work are assumed to be finite-dimensional. Given a finite set $A$, write $H_A=\C^A$ for the Hilbert space with canonical orthonormal basis $\set*{\ket{a}}{a\in A}$; in this case, $A$ is called a register. For Hilbert spaces $H$ and $K$, we denote by $\mathcal{B}(H, K)$ the space of bounded linear operators from $H$ to $K$. In the special case $K = H$, we simply write $\mathcal{B}(H)$. The identity operator on $H \simeq \CC^d$ is denoted by $I_d$, where the subscript is dropped if clear from context, and the trace over $H$ is written as $\Tr[\cdot]$. An operator $M \in \mathcal{B}(H)$ is positive semi-definite if and only if it is Hermitian (\ie, $M^* = M$) with non-negative eigenvalues, in which case we write $M \succeq 0$. The Hilbert space norm is denoted by $\norm{\cdot}$. The operator norm on $\mc{B}(H)$ is also denoted $\norm{\cdot}$, and the trace norm is denoted $\norm{\cdot}_{\Tr}=\frac{1}{2}\norm{\cdot}_1$.

A Hilbert space $H$ will be regarded as a quantum system. A quantum state (or density operator) on $H$ is a positive semi-definite operator $\rho \in \mathcal{B}(H)$ with unit trace. The set of all density operators is convex, with extreme points given by rank-one projectors $\ketbra{\psi}{\psi}$, where $\ket{\psi} \in H$ is a unit vector, referred to as a pure state. We shall use both vector and matrix notations for pure states interchangeably. We sometimes write $\rho_{A_1\cdots A_n}$ to mean $\rho\in\mc{B}(H_{A_1} \otimes \cdots \otimes H_{A_n})$.

For a composite system $H \simeq H_{A_1} \otimes \cdots \otimes H_{A_n}$ consisting of $n$ subsystems, the partial trace over the $i$-the subsystem is denoted by $\Tr_{A_i}[\cdot]$. We write partial traces of states $\rho_{A_1\cdots A_n}$ as $\rho_{A_1\cdots A{i-1}A_{i+1}\cdots A_n}=\Tr_{A_i}(\rho)$. When subsystems have different dimensions, we also index the partial trace by the dimension being traced out. For instance, if $H \simeq \CC^d \otimes \CC^D$, the notation $\Tr_D[\cdot]$ refers to the partial trace over the second tensor factor.

A linear map $\Phi: \mathcal{B}(H) \to \mathcal{B}(H)$ is said to be completely positive if, for every $k \in \NN$, the extended map $\Phi \otimes \id_k$ is positive (\ie, it maps positive operators to positive operators), where $\id_k$ denotes the identity map on $\mathcal{B}(\CC^k)$. The map $\Phi$ is trace preserving if $\Tr[\Phi(\rho)] = \Tr[\rho]$ for all $\rho \in \mathcal{B}(H)$. A quantum channel is a map that is both Completely Positive and Trace Preserving (\CPTP).

A positive operator-valued measure (\POVM) on $H$ is a finite family of positive semi-definite operators $\{M_i\}$ satisfying $\sum_i M_i = I$. A projection-valued measure (\PVM) is a special case of a \POVM in which each $M_i$ is an orthogonal projector. A generalised measurement on $H$ is a finite family of linear maps $\{V_i:H\rightarrow K\}$ for some Hilbert space $K$ such that $\sum_iV_i^\ast V_i=I$.

We denote by $\mathfrak{S}_n$ the symmetric group on $n$ elements. A unitary representation of a finite group $G$ is a group homomorphism $G\rightarrow\mc{U}(H)$. We make use of the representation $V_d:\mfk{S}_n\rightarrow\mc{U}((\C^d)^{\otimes n})$ defined via $V_d(\pi)(\ket{\psi_1}\otimes\cdots\otimes\ket{\psi_n})=\ket{\psi_{\pi^{-1}(1)}}\otimes\cdots\otimes\ket{\psi_{\pi^{-1}(n)}}$. The group of $d \times d$ unitary matrices is written as $\mathcal{U}(d)=\mc{U}(\C^d)$, consisting of all matrices $U$ such that $UU^* = U^*U = I_d$.

There is a unique translation-invariant probability measure on $\mc{U}(d)$ called the Haar measure. We denote this measure as $\mu_{Haar}$, since $d$ is usually clear from context. We denote integration with respect to the Haar measure by $\int\cdot\,dU=\int\cdot\,d\mu_{Haar}(U)$.

A family of functions $G_\lambda:K_\lambda\rightarrow\mc{U}(2^\lambda)$ is a pseudorandom unitary there is a quantum polynomial-time (in $\lambda$) algorithm that implements $k\mapsto G_{\lambda}(k)$, and any quantum polynomial-time distinguisher can only distinguish Haar-random $U$ from $G_\lambda(k)$ with uniformly random $k$ with negligible advantage, given oracle and inverse oracle access to either.

\subsection{Untelegraphable and uncloneable encryption}
\label{sec:definitions}
\emph{Untelegraphable encryption}, introduced in \cite{CKNY24arxiv}, is a relaxation of  \emph{uncloneable encryption} \cite{BL20}: a symmetric-key encryption scheme of classical messages in which ciphertexts are quantum states, designed to prevent unauthorized replication of information by adversaries.

\begin{definition}
    A \emph{quantum encryption of classical messages (QECM)} is a tuple $\ttt{Q}=(M,K,\pi,H,\{\sigma^k_m\}_{k\in K,m\in M})$, where
    \begin{itemize}
        \item $M$ is a finite set, called the set of \emph{messages};
        \item $K$ is a measureable space, called the set of \emph{keys};
        \item $\pi$ is a probability measure on $K$, called the \emph{key distribution};
        \item $H$ is a finite-dimensional Hilbert space, called the \emph{codespace};
        \item $\sigma^k_m$ are quantum states on $H$, called the \emph{ciphertexts}.
    \end{itemize}
    
    The QECM $\ttt{Q}$ is \emph{$\varepsilon$-correct} if for each $k\in K$, there is a POVM $\{P^k_m\}_{m\in M}$ such that for any $m\in M$,
    \begin{align*}
        \int_K\Tr(P^k_m\sigma^k_m)d\pi(k)\geq 1-\varepsilon.
    \end{align*}
    We say $\ttt{Q}$ is \emph{correct} if it is $0$-correct.

    An \emph{efficient QECM} is a collection of QECMs $\{\ttt{Q}_\lambda\}_{\lambda\in\N}$ such that there exists a negligible function $\eta$ such that $\ttt{Q}_\lambda$ is $\eta(\lambda)$-correct and a triple of quantum algorithms:
    \begin{itemize}
        \item $\Gen:\{1\}^\ast\rightarrow\bigcup_\lambda K_\lambda$ such that $\Gen(1^\lambda)$ samples from $\pi_\lambda$, called the \emph{key generation algorithm};
        \item $\Enc:\bigcup_\lambda\{1^\lambda\}\times M_\lambda\times K_\lambda\rightarrow\bigcup_\lambda D(H_\lambda)$ such that $\Enc(1^\lambda,m,k)=\sigma^{\lambda,k}_m$, called the \emph{encryption algorithm};
        \item $\Dec:\bigcup_\lambda \{1^\lambda\}\times D(H_\lambda)\times K_\lambda\rightarrow\bigcup_\lambda M_\lambda$ such that for all $m\in M_\lambda$, $\Pr[\Dec(1^\lambda,\sigma^{\lambda,k}_m,k)=m|k\leftarrow\pi]\geq 1-\eta(\lambda)$.
    \end{itemize}
\end{definition}

We can capture cloning and telegraphing attacks in tandem with the following general form of attack against a QECM.

\begin{definition} \label{def:cloning_attack}
    Let $N,t,s\in\N$ and let $\scr{F}$ be a collection of quantum channels. A \emph{$t$-to-$s$ $N$-message cloning attack over $\scr{F}$} against a QECM $\ttt{Q}$ is a tuple $\ttt{A}=(M_0,\{B_i\}_{i\in[s]},\{P^{i,k}_{m}\}_{i\in[s],k\in K,m\in M_0},\Phi)$, where
    \begin{itemize}
        \item $M_0\subseteq M$ is a set of $N$ messages;
        \item $B_i$ is a finite-dimensional Hilbert space for each $i$;
        \item For each $i\in[s]$ and $k\in[k]$, $\{P^{i,k}_m\}_{m\in M_0}\subseteq\mc{B}(B_i)$ is a POVM;
        \item $\Phi:\mc{B}(H^{\otimes t})\rightarrow\mc{B}(B_0\otimes\cdots\otimes B_{s-1})$ is  a quantum channel such that $\Phi\in\scr{F}$.
    \end{itemize}
    The \emph{cloning probability} of $\ttt{A}$ against $\ttt{Q}$ is
    \begin{align*}
        \mfk{c}^N_{t\rightarrow s}(\ttt{Q}|\ttt{A})=\int_K\frac{1}{N}\sum_{m\in M_0}\Tr\squ{(P^{1,k}_m\otimes\cdots\otimes P^{s,k}_m)\Phi((\sigma^k_m)^{\otimes t})}d\pi(k).
    \end{align*}
    The \emph{$t$-to-$s$ $N$-message cloning value over $\scr{F}$} of $\ttt{Q}$ is $\mfk{c}^N_{t\rightarrow s}(\ttt{Q}|\scr{F})=\sup_{\ttt{A}}\mfk{c}^N_{t\rightarrow s}(\ttt{Q}|\ttt{A})$, where the supremum is over all $t$-to-$s$ $N$-message cloning attacks from $\scr{F}$; $N$ is omitted when $N=|M|$ and $\scr{F}$ is omitted when it is the set of all channels. If $N=|M|$, we omit $M_0$ in $\ttt{A}$, and if $s=1$, we omit $i$ in $B_i$ and $P^{i,k}_m$.
\end{definition}

It is easy to see that if $\scr{F}$ is closed under pre- and post-composition with partial traces, then $\mfk{c}_{t\rightarrow s}^N(\ttt{Q}|\scr{F})\leq\mfk{c}_{t'\rightarrow s'}^N(\ttt{Q}|\scr{F})$ for $t'\geq t$ and $s'\leq s$.

We recover the various values studied in the context of uncloneable and untelegraphable encryption as special cases of this definition.

\begin{definition}\label{def:specifying-uncloneability}
    Let $\ttt{Q}$ be a QECM. Let $\scr{M}$ be the set of all measurement channels.
    \begin{itemize}
        \item A \emph{cloning attack} against $\ttt{Q}$ is a $1$-to-$2$ $|M|$-message cloning attack. The \emph{cloning value} of $\ttt{Q}$ is $\mfk{c}_{1\rightarrow 2}(\ttt{Q})$. We say $\ttt{Q}$ is \emph{$\varepsilon$-uncloneable secure} if $\mfk{c}_{1\rightarrow 2}(\ttt{Q})\leq\frac{1}{|M|}+\varepsilon$. An efficient QECM $\{\ttt{Q}_\lambda\}_\lambda$ is \emph{$\eta$-uncloneable secure} if $\ttt{Q}_\lambda$ is $\eta(\lambda)$-uncloneable secure for each $\lambda$. We say $\{\ttt{Q}_\lambda\}_\lambda$ is \emph{weakly uncloneable secure} if $\lim_{\lambda\rightarrow\infty}\eta(\lambda)=0$ and \emph{strongly uncloneable secure} if $\eta$ is a negligible function.
        \item A \emph{cloning-distinguishing attack} against $\ttt{Q}$ is a $1$-to-$2$ $2$-message cloning attack. The \emph{cloning-distinguishing value} of $\ttt{Q}$ is $\mfk{c}^2_{1\rightarrow 2}(\ttt{Q})$. We say $\ttt{Q}$ is \emph{$\varepsilon$-uncloneable-indistinguishable secure} if $\mfk{c}_{1\rightarrow 2}^2(\ttt{Q})\leq\frac{1}{2}+\varepsilon$. An efficient QECM $\{\ttt{Q}_\lambda\}_\lambda$ is \emph{$\eta$-uncloneable-indistinguishable secure} if $\ttt{Q}_\lambda$ is $\eta(\lambda)$-uncloneable-indistinguishable secure for each $\lambda$. We say $\{\ttt{Q}_\lambda\}_\lambda$ is \emph{weakly uncloneable-indistinguishable secure} if $\lim_{\lambda\rightarrow\infty}\eta(\lambda)=0$ and \emph{strongly uncloneable-indistinguishable secure} if $\eta$ is a negligible function.
        \item A \emph{telegraphing attack} is a $1$-to-$1$ $|M|$-message cloning attack over $\scr{M}$. The \emph{telegraphing value} of $\ttt{Q}$ is $\mfk{c}_{1\rightarrow 1}(\ttt{Q}|\scr{M})$. We say $\ttt{Q}$ is \emph{$\varepsilon$-untelegraphable secure} if $\mfk{c}_{1\rightarrow 1}(\ttt{Q}|\scr{M})\leq\frac{1}{|M|}+\varepsilon$. An efficient QECM $\{\ttt{Q}_\lambda\}_\lambda$ is \emph{$\eta$-untelegraphable secure} if $\ttt{Q}_\lambda$ is $\eta(\lambda)$-untelegraphable secure for each $\lambda$. We say $\{\ttt{Q}_\lambda\}_\lambda$ is \emph{weakly untelegraphable secure} if $\lim_{\lambda\rightarrow\infty}\eta(\lambda)=0$ and \emph{strongly untelegraphable secure} if $\eta$ is a negligible function.
        \item A \emph{telegraphing-distinguishing attack} is a $1$-to-$1$ $2$-message cloning attack over $\scr{M}$. The \emph{telegraphing-distinguishing value} of $\ttt{Q}$ is $\mfk{c}^2_{1\rightarrow 1}(\ttt{Q}|\scr{M})$. We say $\ttt{Q}$ is \emph{$\varepsilon$-untelegraphable-indistinguishable secure} if $\mfk{c}_{1\rightarrow 1}^2(\ttt{Q}|\scr{M})\leq\frac{1}{2}+\varepsilon$. An efficient QECM $\{\ttt{Q}_\lambda\}_\lambda$ is \emph{$\eta$-untelegraphable-indistinguishable secure} if $\ttt{Q}_\lambda$ is $\eta(\lambda)$-untelegraphable-indistinguishable secure for each $\lambda$. We say $\{\ttt{Q}_\lambda\}_\lambda$ is \emph{weakly untelegraphable-indistinguishable secure} if $\lim_{\lambda\rightarrow\infty}\eta(\lambda)=0$ and \emph{strongly untelegraphable-indistinguishable secure} if $\eta$ is a negligible function.
    \end{itemize}
\end{definition}

Observe that $\mfk{c}^2_{t \rightarrow s}(\ttt{Q} | \scr{M}) = \mfk{c}^2_{t \rightarrow s'}(\ttt{Q} | \scr{M})$ for any $s$ and $s'$, since the outputs of measurement channels in $\scr{M}$ can be prepared into an arbitrary number of copies.

\begin{definition}
    A \emph{efficient attack} against an efficient QECM $\{\ttt{Q}_\lambda\}_\lambda$ is a collection of attacks $\{\ttt{A}_\lambda=(M_0^\lambda,\{B_i^\lambda\},\{P^{\lambda,i,k}_m\},\Phi^\lambda)\}_\lambda$ such that $\ttt{A}_\lambda$ is an attack against $\ttt{Q}_\lambda$, and the attacks can be implemented in polynomial time in $\lambda$.

    An efficient QECM $\{\ttt{Q}_\lambda\}_\lambda$ is \emph{uncloneable-indistinguishable secure against efficient adversaries} if for every efficient attack $\{\ttt{A}_\lambda\}_\lambda$ against $\{\ttt{Q}_\lambda\}_\lambda$ where $\ttt{A}_\lambda$ is a cloning-distinguishing attack, $\mfk{c}_{1\rightarrow 2}^2(\ttt{Q}_\lambda|\ttt{A}_\lambda)=\frac{1}{2}+\negl(\lambda)$. An efficient QECM $\{\ttt{Q}_\lambda\}_\lambda$ is \emph{untelegraphable-indistinguishable secure against efficient adversaries} if for every efficient attack $\{\ttt{A}_\lambda\}_\lambda$ against $\{\ttt{Q}_\lambda\}$ where $\ttt{A}_\lambda$ is a telegraphing-distinguishing attack, $\mfk{c}_{1\rightarrow 1}^2(\ttt{Q}_\lambda|\ttt{A}_\lambda)=\frac{1}{2}+\negl(\lambda)$.

    An efficient QECM $\{\ttt{Q}_\lambda\}_\lambda$ is \emph{everlasting uncloneable-indistinguishable secure against efficient adversaries} if for every collection of attacks $\{\ttt{A}_\lambda\}_\lambda$ where $\ttt{A}_\lambda$ is a cloning-distinguishing attack against $\ttt{Q}_\lambda$ and $\Phi^\lambda$ can be efficiently implemented (but not necessarily the measurements $P^{\lambda,i,k}_m$), $\mfk{c}_{1\rightarrow 2}^2(\ttt{Q}_\lambda|\ttt{A}_\lambda)=\frac{1}{2}+\negl(\lambda)$. An efficient QECM $\{\ttt{Q}_\lambda\}_\lambda$ is \emph{everlasting untelegraphable-indistinguishable secure against efficient adversaries} if for every collection of attacks $\{\ttt{A}_\lambda\}_\lambda$ where $\ttt{A}_\lambda$ is a telegraphing-distinguishing attack against $\ttt{Q}_\lambda$ and $\Phi^\lambda$ can be efficiently implemented, $\mfk{c}_{1\rightarrow 1}^2(\ttt{Q}_\lambda|\ttt{A}_\lambda)=\frac{1}{2}+\negl(\lambda)$.
\end{definition}

In~\cite{CKNY24arxiv}, a further security notion was also considered, which is not captured by the above framework.

\begin{definition}
    A \emph{telegraphing-distinguishing attack with collusion} $\ttt{A}$ against a QECM $\ttt{Q}=(M,K,\pi,H,\{\sigma^k_m\}_{k,m})$ consists of $Q\in\N$ rounds, a finite set $X$ of telegraphing messages, a probability distribution $p(\cdot|x,k)$ on $\{0,1\}$ for all $x\in X$ and $k\in K$, Hilbert spaces $H_i$ for $i\in[Q+1]$ where $H_0=\C$, generalised measurements $\{V^{i}_{m_0,m_1}:H_i\otimes H\rightarrow H_{i+1}\}_{(m_0,m_1)\in M^2,m_0\neq m_1}$ for $i\in[Q]$, and a POVM $\{P_x\}_{x\in X}\subseteq B(H_Q\otimes H)$. The \emph{value} of $\ttt{A}$ against $\ttt{Q}$ is
    \begin{align*}
        \mfk{t}(\ttt{Q}|\ttt{A})=\int_K\frac{1}{2}\hspace{-0.3cm}\sum_{\substack{ b\in\{0,1\} \\[0.25em] x\in X \\[0.25em] m_0^{(1)},m_1^{(1)},\ldots,\\m_0^{(Q)},m_1^{(Q)}}}\hspace{-0.3cm}p(b|x,k)\Tr\squ*{P_x\cdot V^{Q-1}_{m_0^{(Q)},m_1^{(Q)}}\cdots (V^0_{m^{(1)}_0, m^{(1)}_{1}}(V^0_{m^{(1)}_0, m^{(1)}_{1}})^\dag \otimes\sigma^k_{m^{(1)}_b})\cdots V^{Q-1}_{m^{(Q)}_0,m^{(Q)}_1}\otimes\sigma^{(Q)}_{m^{(Q)}_b}} d\pi(k)
    \end{align*}
    An efficient QECM $\{\ttt{Q}_\lambda\}_{\lambda}$ is \emph{collusion-resistant untelegraphable-indistinguishable secure} if for any efficient telegraphing-distinguishing attack with collusion $\{\ttt{A}_\lambda\}_\lambda$, $\mfk{t}(\ttt{Q}_\lambda|\ttt{A}_\lambda)\leq\frac{1}{2}+\negl(\lambda)$.

    An efficient QECM $\{\ttt{Q}_\lambda\}_{\lambda}$ is \emph{everlasting collusion-resistant untelegraphable-indistinguishable secure} if for any collection of telegraphing-distinguishing attacks with collusion $\{\ttt{A}_\lambda\}_\lambda$ such that the $V^{\lambda,i}_{m_0,m_1}$ and $P^\lambda_x$ are efficiently implemented, $\mfk{t}(\ttt{Q}_\lambda|\ttt{A}_\lambda)\leq\frac{1}{2}+\negl(\lambda)$.
\end{definition}

It follows directly from the above definitions that collusion-resistant untelegraphable-indistinguishable security is a stronger security notion than multi-copy untelegraphable-indistinguishable security, in the sense that $\mfk{t}(\ttt{Q})\geq\mfk{c}_{Q\rightarrow 1}^N(\ttt{Q}|\scr{M})$.

Telegraphing attacks admit an alternate characterisation owing to the fact that the message the telegrapher sends is classical.

\begin{lemma}\label{lem:tg-char}
    Let $\ttt{Q}=(M,K,\pi,H,\{\sigma^k_m\})$ be a QECM. If $\ttt{A}=(M_0,B,\{P^k_m\},\Phi)$ is a $1$-to-$1$ $N$-message cloning attack where $\Phi$ is an entanglement-breaking channel, then there exists a set $X$, probability distributions $p(\cdot|x,k)$ over $M_0$ for all $x\in X$ and $k\in K$, and a POVM $\{P_x\}_{x\in X}\subseteq\mc{B}(H)$ such that
    $$\mfk{c}_{1\rightarrow 1}^N(\ttt{Q}|\ttt{A})=\int_K\frac{1}{N}\sum_{m\in M_0}\sum_{x\in X}p(m|x,k)\Tr\squ{P_x\sigma^k_m}d\pi(k).$$
    Conversely, for any set $X$, subset $M_0\subseteq M$ of size $N$, probability distributions $p(\cdot|x,k)$ over $M_0$ for all $x\in X$ and $k\in K$, and POVM $\{P_x\}_{x\in X}\subseteq\mc{B}(H)$, there exists a $1$-to-$1$ $N$-message cloning attack $\ttt{A}=(M_0,B,\{P^k_m\},\Phi)$ such that $\Phi$ is a measurement channel and the same equality holds.
\end{lemma}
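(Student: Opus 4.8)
The plan is to use the structure theorem for entanglement-breaking channels: a channel $\Phi:\mc{B}(H)\to\mc{B}(B)$ is entanglement-breaking if and only if it admits a representation $\Phi(\rho)=\sum_{x\in X}\Tr(Q_x\rho)\,\tau_x$ for some POVM $\{Q_x\}_{x\in X}\subseteq\mc{B}(H)$ and some states $\tau_x\in\mc{B}(B)$; equivalently it factors as a measurement followed by a state preparation. For the forward direction, I would substitute this representation into the definition of $\mfk{c}^N_{1\to1}(\ttt{Q}|\ttt{A})=\int_K\frac1N\sum_{m\in M_0}\Tr(P^k_m\Phi(\sigma^k_m))\,d\pi(k)$, obtaining $\int_K\frac1N\sum_{m}\sum_{x}\Tr(Q_x\sigma^k_m)\,\Tr(P^k_m\tau_x)\,d\pi(k)$. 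The quantity $\Tr(P^k_m\tau_x)$ is a number in $[0,1]$, and since $\{P^k_m\}_{m\in M_0}$ is a POVM we have $\sum_{m\in M_0}\Tr(P^k_m\tau_x)=1$, so setting $p(m|x,k):=\Tr(P^k_m\tau_x)$ gives a probability distribution over $M_0$ for each $x,k$. Taking the set $X$ and POVM $\{P_x\}_{x\in X}:=\{Q_x\}_{x\in X}$ on $H$ yields exactly the claimed expression.

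For the converse, given $X$, $M_0$, the distributions $p(\cdot|x,k)$, and the POVM $\{P_x\}_{x\in X}\subseteq\mc{B}(H)$, I would build the attack explicitly. Take the receiver register $B:=\C^X$ (or $\C^{|X|}$), define the measurement channel $\Phi(\rho):=\sum_{x\in X}\Tr(P_x\rho)\,\ketbra{x}{x}$, which lies in $\scr{M}$ since it is a genuine destructive measurement. For the receiver's POVM, I would like $\{P^k_m\}_{m\in M_0}$ on $B$ with $\Tr(P^k_m\ketbra{x}{x})=p(m|x,k)$; the natural choice is the diagonal operator $P^k_m:=\sum_{x\in X}p(m|x,k)\,\ketbra{x}{x}$, which is positive semi-definite and satisfies $\sum_{m\in M_0}P^k_m=\sum_x\big(\sum_m p(m|x,k)\big)\ketbra{x}{x}=\sum_x\ketbra{x}{x}=I_B$, hence is a valid POVM. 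Plugging into the definition of the cloning probability reproduces $\int_K\frac1N\sum_{m}\sum_{x}p(m|x,k)\Tr(P_x\sigma^k_m)\,d\pi(k)$, as required. Since the supremum defining the value is unaffected, equality of values follows (though for this lemma only the per-attack equalities are asserted).

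The only genuine input is the structure theorem for entanglement-breaking channels, which is standard (Horodecki–Shor–Ruskai); everything else is bookkeeping with traces and the POVM normalisation conditions. I do not expect a real obstacle here — the one point requiring a little care is making sure the receiver's POVM in the converse is correctly normalised and that the measurement channel $\Phi$ is literally of the form admitted by $\scr{M}$ (a destructive measurement writing its outcome into a classical register), which the diagonal construction handles cleanly. A minor remark worth including: the forward direction applies verbatim to the $N=2$ case used for untelegraphable-indistinguishability, and combined with the fact that measurement channels are entanglement-breaking, it shows that restricting to entanglement-breaking channels gives the same value as restricting to measurement channels, which is the reuse this lemma is set up to enable later.
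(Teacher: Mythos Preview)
Your proposal is correct and follows essentially the same approach as the paper: both directions use the measure-and-prepare decomposition of entanglement-breaking channels, define $p(m|x,k)=\Tr(P^k_m\tau_x)$ for the forward direction, and build the converse attack with $B=\C^X$, the diagonal POVMs $P^k_m=\sum_x p(m|x,k)\ketbra{x}{x}$, and the measurement channel $\Phi(\rho)=\sum_x\Tr(P_x\rho)\ketbra{x}{x}$. Your closing remark that this yields $\mfk{c}_{1\to1}^N(\ttt{Q}|\scr{E})=\mfk{c}_{1\to1}^N(\ttt{Q}|\scr{M})$ is exactly the corollary the paper draws immediately after the lemma.
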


It follows that, if $\scr{E}$ is the set of entanglement-breaking channels, then $\mfk{c}_{1\rightarrow 1}^N(\ttt{Q}|\scr{E})=\mfk{c}_{1\rightarrow 1}^N(\ttt{Q}|\scr{M})$.

\begin{proof}
    Let $\ttt{A}=(M_0,B,\{P^k_m\},\Phi)$ be a $1$-to-$1$ $N$-message cloning attack against $\ttt{Q}$ where $\Phi$ is entanglement-breaking. Then, there exists a POVM $\{P_x\}_{x\in X}$ and states $\sigma_x$ such that $\Phi(\rho)=\sum_{x\in X}\Tr\squ{P_x\rho}\sigma_x$. Let $p(m|x,k)=\Tr\squ{P^k_m\sigma_x}$. Therefore, we have that
    \begin{align*}
        \mfk{c}_{1\rightarrow 1}^N(\ttt{Q}|\ttt{A})&=\int_K\frac{1}{N}\sum_{m\in M_0}\Tr\squ{P^k_m\Phi(\sigma^k_m)}d\pi(k)\\
        &=\int_K\frac{1}{N}\sum_{m\in M_0}\sum_{x\in X}\Tr\squ{P^k_m\sigma_x}\Tr\squ{P_x\sigma^k_m}d\pi(k)\\
        &=\int_K\frac{1}{N}\sum_{m\in M_0}\sum_{x\in X}p(m|x,k)\Tr\squ{P_x\sigma^k_m}d\pi(k).
    \end{align*}

    For the converse, let $B=\C^X$, $P^k_m=\sum_{x\in X}p(m|x,k)\ketbra{x}$, $\Phi(\rho)=\sum_{x\in X}\Tr\squ{P_x\rho}\ketbra{x}$. By construction, this is a $1$-to-$1$ $N$-message cloning attack over $\scr{M}$ against $\ttt{Q}$, and by the same argument, the same equality holds.
\end{proof}

Uncloneable-indistinguishable security provides a stronger guarantee for a QECM scheme than untelegraphable-indistinguishable security. In particular, if a QECM scheme achieves uncloneable-indistinguishable security, then the cloning probability $\mfk{c}^2_{1\rightarrow 2}(\ttt{Q})$ is negligible. Since $\mfk{c}^2_{1\rightarrow 1}(\ttt{Q} | \scr{M}) \leq \mfk{c}^2_{1\rightarrow 2}(\ttt{Q})$, then the scheme also satisfies untelegraphable-indistinguishable security.

However the two notions are not equivalent: there exist schemes that satisfy untelegraphable-indistinguishable security but not uncloneable-indistinguishable security. Consider a QECM scheme that is $2$-copy untelegraphable-indistinguishable secure for a collection of ciphertexts $\{\rho\}$. Then, the modified $1$-copy untelegraphable-indistinguishable secure scheme with ciphertexts of the form $\{ \rho \,\otimes\, \rho \}$ fails to achieve uncloneable-indistinguishable security, as an attack can trivially apply the identity channel to win with certainty (see \cite{CKNY24arxiv} for a separation between UTE and UE with unbounded polynomial number of adversaries, under the classical oracle model).

\section{The security of Haar-measure encryption} \label{sec:haar_random}

The Haar-random QECM scheme was introduced by \cite{MST21arxiv} as a potential candidate for achieving uncloneable-indistinguishable security without computational assumptions. While a proof of \emph{strong} uncloneable-indistinguishable security for this scheme remains an open question—\cite{BC25arXiv} established a \emph{weaker} variant with inverse-logarithmic success probability—in this work, we demonstrate that this scheme satisfies both untelegraphable-indistinguishable security and $t$-copy untelegraphable-indistinguishable security.

\begin{definition}
    Let $r,n\in\N$. The \emph{rank-$r$ Haar-measure encryption of $n$ messages} is the QECM $\ttt{Q}_{n,r}=([n],\mc{U}(rn),\mu_{Haar},\C^{rn},\{U\sigma_iU^\ast\}_{U\in\mc{U}(rn),i\in[n]})$, where $\sigma_i=\frac{1}{r}\sum_{j=ri}^{r(i+1)-1}\ketbra{j}$. We call the rank-$r$ Haar-measure encryption of $2$ messages the \emph{rank-$r$ Haar-measure encryption of a bit}
\end{definition}

We can also express $\sigma_i=\frac{1}{r}\ketbra{i}\otimes I_r$, following from the isomorphism $\C^{rn}\cong\C^n\otimes\C^r$.

\begin{remark} \label{rem:efficient_construction}
    The Haar-random QECM scheme can be made computationally efficient by replacing the Haar-random unitary with a unitary sampled from a $t$-design. As will become evident from the proof below, achieving $t$-copy telegraphing security requires the use of a unitary $2t$-design. Such designs admit efficient implementations, as demonstrated in \cite{MPSY24}. An explicit construction based on a unitary $2$-design is presented in \cite{BC25arXiv}.
\end{remark}

The telegraphing-distinguishing value of an attack $\ttt{A}=(\{m_0,m_1\},B,\{P^{k}_{m}\}_{k\in K,m\in M_0},\Phi)$ against the Haar-random QECM scheme is given by \Cref{def:cloning_attack} as
\begin{align*}
    \mfk{c}^2_{1 \rightarrow 1} (\ttt{Q}|\ttt{A}) = \int_{\mc{U}(rn)} \frac{1}{2} \sum_{b=0}^1 \Tr \squ{P^U_{m_b} \cdot \Phi(U \sigma_{m_b} U^*)} dU.
\end{align*}
where the integral is taken over normalized Haar measure of the unitary group. Let $\rho$ denote the positive operator corresponding to the \CPTP map $\Phi$ defined via the Choi–Jamiołkowski isomorphism as: $\rho \coloneqq (\id \otimes \Phi) \sum_{ij} \ketbra{ii}{jj}$. In terms of $\rho$, the telegraphing-distinguishing value can be rewritten as
\begin{equation*}
     \mfk{c}^2_{1 \rightarrow 1} (\ttt{Q}|\ttt{A})
     = \int \frac{1}{2} \sum_{b=0}^1 \Tr \squ{\rho ((U \sigma_{m_b} U^*)^\T \otimes P^U_{m_b})} dU
     = \int \frac{1}{2} \sum_{b=0}^1 \Tr \squ{\rho ((\bar{U} \sigma_{m_b} U^\T) \otimes P^U_{m_b})} dU.
\end{equation*}

\subsection{One copy untelegraphable-indistinguishable security}

We first establish the case of $1$-copy untelegraphable-indistinguishable security for Haar-measure encryption of a single classical bit. The full generalization to arbitrary message lengths and multiple-copy will be addressed subsequently.

\begin{theorem} \label{thm:haar_random_encryption_scheme_untelegraphable_indistinguishable}
    The rank-$d/2$ Haar-measure encryption of classical bits (\ie $2$ messages) achieves strong untelegraphable-indistinguishable security, with telegraphing-distinguishing value upper bounded as
    \begin{equation*}
        \mfk{c}^2_{1 \rightarrow 1} (\ttt{Q}|\scr{M})\leq\tfrac{1}{2} + \tfrac{1}{2\sqrt{d + 1}}.
    \end{equation*}
\end{theorem}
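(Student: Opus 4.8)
The plan is to bound the telegraphing-distinguishing value using the Choi-operator reformulation given just before the statement, together with Haar averaging over $\mc{U}(d)$ (here $n=2$, $r=d/2$). By \Cref{lem:tg-char}, it suffices to consider attacks where $\Phi$ is an entanglement-breaking channel, so $\Phi(\rho)=\sum_x \Tr[P_x\rho]\,\sigma_x$ for a POVM $\{P_x\}$ on $H=\C^d$ and states $\sigma_x$; alternatively, one stays with the Choi operator $\rho$ and bounds its matrix elements. Writing the value as
\begin{equation*}
    \mfk{c}^2_{1\rightarrow 1}(\ttt{Q}|\ttt{A}) = \int \frac{1}{2}\sum_{b=0}^1 \Tr\!\bigl[\rho\,\bigl((\bar U\sigma_{m_b}U^\T)\otimes P^U_{m_b}\bigr)\bigr]\,dU,
\end{equation*}
I would first note $\sigma_0-\sigma_1$ is a fixed Hermitian traceless matrix with eigenvalues $\pm 1/r$ (each with multiplicity $r=d/2$), and rewrite the difference from $\tfrac12$ in terms of an integral involving $\bar U(\sigma_0-\sigma_1)U^\T$. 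The key point is that $\sum_b \Tr[\rho(\cdots)]$ splits into a "guessing-independent" part that contributes exactly $\tfrac12$ and a "signal" part controlled by the receiver's ability to tell $\bar U\sigma_0 U^\T$ from $\bar U\sigma_1 U^\T$ through a measurement on the classical side-information register.

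Next I would perform the Haar integral. Fixing the Choi operator $\rho$ and the POVM $\{P^U_{m_b}\}$, the object to estimate is $\int \Tr[\rho((\bar U\sigma_i U^\T)\otimes P^U_i)]\,dU$. I would bound the advantage by Cauchy–Schwarz (or a direct operator-norm bound), separating the $U$-dependence: the advantage is at most something like $\int \bigl\|\bar U(\sigma_0-\sigma_1)U^\T\bigr\|_{?}$ weighted against the trace-norm of the relevant block of $\rho$, and then one applies the standard first/second-moment formulas for the Haar measure — $\int \bar U A U^\T \otimes \bar U B U^\T\,dU$ expressed via the representation $V_d$ of $\mfk{S}_2$ (the $\mathrm{Id}$ and $\mathrm{SWAP}$ terms with Weingarten coefficients $\tfrac{1}{d^2-1}$ and so on). Squaring the advantage and averaging turns it into a trace of $\rho^{\otimes 2}$ (or $\rho$ against a fixed operator) contracted with a second-moment tensor; the POVM constraint $\sum_x P_x = I$ and $0\preceq \rho \preceq$ (suitable bound from $\Phi$ being CPTP, i.e. $\Tr_B \rho = I$) then caps everything. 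Carefully tracking the constants, the $\pm 1/r = \pm 2/d$ eigenvalues together with the $r$-fold multiplicities and the $\tfrac{1}{d^2-1}$-type Weingarten weight should collapse to give advantage squared $\le \tfrac{1}{4(d+1)}$, i.e. advantage $\le \tfrac{1}{2\sqrt{d+1}}$.

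The main obstacle I anticipate is handling the $U$-dependence of the receiver's POVM $\{P^U_{m_b}\}$: unlike in a fixed-measurement setting, the receiver learns $U$ (the key) before measuring, so one cannot just pull the POVM out of the integral. The trick will be that the receiver's register $B$ holds only the \emph{classical} string $x$ produced by $\Phi$ (which is $U$-independent), and the $U$-dependent POVM acts on that classical register; so the quantity factors as $\sum_x (\text{something depending on } x) \cdot \int \Tr[P_x\,\bar U\sigma_i U^\T]\,p(i|x,U)\,dU$, and one bounds the inner integral uniformly in $x$ using that $\sum_i$-averaged discrimination of two Haar-rotated fixed states is small. Concretely, I expect to reduce to bounding $\bigl\|\int \bar U\sigma_0 U^\T \otimes (\text{decision})\,dU - \int\bar U\sigma_1 U^\T\otimes(\text{decision})\,dU\bigr\|$, and the second-moment Haar computation is where the $\sqrt{d+1}$ emerges. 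A clean way to organize this is: (i) pass to entanglement-breaking form; (ii) write advantage $= \tfrac12\int \sum_x \Tr[P_x \bar U(\sigma_0-\sigma_1)U^\T]\,q_x^U\,dU$ for numbers $q_x^U\in[0,1]$; (iii) bound by $\tfrac12\int \|{\textstyle\sum_x} P_x\otimes\text{(sign)}\|\cdots$ and apply Cauchy–Schwarz in $(x,U)$ jointly; (iv) evaluate the resulting second Haar moment of $\bar U(\sigma_0-\sigma_1)U^\T$ and simplify.
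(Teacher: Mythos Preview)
Your proposal has the right architecture and uses the same toolkit as the paper: exploit that $\Phi$ is entanglement-breaking, isolate the advantage as an integral involving $\bar U(\sigma_0-\sigma_1)U^\T$, then control it via the second Haar moment (Weingarten on $\mfk{S}_2$) combined with a Jensen/Cauchy--Schwarz step. Your identification of the $U$-dependence of $\{P^U_{m_b}\}$ as the main obstacle is exactly right.

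Where you diverge from the paper is in how that obstacle is dispatched, and your step~(iii) is where the gap lies. The paper does \emph{not} keep the sum over telegraphing outcomes $x$ and apply Cauchy--Schwarz ``in $(x,U)$ jointly''. Instead it works with the normalized Choi state $\tilde\rho=\rho/d$, observes that entanglement-breaking $\Phi$ forces $\tilde\rho$ to be \emph{separable}, and then uses convexity to reduce to a single pure product state $\tilde\rho_1\otimes\tilde\rho_2$. This is the crucial simplification: the receiver side $\tilde\rho_2$ is absorbed into scalars $q(U,b)=\Tr[\tilde\rho_2 P^U_{m_b}]\in[0,1]$ (a probability in $b$), and the cipher side becomes $M_b=\Tr[\tilde\rho_1\,\bar U\Pi_{m_b}U^\T]$ for a \emph{single} fixed state $\tilde\rho_1$, with no sum over $x$ left. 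Then $\max\{M_0,M_1\}=\tfrac12(M_0+M_1)+\tfrac12|M_0-M_1|$, the first term is exactly $\tfrac12$ since $\Pi_0+\Pi_1=I$, and Jensen plus the exact second-moment formula give $\int(M_0-M_1)^2\,dU\le\tfrac{1}{d+1}$.

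Your route via \cref{lem:tg-char} can be made to yield the same bound, but not by a joint Cauchy--Schwarz as you sketch: a naive application produces $\sum_x\sqrt{\Tr[P_x]}$, which is unbounded in $|X|$. What works is to write $P_x=\Tr[P_x]\,\omega_x$ with $\omega_x$ a density matrix, apply Jensen \emph{per $x$} to $\int|\Tr[\omega_x\,\bar U(\sigma_0-\sigma_1)U^\T]|\,dU$, use $\Tr[\omega_x^2]\le 1$ in the Weingarten formula, and then sum the weights $\Tr[P_x]$ to $d$. That recovers $\tfrac{1}{2\sqrt{d+1}}$ exactly --- but this is essentially the paper's separability/convexity argument in disguise, since each $\omega_x$ plays the role of $\tilde\rho_1$.
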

\begin{proof}
    Let $\Pi_i$ be the projection $\Pi_i \coloneqq \ketbra{i}{i} \otimes I_{d/2}$, such that the ciphertexts become $\sigma_i = \frac{2}{d} \cdot \Pi_i$ 
    The telegraphing-distinguishing value is given by:
    \begin{align*}
        \mfk{c}^2_{1 \rightarrow 1} (\ttt{Q}|\scr{M})
        &= \sup_{\ttt{A}} \int \frac{1}{2} \sum_{b=0}^1 \Tr \squ{\rho ((\bar{U} \sigma_{m_b} U^\T) \otimes P^U_{m_b})} dU \\
        &= \sup_{\ttt{A}} \int \sum_{b=0}^1 \Tr \squ{\tfrac{1}{d}\rho \cdot ((\bar{U} \Pi_{m_b} U^\T) \otimes P^U_{m_b})} dU,
    \end{align*}
    where the supremum is taken over all telegraphing-distinguishing attacks. Since $\frac{1}{d} \, \rho$ is now a normalised density operator, the optimisation can be relaxed by substituting $\tilde{\rho} = \frac{1}{d} \rho$ and optimising over all quantum states:
    \begin{equation*}
        \mfk{c}^2_{1 \rightarrow 1} (\ttt{Q}|\scr{M}) \leq \sup_{\tilde{\ttt{A}}} \int \sum_{b=0}^1 \Tr \squ{\tilde{\rho} \cdot ((\bar{U} \Pi_{m_b} U^\T) \otimes P^U_{m_b})}
    \end{equation*}
    Since the CPTP map $\Phi$ corresponding to the telegraphing attack is entanglement-breaking, its Choi matrix $\rho$ as well as the associated state $\tilde{\rho}$ are necessarily separable. As the optimization is over the convex set of separable states $\tilde{\rho}$, the supremum is attained at the extremal points of this set, namely the pure product states $\tilde{\rho} = \tilde{\rho}_1 \otimes \tilde{\rho}_2$. Consequently, we may write
    \begin{align*}
        \mfk{c}^2_{1 \rightarrow 1} (\ttt{Q}|\scr{M}) 
        &\leq \sup_{\tilde{\ttt{A}}} \int \sum_{b=0}^1 \Tr \squ{\tilde{\rho}_1 \otimes \tilde{\rho}_2 \cdot ((\bar{U} \Pi_{m_b} U^\T) \otimes P^U_{m_b})} \\
        &= \sup_{\tilde{\ttt{A}}} \int \sum_{b=0}^1  \underbrace{\Tr\squ{\tilde{\rho}_2 \cdot P^U_{m_b}}}_{\coloneqq q(U,m_b)} \Tr \squ{\tilde{\rho}_1 \cdot (\bar{U} \Pi_{m_b} U^\T)} dU,
    \end{align*}
    Since the $q(U,m_b)$ are non-negative reals and sum to $1$ over $b$, then by convexity
    \begin{equation*}
        \sum_{b=0}^1 q(U,m_b) \cdot \Tr \squ{\tilde{\rho}_1 \cdot (\bar{U} \Pi_{m_b} U^\T)}
        \leq \max_b \Big\{ \underbrace{\Tr \squ{\tilde{\rho}_1 \cdot (\bar{U} \Pi_{m_b} U^\T)}}_{\coloneqq M_b} \Big\}.
    \end{equation*}
    Applying the identity $\max \{ a, b \} = \frac{a+b}{2} + \frac{|a-b|}{2}$ yields:
    \begin{equation*}
        \mfk{c}^2_{1 \rightarrow 1} (\ttt{Q}|\scr{M}) \leq
        \sup_{\tilde{\ttt{A}}} \tfrac{1}{2} \int M_0 + M_1  dU + \tfrac{1}{2} \int | M_0 - M_1 | dU.
    \end{equation*}
    Now observe that $\Pi_0 + \Pi_1 = I_d$; thus we have
    \begin{equation*}
        \int M_0 + M_1 dU = \int \Tr \squ{\tilde{\rho}_1 \cdot (\bar{U} (\Pi_0 + \Pi_1) U^\T)} dU
        = 1,
    \end{equation*}
    so that:
    \begin{equation*}
        \mfk{c}^2_{1 \rightarrow 1} (\ttt{Q}|\scr{M})
        \leq \sup_{\tilde{\ttt{A}}} \tfrac{1}{2} + \tfrac{1}{2} \int | M_0 - M_1 | dU
        \leq \tfrac{1}{2} + \tfrac{1}{2} \sqrt{ \int (M_0 - M_1)^2 dU },
    \end{equation*}
    where the final inequality follows from Jensen's inequality. To evaluate the second moment, we write:
    \begin{equation*}
        {\left( M_0 - M_1 \right)}^2 = \Tr \Big[ \big( \tilde{\rho}_1 \otimes \tilde{\rho}_1 \big) \cdot \big( \bar{U} \otimes \bar{U} \big) {\big( \Pi_0 - \Pi_1 \big)}^{\otimes 2} \big( U^\T \otimes U^\T \big) \Big].
    \end{equation*}
    Applying the Weingarten calculus for the second moment (see e.g. \cite[Cor. 13]{Mel24}), we obtain:
    \begin{equation*}
        \int \big( \big( \bar{U} \otimes \bar{U} \big) {\big( \Pi_0 - \Pi_1 \big)}^{\otimes 2} \big( U^\T \otimes U^\T \big) dU = c_{\operatorname{I}} \cdot \operatorname{I} \,+\, c_{\operatorname{F}} \cdot \operatorname{F},
    \end{equation*}
    where $\operatorname{I}$ and $\operatorname{F}$ denote respectively the identity and flip (swap) operator on $\CC^d \otimes \CC^d$, defined by $\operatorname{I}(\ket{a} \otimes \ket{b}) = \ket{a} \otimes \ket{b}$ and $\operatorname{F}(\ket{a} \otimes \ket{b}) = \ket{b} \otimes \ket{a}$. The corresponding Weingarten coefficients $c_{\operatorname{I}}$ and $c_{\operatorname{F}}$ are given by:
    \begin{align*}
        c_{\operatorname{I}} &= \frac{\Tr[(\Pi_0 - \Pi_1)^{\otimes 2} \cdot \operatorname{I}] - \frac{1}{d}\Tr[(\Pi_0 - \Pi_1)^{\otimes 2} \cdot \operatorname{F}]}{d^2-1} = \frac{-1}{d^2 - 1}, \\
        c_{\operatorname{F}} &= \frac{\Tr[(\Pi_0 - \Pi_1)^{\otimes 2} \cdot \operatorname{F}] - \frac{1}{d}\Tr[(\Pi_0 - \Pi_1)^{\otimes 2} \cdot \operatorname{I}]}{d^2-1} = \frac{d}{d^2 - 1}.
    \end{align*}
    The operator $\Pi_0 - \Pi_1$ is a traceless Hermitian unitary operator on $\CC^d$. Using the swap trick $\Tr[(A \otimes B) \cdot \operatorname{F}] = \Tr[A \cdot B]$, we conclude:
    \begin{equation*}
        \int {\left( M_0 - M_1 \right)}^2 dU = \frac{-1}{d^2 - 1} \Tr {\left[ \tilde{\rho}_1 \right]}^2 + \frac{d}{d^2 - 1}\Tr \left[ \tilde{\rho}_1^2 \right] \leq \frac{1}{d+1},
    \end{equation*}
    where we have used the facts that $\Tr[\tilde{\rho}_1]^2 = 1$ and $\Tr[\tilde{\rho}_1^2] \leq 1$ for all quantum states $\tilde{\rho}_1$. Hence, we obtain the claimed upper bound: $\tfrac{1}{2} + \tfrac{1}{2\sqrt{d + 1}}$.
\end{proof}

\subsection{Many copy untelegraphable-indistinguishable security}

The above proof relies on the exact expression of the second unitary moment operator, derived via Weingarten calculus. In contrast, the proof of $t$-copy untelegraphable-indistinguishable security requires the analysis of higher-order moments. While exact evaluations are feasible for low-order moments, they become intractable as the order increases. To address this, we approximate the higher-order moments using the following lemma.

\begin{lemma}[{\cite[Lem. 1]{SHH24arxiv}}] \label{lem:weingarten_calculus_bounds}
    Let $\Phi_k$ and $\Psi_k$ be two hermitian-preserving maps from $\mc{B}(\CC^{d^k})$ to $\mc{B}(\CC^{d^k})$ defined by
    \begin{align*}
        \Phi_k (X) &\coloneqq \int_{\mathcal{U}(d)} U^{\otimes k} \, X \, {U^*}^{\otimes k} \operatorname{d}U \stackrel{\substack{\text{\tiny weingarten} \\ \text{\tiny calculus} \\[1ex]}}{=} \sum_{\pi,\sigma \in \mathfrak{S}_k} \operatorname{Wg}(\pi^{\shortminus 1} \sigma, d) \Tr \big[ V_d(\sigma)^{-1} X \big] \cdot V_d(\pi) \\[1em]
        \Psi_k (X) &\coloneqq \tfrac{1}{d^k} \sum_{\pi \in \mathfrak{S}_k} \Tr \big[ V_d(\pi)^{\shortminus 1} X \big] \cdot V_d(\pi),
    \end{align*}
    where $\mathfrak{S}_k$ denotes the symmetric group on $k$ elements, $\operatorname{Wg}(\cdot,\cdot)$ the Weingarten function, and $V_d(\pi)$ is defined as the tensor permutation of $(\CC^d)^{\otimes k}$ associated with $\pi \in \mathfrak{S}_k$. If $d > \sqrt{6}k^{7/4}$, then we have the inequalities
    \begin{equation*}
        \Big( 1 - \tfrac{k^2}{d} \Big) \Psi_k \preceq \Phi_k \preceq \Big( 1 + \tfrac{k^2}{d} \Big) \Psi_k,
    \end{equation*}
    where $\preceq$ is the order on hermitian-preserving maps given by $\Phi\preceq \Psi$ if $(\id\otimes\Phi)(P)\leq(\id\otimes\Psi)(P)$ for all positive semidefinite $P$.
\end{lemma}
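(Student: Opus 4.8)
The plan is to translate the superoperator inequality into an inequality between Choi matrices, reduce it via Schur--Weyl duality to a bound on the spectrum of a single explicit matrix, and then estimate that spectrum. By definition of $\preceq$, the two asserted inequalities are equivalent to $\big(1+\tfrac{k^2}{d}\big)J(\Psi_k)-J(\Phi_k)\succeq 0$ and $J(\Phi_k)-\big(1-\tfrac{k^2}{d}\big)J(\Psi_k)\succeq 0$, where $J(\cdot)$ denotes the Choi operator on $\CC^{d^k}\otimes\CC^{d^k}$. From the formulas in the statement one computes $J(\Phi_k)=\sum_{\pi,\sigma\in\mathfrak{S}_k}\operatorname{Wg}(\pi^{-1}\sigma,d)\,V_d(\sigma)\otimes V_d(\pi)$ and $J(\Psi_k)=\tfrac1{d^k}\sum_{\pi\in\mathfrak{S}_k}V_d(\pi)\otimes V_d(\pi)$ (take the Choi matrix of $X\mapsto\Tr[V_d(\sigma)^{-1}X]V_d(\pi)$, using that the $V_d(\pi)$ are real permutation matrices). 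First I would record two structural facts: (i) $J(\Psi_k)=\tfrac{k!}{d^k}P$, where $P:=\tfrac1{k!}\sum_\pi V_d(\pi)\otimes V_d(\pi)$ is the orthogonal projection onto the subspace of $\CC^{d^k}\otimes\CC^{d^k}$ fixed by the diagonal $\mathfrak{S}_k$-action, so $J(\Psi_k)\succeq 0$; and (ii) $J(\Phi_k)\succeq 0$ (it is a mixture of unitary conjugations), and $J(\Phi_k)$ is also supported on $\operatorname{ran}P$, since reindexing the Weingarten sum gives $\big(V_d(\tau)\otimes V_d(\tau)\big)J(\Phi_k)=J(\Phi_k)$ for every $\tau$, hence $J(\Phi_k)=PJ(\Phi_k)P$.

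Because both Choi matrices vanish off $\operatorname{ran}P$ and $J(\Psi_k)$ is $\tfrac{k!}{d^k}$ times the identity on $\operatorname{ran}P$, the lemma becomes the eigenvalue statement: every eigenvalue of $J(\Phi_k)\big|_{\operatorname{ran}P}$ lies in $\tfrac{k!}{d^k}\big[\,1-\tfrac{k^2}{d},\,1+\tfrac{k^2}{d}\,\big]$. To identify these eigenvalues I would use that $\Phi_k$ is exactly the Hilbert--Schmidt orthogonal projection onto the commutant algebra $\mathcal{A}=\operatorname{span}\{V_d(\pi):\pi\in\mathfrak{S}_k\}$ (group averaging projects onto the invariant subspace, which is $\mathcal{A}$ by Schur--Weyl duality). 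Writing $\Phi_k$ in the overcomplete frame $\{V_d(\pi)\}$, whose Gram matrix is $G_{\pi\sigma}=\Tr[V_d(\pi)^{-1}V_d(\sigma)]=d^{\,c(\pi^{-1}\sigma)}$ with $c(\cdot)$ the number of cycles, one gets that the nonzero eigenvalues of $J(\Phi_k)$ are $\tfrac{k!}{d^k}$ times those of $W:=d^kG^{-1}$. As $G$ is a class function of $\pi^{-1}\sigma$ it is block-diagonalized by the isotypic decomposition of $\CC[\mathfrak{S}_k]$, and by the classical content-product identity its eigenvalue on the block of a Young diagram $\lambda\vdash k$ equals $\prod_{(i,j)\in\lambda}(d+j-i)$ (all $\lambda$ contribute since $d>k$); hence the eigenvalues of $W$ are precisely the content products $\prod_{(i,j)\in\lambda}\big(1+\tfrac{j-i}{d}\big)^{-1}$.

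It then remains to show each content product lies in $[\,1-k^2/d,\,1+k^2/d\,]$, and this is the step I expect to be the actual obstacle. The extreme shapes are the single row $(k)$ and the single column $(1^k)$, giving $\prod_{j=1}^{k-1}(1+\tfrac jd)^{-1}$ and $\prod_{j=1}^{k-1}(1-\tfrac jd)^{-1}$, and every other $\lambda$ interpolates between these by a log-convexity/majorization argument on the content multiset (whose sum ranges over $[-\binom k2,\binom k2]$). The lower bound is immediate and needs no hypothesis on $d$: $\prod_{j=1}^{k-1}(1+\tfrac jd)^{-1}\ge\prod_{j=1}^{k-1}(1-\tfrac jd)\ge 1-\tfrac{\binom k2}{d}\ge 1-\tfrac{k^2}{d}$. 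For the upper bound one expands $-\sum_{j=1}^{k-1}\ln(1-\tfrac jd)=\tfrac{\binom k2}{d}+O\!\big(\tfrac{k^3}{d^2}\big)+\cdots$; the hypothesis $d>\sqrt6\,k^{7/4}$, equivalently $k^{7/2}/d^2<\tfrac16$ — the standard threshold for the Collins--Matsumoto asymptotics of the Weingarten function — is exactly what pushes the correction terms below the $k^2/d$ level, yielding $\prod_{j=1}^{k-1}(1-\tfrac jd)^{-1}\le 1+\tfrac{k^2}{d}$. Equivalently, one may avoid the content-product formula and instead invoke the Collins--Matsumoto bound $\big|d^{2k-c(\tau)}\operatorname{Wg}(\tau,d)-\text{M\"ob}(\tau)\big|$ directly to control $W-I$, first entrywise and then, crucially, spectrally.

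The difficulty is entirely in this last quantitative estimate: the naive row-sum ($\ell^1$) bound on $W-I$ is far too lossy — the entrywise Weingarten estimates involve M\"obius values whose total is superexponential in $k$ — so one is forced to exploit the class-function/spectral structure to extract the clean $1\pm k^2/d$ window in the regime $d>\sqrt6\,k^{7/4}$. Everything preceding it (the Choi-matrix translation, the supports and positivity, and the Schur--Weyl identification of the eigenvalues of $J(\Phi_k)$) is essentially bookkeeping.
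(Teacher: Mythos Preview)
The paper does not contain a proof of this lemma: it is quoted directly from \cite[Lem.~1]{SHH24arxiv} and used as a black box in the subsequent arguments. There is thus nothing in the paper to compare your attempt against.

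That said, your outline is the standard route to this inequality and is structurally sound. The reduction to Choi matrices, the observation that both $J(\Phi_k)$ and $J(\Psi_k)$ are supported on the range of the diagonal $\mathfrak{S}_k$-projection $P$ with $J(\Psi_k)=\tfrac{k!}{d^k}P$, and the identification via Schur--Weyl of the eigenvalues of $J(\Phi_k)$ on that subspace with $k!/\prod_{(i,j)\in\lambda}(d+j-i)$ are all correct; the fact that the extremal shapes for the content product are the single row and single column follows from the elementary observation that moving a box to a cell of larger content strictly increases $\prod(d+j-i)$ (all factors being positive for $d\geq k$), so your majorization remark can be made precise without difficulty.

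The only genuinely incomplete part is the final numerical verification that $\prod_{j=1}^{k-1}(1-j/d)^{-1}\leq 1+k^2/d$ under the hypothesis $d>\sqrt{6}\,k^{7/4}$: you correctly identify the mechanism (the higher-order terms in the log expansion are controlled by $k^{7/2}/d^2<1/6$) but do not actually carry out the bound with the stated constant. This is a matter of careful but routine estimation, and it is exactly what the cited reference \cite{SHH24arxiv} does; you have not left a conceptual gap.
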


The value of a $t$-copy telegraphing-distinguishing attack against the Haar-random QECM scheme $\ttt{A}=(\{m_0,m_1\},X,\{P_x\}_{x\in X},\{p(0|x,k),p(1|x,k)\}_{x\in X,k\in K})$ is given by \cref{def:cloning_attack,lem:tg-char} as:
\begin{align*}
    \mfk{c}^2_{t \rightarrow 1} (\ttt{Q}|\ttt{A}) = \int_{\mc{U}(rn)} \frac{1}{2} \sum_{b\in\{0,1\}} \sum_{x\in X} p(b|x,U) \Tr\squ{P_x \cdot (U \sigma_{m_b} U^*)^{\otimes t}} dU.
\end{align*}
where the integral is taken over normalized Haar measure of the unitary group. We can now prove the $t$-copy untelegraphable-indistinguishable security for Haar-measure encryption of any number of messages.

\begin{theorem} \label{thm:haar_random_encryption_scheme_t_copy_untelegraphable_indistinguishable}
    The rank-$r$ Haar-measure encryption of $n$ messages achieves strong untelegraphable-indistinguishable security, with $t$-copy telegraphing-distinguishing value upper bounded as
    \begin{equation*}
        \mfk{c}^2_{t \rightarrow 1} (\ttt{Q}_{r,n}|\scr{M})\leq\tfrac{1}{2} + \tfrac{7t}{\sqrt{r}}.
    \end{equation*}
\end{theorem}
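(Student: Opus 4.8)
The plan is to follow the line of \Cref{thm:haar_random_encryption_scheme_untelegraphable_indistinguishable}, replacing the exact second-moment computation by the approximate $2t$-th Haar moment bound of \Cref{lem:weingarten_calculus_bounds}. First I would take a $t$-copy telegraphing-distinguishing attack in the form $\ttt{A}=(\{m_0,m_1\},X,\{P_x\},\{p(b|x,U)\})$ supplied by \Cref{lem:tg-char}, where $\{P_x\}_{x\in X}\subseteq\mc{B}(H^{\otimes t})$ is a POVM, and fold the classical post-processing into the effective two-outcome measurement $Q^U_b\coloneqq\sum_x p(b|x,U)\,P_x$, which satisfies $Q^U_0+Q^U_1=I$. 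Writing $\Delta_U\coloneqq(U\sigma_{m_0}U^\ast)^{\otimes t}-(U\sigma_{m_1}U^\ast)^{\otimes t}$ and using $\Tr[(U\sigma_{m_b}U^\ast)^{\otimes t}]=1$, the value is $\tfrac12+\tfrac12\int\Tr[Q^U_0\Delta_U]\,dU$; and since $p(0|x,U)\le 1$ while $\sum_x\Tr[P_x\Delta_U]=\Tr\Delta_U=0$, one gets $\Tr[Q^U_0\Delta_U]\le\sum_x\max\{0,\Tr[P_x\Delta_U]\}=\tfrac12\sum_x|\Tr[P_x\Delta_U]|$, hence $\mfk{c}^2_{t\rightarrow1}(\ttt{Q}_{r,n}|\scr{M})\le\tfrac12+\tfrac14\sup_{\{P_x\}}\int\sum_x|\Tr[P_x\Delta_U]|\,dU$. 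By unitary invariance of the Haar measure I may take $\sigma_{m_i}=\tfrac1r\Pi_i$ for fixed orthogonal rank-$r$ projections $\Pi_0,\Pi_1$, so that $\Delta_U=r^{-t}U^{\otimes t}(\Pi_0^{\otimes t}-\Pi_1^{\otimes t})U^{\ast\otimes t}$.

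The core step is a second-moment estimate for each fixed $x$. By Jensen, $\int|\Tr[P_x\Delta_U]|\,dU\le(\int\Tr[P_x\Delta_U]^2\,dU)^{1/2}$, and $\Tr[P_x\Delta_U]^2=\Tr[(P_x\otimes P_x)(\Delta_U\otimes\Delta_U)]$, so $\int\Tr[P_x\Delta_U]^2\,dU=r^{-2t}\Tr[(P_x\otimes P_x)\,\Phi_{2t}(\mc{D})]$ with $\mc{D}\coloneqq(\Pi_0^{\otimes t}-\Pi_1^{\otimes t})^{\otimes 2}$ on $(\C^{d})^{\otimes 2t}$. If $d\le\sqrt{6}\,(2t)^{7/4}$ then already $\tfrac12+\tfrac{7t}{\sqrt{r}}>1$ (since $r=d/n\le d/2$), so I may assume the hypothesis of \Cref{lem:weingarten_calculus_bounds}. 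Splitting $\mc{D}=A'-B'$ into its positive and negative parts, $A'\coloneqq\Pi_0^{\otimes 2t}+\Pi_1^{\otimes 2t}$ and $B'\coloneqq\Pi_0^{\otimes t}\otimes\Pi_1^{\otimes t}+\Pi_1^{\otimes t}\otimes\Pi_0^{\otimes t}$, the two-sided bound gives $\Phi_{2t}(\mc{D})\preceq\Psi_{2t}(\mc{D})+\tfrac{4t^2}{d}\Psi_{2t}(A'+B')$. The combinatorial heart, which uses $\Pi_0\Pi_1=0$, is that for $\pi\in\mfk{S}_{2t}$ with $c(\pi)$ cycles, $\Tr[V_d(\pi)^{-1}\mc{D}]$ equals $2r^{c(\pi)}$ when some cycle of $\pi$ meets both blocks $\{1,\dots,t\}$ and $\{t+1,\dots,2t\}$ and equals $0$ otherwise, whereas $\Tr[V_d(\pi)^{-1}(A'+B')]\le 4r^{c(\pi)}$ always; combined with $|\Tr[(P_x\otimes P_x)V_d(\pi)]|\le\Tr[P_x]^2$ and, with the shorthand $r^{\overline{m}}\coloneqq r(r+1)\cdots(r+m-1)$, the identity $\sum_{\pi\in\mfk{S}_m}r^{c(\pi)}=r^{\overline{m}}$, this yields $\int\Tr[P_x\Delta_U]^2\,dU\le\tfrac{\Tr[P_x]^2}{r^{2t}d^{2t}}\big(2(r^{\overline{2t}}-(r^{\overline{t}})^2)+\tfrac{16t^2}{d}\,r^{\overline{2t}}\big)$, the quantity $r^{\overline{2t}}-(r^{\overline{t}})^2$ being exactly the $r^{c(\pi)}$-weighted count of block-mixing permutations in $\mfk{S}_{2t}$ (the block-preserving ones contributing $(r^{\overline{t}})^2$).

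To conclude I would take square roots — the resulting bound is proportional to $\Tr[P_x]$ — and sum over $x$ using $\sum_x\Tr[P_x]=\Tr[I_{H^{\otimes t}}]=d^t$, obtaining $\int\sum_x|\Tr[P_x\Delta_U]|\,dU\le r^{-t}\big(2(r^{\overline{2t}}-(r^{\overline{t}})^2)+\tfrac{16t^2}{d}r^{\overline{2t}}\big)^{1/2}$; estimating the rising factorials via $\prod_{j=0}^{m-1}(1+j/r)\le e^{m^2/(2r)}$ and $e^{t^2/r}-1\le\tfrac{t^2}{r}e^{t^2/r}$ shows the expression under the root, divided by $r^{2t}$, is at most $\tfrac{10t^2}{r}e^{2t^2/r}$ (using $d\ge 2r$), so $\int\sum_x|\Tr[P_x\Delta_U]|\,dU\le\tfrac{\sqrt{10}\,t}{\sqrt{r}}e^{t^2/r}$ and $\mfk{c}^2_{t\rightarrow1}(\ttt{Q}_{r,n}|\scr{M})\le\tfrac12+\tfrac{\sqrt{10}}{4}\cdot\tfrac{t}{\sqrt{r}}e^{t^2/r}\le\tfrac12+\tfrac{7t}{\sqrt{r}}$, the last inequality holding because $\tfrac{\sqrt{10}}{4}e^{t^2/r}\le 7$ whenever $t^2/r\le\ln(28/\sqrt{10})\approx 2.18$, while for larger $t^2/r$ one has $\tfrac{7t}{\sqrt{r}}>1\ge\mfk{c}^2_{t\rightarrow1}$. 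The main obstacle is this middle step: the $2t$-th Haar moment is not exactly tractable, so the two-sided bound of \Cref{lem:weingarten_calculus_bounds} is indispensable, and the real insight is noticing that $\mc{D}$ is annihilated by every non-block-mixing permutation — this is what forces the second moment to be of order $t^2/r$ rather than an absolute constant; handling the indefiniteness of $\mc{D}$ (needing both directions of the moment bound) and keeping the per-outcome estimates proportional to $\Tr[P_x]^2$, so that they sum without a lossy rank-one refinement of the POVM, are the other delicate points.
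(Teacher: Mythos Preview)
Your proof is correct and takes a genuinely different organizational route from the paper's. The paper first isolates an intermediate lemma (\Cref{lem:haar_random_encryption_scheme_t_copy_untelegraphable_indistinguishable_lemma1}) showing that $\int\Tr[P(U\sigma U^\ast)^{\otimes k}]dU=\frac{\Tr[P]}{d^k}(1\pm\frac{7k^2}{r})$ for each rank-$r$ state $\sigma$ separately, then centers each term $\Tr[P_x(U\sigma_{m_b}U^\ast)^{\otimes t}]$ around $\frac{\Tr[P_x]}{d^t}$ and uses Cauchy--Schwarz over $U$ to decouple the adversary's post-processing $p(0|x,U)$ from the Haar-fluctuating trace before squaring. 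You instead optimize out $p(\cdot|x,U)$ pointwise via the positive-part trick to reach $\tfrac12\sum_x|\Tr[P_x\Delta_U]|$, and then apply \Cref{lem:weingarten_calculus_bounds} directly to the second moment of the \emph{difference}, exploiting the combinatorial fact that block-preserving permutations annihilate $(\Pi_0^{\otimes t}-\Pi_1^{\otimes t})^{\otimes 2}$, so only block-mixing permutations survive and contribute $r^{\overline{2t}}-(r^{\overline t})^2$. Your route is more direct for this specific theorem; the paper's route has the advantage that its intermediate lemma is modular and is reused (in the generalized form \Cref{lem:different-states}) for the collusion-resistant bound of \Cref{thm:collusion}, where the ciphertexts are tensor products over \emph{different} $\sigma_m$'s and your block-cancellation identity no longer applies cleanly. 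One minor bookkeeping point: with $e^{2t^2/r}-1\le\tfrac{2t^2}{r}e^{2t^2/r}$ and $d\ge 2r$, your expression under the root, divided by $r^{2t}$, is at most $\tfrac{12t^2}{r}e^{2t^2/r}$ rather than $\tfrac{10t^2}{r}e^{2t^2/r}$; this still gives $\tfrac{\sqrt{12}}{4}\cdot\tfrac{t}{\sqrt{r}}e^{t^2/r}\le\tfrac{7t}{\sqrt{r}}$ whenever $t^2/r\le 2$, and the bound is vacuous otherwise, so the stated conclusion stands.
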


\begin{lemma} \label{lem:haar_random_encryption_scheme_t_copy_untelegraphable_indistinguishable_lemma1}
    Let $P \in \mc{B}(\CC^{d^k})$ such that $0 \leq P \leq I$, and let $\sigma\in\mc{B}(\C^d)$ be a state such that $\norm{\sigma}\leq\varepsilon$. Suppose $k^2\leq \frac{1}{\varepsilon}$, then
    \begin{align*}
        \left| \int \Tr \left[ P\cdot {(U \sigma U^*)}^{\otimes k} \right] dU -\frac{\Tr[P]}{d^k} \right| \leq \frac{\Tr[P]}{d^k} 7k^2\varepsilon.
    \end{align*}
\end{lemma}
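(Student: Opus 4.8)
The plan is to recognise that, since $\int_{\mathcal{U}(d)} U^{\otimes k}\sigma^{\otimes k}{U^*}^{\otimes k}\,dU=\Phi_k(\sigma^{\otimes k})$ with $\Phi_k$ the twirl of \cref{lem:weingarten_calculus_bounds}, the left-hand side of the claim is exactly $\Tr[P\,\Phi_k(\sigma^{\otimes k})]$, and then to reduce the statement to the Loewner sandwich $(1-7k^2\varepsilon)d^{-k}I\preceq\Phi_k(\sigma^{\otimes k})\preceq(1+7k^2\varepsilon)d^{-k}I$. Indeed, if this holds then for any $P\succeq0$ both $(1+7k^2\varepsilon)d^{-k}I-\Phi_k(\sigma^{\otimes k})$ and $\Phi_k(\sigma^{\otimes k})-(1-7k^2\varepsilon)d^{-k}I$ are positive semidefinite, so pairing with $P$ and using that the trace of a product of two positive semidefinite operators is nonnegative pins $\Tr[P\,\Phi_k(\sigma^{\otimes k})]$ between $(1\pm7k^2\varepsilon)d^{-k}\Tr[P]$, which is the inequality to be proved (only $P\succeq0$ is used).

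For the sandwich I would go through $\Psi_k$. First, the hypotheses force $d\geq1/\varepsilon\geq k^2$, since every state on $\C^d$ has operator norm at least $1/d$, so $\varepsilon\geq\norm{\sigma}\geq1/d$. Provided also $d>\sqrt6\,k^{7/4}$ --- which holds once $k$ is moderately large because $d\geq k^2$, while in the remaining corner the same constraints pin $\sigma$ to within a constant factor of the maximally mixed state and the crude estimates $\Tr[P\,\Phi_k(\sigma^{\otimes k})]\leq\norm{\sigma}^k\Tr[P]\leq\varepsilon^k\Tr[P]$ and $d^{-k}\Tr[P]\leq\varepsilon^k\Tr[P]$ together with $(\varepsilon d)^k\leq7k^2$ settle the bound directly --- \cref{lem:weingarten_calculus_bounds} gives $(1-k^2/d)\Psi_k(\sigma^{\otimes k})\preceq\Phi_k(\sigma^{\otimes k})\preceq(1+k^2/d)\Psi_k(\sigma^{\otimes k})$, reducing matters to estimating $\Psi_k(\sigma^{\otimes k})$.

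Here I would use $\Psi_k(\sigma^{\otimes k})=\frac{1}{d^k}\sum_{\pi\in\mathfrak S_k}\Tr[V_d(\pi)^{-1}\sigma^{\otimes k}]\,V_d(\pi)$ together with the cycle formula $\Tr[V_d(\pi)^{-1}\sigma^{\otimes k}]=\prod_c\Tr[\sigma^{|c|}]$, the product over the cycles $c$ of $\pi$. The identity permutation contributes precisely $\frac{1}{d^k}I$; every other term has a nonnegative coefficient at most $\prod_c\norm{\sigma}^{|c|-1}=\varepsilon^{k-\mathrm{cyc}(\pi)}$ (where $\mathrm{cyc}(\pi)$ counts cycles) while $\norm{V_d(\pi)}=1$, and the coefficients are invariant under $\pi\mapsto\pi^{-1}$, so the non-identity part is Hermitian with operator norm at most $\frac{1}{d^k}\sum_{\pi\neq e}\varepsilon^{k-\mathrm{cyc}(\pi)}$. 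Grouping permutations by their number of cycles and invoking the rising-factorial generating function for the unsigned Stirling numbers of the first kind gives the clean identity $\sum_{\pi\in\mathfrak S_k}\varepsilon^{k-\mathrm{cyc}(\pi)}=\prod_{i=1}^{k-1}(1+i\varepsilon)$, hence $\sum_{\pi\neq e}\varepsilon^{k-\mathrm{cyc}(\pi)}=\prod_{i=1}^{k-1}(1+i\varepsilon)-1\leq e^{\varepsilon k^2/2}-1\leq\varepsilon k^2$, using $\varepsilon k^2\leq1$ and convexity of $x\mapsto e^x-1$ on $[0,1]$. So $\norm{\Psi_k(\sigma^{\otimes k})-\frac{1}{d^k}I}\leq\frac{\varepsilon k^2}{d^k}$.

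Combining this with the $\Phi_k$--$\Psi_k$ sandwich and $k^2/d\leq\varepsilon k^2\leq1$ yields $\Phi_k(\sigma^{\otimes k})\preceq(1+\varepsilon k^2)^2 d^{-k}I\preceq(1+3\varepsilon k^2)d^{-k}I$ and, symmetrically, $\Phi_k(\sigma^{\otimes k})\succeq(1-2\varepsilon k^2)d^{-k}I$, both safely inside the required $(1\pm7k^2\varepsilon)d^{-k}I$ window (the constant $7$ leaves slack to also absorb the small-parameter corner). I expect the crux to be the combinatorial step --- recognising $\sum_\pi\varepsilon^{k-\mathrm{cyc}(\pi)}$ as the rising factorial $\prod_{i=1}^{k-1}(1+i\varepsilon)$ and turning it into a usable estimate --- together with the bookkeeping needed to confirm, or else sidestep, the dimension hypothesis of \cref{lem:weingarten_calculus_bounds}.
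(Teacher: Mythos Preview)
Your approach is essentially the same as the paper's: both apply \cref{lem:weingarten_calculus_bounds} to pass from $\Phi_k$ to $\Psi_k$, use the cycle formula $\Tr[V_d(\pi)^{-1}\sigma^{\otimes k}]=\prod_c\Tr[\sigma^{|c|}]\leq\varepsilon^{k-\mathrm{cyc}(\pi)}$, invoke the rising-factorial identity $\sum_{\pi}\varepsilon^{k-\mathrm{cyc}(\pi)}=\prod_{i=0}^{k-1}(1+i\varepsilon)$, and bound the product via $k^2\varepsilon\leq 1$. The only cosmetic difference is that you first establish the Loewner sandwich on $\Phi_k(\sigma^{\otimes k})$ and then pair with $P$, whereas the paper carries $\Tr[P\cdot\,\cdot\,]$ throughout and uses $\operatorname{Re}\Tr[P\cdot V_d(\pi)]\leq\Tr[P]$; you also explicitly flag the dimension hypothesis $d>\sqrt{6}\,k^{7/4}$ of \cref{lem:weingarten_calculus_bounds} (which the paper silently treats as implied by $k^2\leq d$), though your corner-case workaround would need tightening to be fully rigorous.
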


It follows from the lemma that $\norm*{\int_{\mc{U}(d)}(U\sigma U^\ast)^{\otimes k}dU-\frac{I}{d^k}}_{\Tr}\leq 7k^2\varepsilon$, and that if $\sigma \coloneqq \frac{1}{r} \Pi$ for $\Pi\in\mc{B}(\C^d)$ a rank $r$ projector and $k^2\leq r$, then
    \begin{align*}
        \left| \int \Tr \left[ P\cdot {(U \sigma U^*)}^{\otimes k} \right] dU -\frac{\Tr[P]}{d^k} \right| \leq \frac{\Tr[P]}{d^k} \frac{7k^2}{r}.
    \end{align*}

\begin{proof}
    First, using \cref{lem:weingarten_calculus_bounds}, since $k^2\leq\frac{1}{\varepsilon}\leq d$, we see that
    \begin{align*}
        \int \Tr \left[ P\cdot {(U\sigma U^*)}^{\otimes k} \right] dU
        &\leq \left( 1 + \frac{k^2}{d} \right) \frac{1}{d^k} \sum_{\pi \in \mathfrak{S}_k} \Tr \left[ P \cdot V_d(\pi) \right]\prod_{i=1}^{\#\pi}\Tr(\sigma^{n_i(\pi)}),
    \end{align*}
    where $\#\pi$ is the number of cycles in the cycle decomposition of $\pi$ and $(n_1(\pi),\ldots,n_{\#\pi}(\pi))$ is the cycle shape. We have that $\Tr(\sigma^n)\leq\Tr(\norm{\sigma}^{n-1}\sigma)=\norm{\sigma}^{n-1}$. Using this and the upper bound $\operatorname{Re} \Tr[P \cdot V_d(\pi)] \leq \Tr[P]$,
    \begin{align*}
        \int \Tr \left[ P\cdot {(U\sigma U^*)}^{\otimes k} \right] dU
        &\leq\parens*{1+\frac{k^2}{d}}\frac{\Tr[P]}{d^k}\sum_{\pi\in \mfk{S}_k}\prod_{i=1}^{\#\pi}\Tr(\sigma^{n_i(\pi)})\\
        &\leq\parens*{1+\frac{k^2}{d}}\frac{\Tr[P]}{d^k}\sum_{\pi\in \mfk{S}_k}\varepsilon^{\sum_{i=1}^{\#\pi}(n_i(\sigma)-1)}\\
        &\leq\parens*{1+\frac{k^2}{d}}\Tr[P]\parens*{\frac{\varepsilon}{d}}^k\sum_{\pi\in \mfk{S}_k}\varepsilon^{-\#\pi}.
    \end{align*}
    Then we use that
    \begin{equation*}
        \sum_{\pi \in \mathfrak{S}_k} x^{\#\pi} = \sum^k_{i=0} \genfrac{[}{]}{0pt}{1}{k}{i} x^i = x^{\bar{k}},
    \end{equation*}
    where $\genfrac{[}{]}{0pt}{1}{k}{i}$ denotes the unsigned Stirling numbers of the first kind, and $x^{\bar{k}}$ is the rising factorial
    \begin{equation*}
        x^{\bar{k}} \coloneqq x (x + 1) \cdots (x + k - 1).
    \end{equation*}
    Then
    \begin{align*}
        \int \Tr \left[ P\cdot {(U\sigma U^*)}^{\otimes k} \right] dU
        &\leq \left( 1 + \frac{k^2}{d} \right)\Tr[P]\parens*{\frac{\varepsilon}{d}}^k \varepsilon^{-1} (\varepsilon^{-1} + 1) \cdots (\varepsilon^{-1} + k - 1) \\
        &\leq \left( 1 + \frac{k^2}{d} \right) \frac{\Tr[P]}{d^k} \big( 1 + \varepsilon \big) \cdots \big(  1 + (k-1)\varepsilon \big) \\
        &\leq \left( 1 + \frac{k^2}{d} \right) \frac{\Tr[P]}{d^k} \prod^{k-1}_{i=0} \Big( 1 + i\varepsilon \Big).
    \end{align*}
    Then using the assumption $k^2 \leq \varepsilon^{-1}$
    \begin{equation*}
        \prod^{k-1}_{i=0} \Big( 1 + i\varepsilon \Big) \leq {\Big( 1 + k\varepsilon \Big)}^k \leq 1 + {\Big( 1 + \frac{1}{k} \Big)}^k k^2\varepsilon \leq 1 + e k^2\varepsilon.
    \end{equation*}
    Thus we have the upper bound
    \begin{align*}
        \int \Tr \left[ P\cdot {(U\sigma U^*)}^{\otimes k} \right] dU
        &\leq \left( 1 + \frac{k^2}{d} \right) \frac{\Tr[P]}{d^k} \left( 1 + e k^2\varepsilon \right) \\
        &\leq \frac{\Tr[P]}{d^k} \left( 1 + (2e + 1) k^2\varepsilon \right).
    \end{align*}
    On the other hand, using \cref{lem:weingarten_calculus_bounds}, we can lower bound
    \begin{align*}
        \int \Tr \left[ P\cdot {(U\sigma U^*)}^{\otimes k} \right] dU
        &\geq \left( 1 - \frac{k^2}{d} \right) \frac{1}{d^k} \sum_{\pi \in \mathfrak{S}_k} \Tr \left[ V_d(\pi)^{\shortminus 1} \cdot \sigma^{\otimes k} \right] \Tr \left[ P \cdot V(\pi) \right] \\
        &\geq \parens*{1-\frac{k^2}{d}}\frac{\Tr(P)}{d^k}\parens*{2-\sum_{\pi\in S_k}\Tr\squ*{V_d(\pi)^\ast\sigma^{\otimes k}}}\\
        &\geq\parens*{1-\frac{k^2}{d}}\frac{\Tr(P)}{d^k}\parens*{2-\parens*{1+k\varepsilon}^k}\\
        &\geq\frac{\Tr(P)}{d^k}\parens*{1-\frac{k^2}{d}}\parens*{1-ek^2\varepsilon}\\
        &\geq\frac{\Tr(P)}{d^k}\parens*{1-(e+1)k^2\varepsilon}.\qedhere
    \end{align*}
\end{proof}
\begin{proof}[Proof of \cref{thm:haar_random_encryption_scheme_t_copy_untelegraphable_indistinguishable}]
    Let $\ttt{A}=(\{m_0,m_1\},B,\{P^k_m\},\Phi)$ be a $t$-copy telegraphing attack against $\ttt{Q}_{r,n}$. Using \cref{lem:tg-char}, there exists a finite set $X$, probability distributions $p(\cdot|x,U)$ on $[2]$ for all $x\in X$ and $U\in\mc{U}(rn)$, and a POVM $\{P_x\}_{x\in X}$ such that
    \begin{equation*}
        \mfk{c}^2_{t \rightarrow 1} (\ttt{Q}|\ttt{A})
        = \int \frac{1}{2r} \sum_{b\in\{0,1\}} \sum_{x\in X} p(b|x,U) \Tr\squ{P_x \cdot (U \Pi_{m_b} U^*)^{\otimes t}} dU,
    \end{equation*}
    where $\Pi_i \coloneqq \ketbra{i}{i} \otimes I_{r}$. We want to upper-bound this value. To achieve this, it suffices to establish an upper bound on the following quantity:
    \begin{equation*}
        D \coloneqq \left| \tfrac{1}{r} \int \sum_x p(0|x,U) \cdot \Tr \left[ P_x \cdot {\left( U \, \Pi_{m_0} \, U^* \right)}^{\otimes t} \right] dU - \tfrac{1}{r} \int \sum_x p(0|x,U) \cdot \Tr \left[ P_x \cdot {\left( U \, \Pi_{m_1} \, U^* \right)}^{\otimes t} \right] dU \right|.
    \end{equation*}
    First using the triangle inequality,
    \begin{align*}
        D
        &= \bigg| \tfrac{1}{r} \int \sum_x p(0|x,U) \cdot \left( \Tr \left[ P_x \cdot {\left( U \, \Pi_{m_0} \, U^* \right)}^{\otimes t} \right] - \frac{\Tr[P_x]}{d^t} \right) dU \\ &\quad - \tfrac{1}{r} \int \sum_x p(0|x,U) \cdot \left( \Tr \left[ P_x \cdot {\left( U \, \Pi_{m_1} \, U^* \right)}^{\otimes t} \right] - \frac{\Tr[P_x]}{d^t} \right) dU \bigg| \\
        &\leq \bigg| \tfrac{1}{r} \int \sum_x p(0|x,U) \cdot \left( \Tr \left[ P_x \cdot {\left( U \, \Pi_{m_0} \, U^* \right)}^{\otimes t} \right] - \frac{\Tr[P_x]}{d^t} \right) dU \bigg| \\ &\quad + \bigg| \tfrac{1}{r} \int \sum_x p(0|x,U) \cdot \left( \Tr \left[ P_x \cdot {\left( U \, \Pi_{m_1} \, U^* \right)}^{\otimes t} \right] - \frac{\Tr[P_x]}{d^t} \right) dU \bigg|,
    \end{align*}
    where we write $d\coloneqq rn$. Therefore, for any $m_b$, we can bound by Cauchy-Schwarz as
    \begin{align*}
        &\bigg| \tfrac{1}{r} \int \sum_x p(0|x,U) \cdot \left( \Tr \left[ P_x \cdot {\left( U \, \Pi_{m_b} \, U^* \right)}^{\otimes t} \right] - \frac{\Tr[P_x]}{d^t} \right) dU \bigg| \\
        &\quad\leq \sum_x  \bigg| \tfrac{1}{r} \int p(0|x,U) \cdot \left( \Tr \left[ P_x \cdot {\left( U \, \Pi_{m_b} \, U^* \right)}^{\otimes t} \right] - \frac{\Tr[P_x]}{d^t} \right) dU \bigg| \\
        &\quad\leq \sum_x \sqrt{ \int p(0|x,U)^2 dU \cdot \int \left( \Tr \left[ P_x \cdot {\left( U \left(\tfrac{1}{r} \cdot \Pi_{m_b} \right) U^* \right)}^{\otimes t} \right] - \frac{\Tr[P_x]}{d^t} \right)^2 dU } \\
        &\quad\leq \sum_x \sqrt{ \int \left( \Tr \left[ P_x \cdot {\left( U \left(\tfrac{1}{r} \cdot \Pi_{m_b} \right) U^* \right)}^{\otimes t} \right] - \frac{\Tr[P_x]}{d^t} \right)^2 dU }.
    \end{align*}
    The inner term expands to
    \begin{align*}
        &\left( \Tr \left[ P_x \cdot {\left( U \left(\tfrac{1}{r} \cdot \Pi_{m_b} \right) U^* \right)}^{\otimes t} \right] - \frac{\Tr[P_x]}{d^t} \right)^2 \\
        &\quad = \Tr \left[ \left( P_x \otimes P_x \right) \cdot {\left( U \left(\tfrac{1}{r} \cdot \Pi_{m_b} \right) U^* \right)}^{\otimes 2t} \right] - 2 \frac{\Tr[P_x]}{d^t} \Tr \left[ P_x \cdot {\left( U \left(\tfrac{1}{r} \cdot \Pi_{m_b} \right) U^* \right)}^{\otimes t} \right] + \frac{\Tr[P_x]^2}{d^{2t}}.
    \end{align*}
    Then, using \cref{lem:haar_random_encryption_scheme_t_copy_untelegraphable_indistinguishable_lemma1} for sufficiently large $d$,
    \begin{equation*}
        \int \Tr \left[ \left( P_x \otimes P_x \right) \cdot {\left( U \left(\tfrac{1}{r} \cdot \Pi_{m_b} \right) U^* \right)}^{\otimes 2t} \right] dU \leq \frac{\Tr[P_x \otimes P_x]}{d^{2t}} \left( 1 + \frac{7(2t)^2}{r} \right) = \frac{\Tr[P_x]^2}{d^{2t}} \left( 1 + 28 \frac{t^2}{r} \right),
    \end{equation*}
    and
    \begin{equation*}
        \Tr \left[ P_x \cdot {\left( U \left(\tfrac{1}{r} \cdot \Pi_{m_b} \right) U^* \right)}^{\otimes t} \right] \geq \frac{\Tr[P_x]}{d^t} \left( 1 - 7 \frac{t^2}{r} \right).
    \end{equation*}
    Putting these together,
    \begin{align*}
        &\int \left( \Tr \left[ P_x \cdot {\left( U \left(\tfrac{1}{r} \cdot \Pi_{m_b} \right) U^* \right)}^{\otimes t} \right] - \frac{\Tr[P_x]}{d^t} \right)^2 dU \\
        &\quad\leq \frac{\Tr[P_x]^2}{d^{2t}} \left( 1 + 28 \frac{t^2}{r} \right) - 2\frac{\Tr[P_x]^2}{d^{2t}} \left( 1 - 7 \frac{t^2}{r} \right) + \frac{\Tr[P_x]^2}{d^{2t}} \\
        &\quad\leq 42 \frac{ t^2}{r} \frac{\Tr[P_x]^2}{d^{2t}},
    \end{align*}
    and get
    \begin{equation*}
        \bigg| \tfrac{1}{r} \int \sum_x p(0|x,U) \cdot \left( \Tr \left[ P_x \cdot {\left( U \, \Pi_{m_b} \, U^* \right)}^{\otimes t} \right] - \frac{\Tr[P_x]}{d^t} \right) dU \bigg| \leq \frac{7t}{\sqrt{r}}.
    \end{equation*}
    As this holds for both $b$, then $D \leq \frac{14t}{\sqrt{r}}$.

    Now we can write the $t$-copy telegraphing-distinguishing value $\mfk{c}^2_{t \rightarrow 1} (\ttt{Q}|\ttt{A})$ as
    \begin{equation*}
        \mfk{c}^2_{t \rightarrow 1} (\ttt{Q}|\ttt{A}) = \tfrac{1}{2} \Pr \left[ b' = 0 | b = 0 \right] + \tfrac{1}{2} \Pr \left[ b' = 1 | b = 1 \right],
    \end{equation*}
    where $b'$ is the random variable corresponding to the adversary's guess of $b$. Then
    \begin{align*}
        \mfk{c}^2_{t \rightarrow 1} (\ttt{Q}|\ttt{A})
        &\leq \tfrac{1}{2} \Pr \left[ b' = 0 | b = 1 \right] + \tfrac{1}{2} \Pr \left[ b' = 1 | b = 1 \right] + \tfrac{1}{2} D \\
        &\leq \Pr \left[ \left( b' = 0 \wedge b = 1 \right) \vee \left( b' = 1 \wedge b = 1 \right) \right] + \tfrac{1}{2} D \\
        &\leq \tfrac{1}{2} + \tfrac{1}{2} D.\qedhere
    \end{align*}
\end{proof}
\section{Collusion-resistant security} \label{sec:collusion}

In this section, we adapt the techniques of \cref{sec:haar_random} to the setting of collusion-resistant security. In doing so, we find that the same bounds hold on telegraphing-distinguishing attacks with collusion. We use this to strengthen some results of~\cite{CKNY24arxiv} by removing or weakening computational assumptions.

\begin{lemma}\label{lem:different-states}
    Let $P \in \mc{B}(\CC^{d^k})$ such that $0 \leq P \leq I$, and let $\sigma_1,\ldots,\sigma_n\in\mc{B}(\C^d)$ be orthogonal states such that $\norm{\sigma_i}\leq\varepsilon$. Let $k=\sum_{i=1}^nk_i$ and suppose $k^2\leq \frac{1}{\varepsilon}$. Then
    \begin{align*}
        \left| \int \Tr \left[ P\cdot {(U \sigma_1 U^*)}^{\otimes k_1}\otimes\cdots\otimes {(U\sigma_nU^\ast)}^{\otimes k_n} \right] dU -\frac{\Tr[P]}{d^k} \right| \leq \frac{\Tr[P]}{d^k}7k^2\varepsilon.
    \end{align*}
\end{lemma}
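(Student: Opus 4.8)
The plan is to run the argument of \cref{lem:haar_random_encryption_scheme_t_copy_untelegraphable_indistinguishable_lemma1} almost verbatim, with the single state $\sigma$ replaced by $X\coloneqq\sigma_1^{\otimes k_1}\otimes\cdots\otimes\sigma_n^{\otimes k_n}$; the one genuinely new point is that orthogonality of the $\sigma_i$ makes most permutations contribute $0$ to the Weingarten sum. First I would observe that $(U\sigma_1U^\ast)^{\otimes k_1}\otimes\cdots\otimes(U\sigma_nU^\ast)^{\otimes k_n}=U^{\otimes k}X(U^\ast)^{\otimes k}$, so the quantity to estimate is $\Tr[P\,\Phi_k(X)]$ in the notation of \cref{lem:weingarten_calculus_bounds}. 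Since $X\succeq 0$, $0\le P\le I$, and $k^2\le\varepsilon^{-1}\le d$ (the last step because each $\sigma_i$ is a state on $\C^d$, so $\varepsilon\ge\norm{\sigma_1}\ge 1/d$), that lemma gives $(1-\tfrac{k^2}{d})\Tr[P\,\Psi_k(X)]\le\Tr[P\,\Phi_k(X)]\le(1+\tfrac{k^2}{d})\Tr[P\,\Psi_k(X)]$, where $\Tr[P\,\Psi_k(X)]=\tfrac{1}{d^k}\sum_{\pi\in\mathfrak{S}_k}\Tr[V_d(\pi)^{-1}X]\,\Tr[P\,V_d(\pi)]$.

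Next I would evaluate $\Tr[V_d(\pi)^{-1}X]$ cyclewise: writing $s\colon[k]\to[n]$ for the map sending each tensor slot to the index of the state occupying it, each cycle $c$ of $\pi$ contributes the trace of an ordered product of the operators $\sigma_{s(j)}$, $j\in c$. If a cycle visits two distinct species, that product contains a consecutive pair $\sigma_a\sigma_b$ with $a\neq b$, which is $0$ by orthogonality; hence $\Tr[V_d(\pi)^{-1}X]=0$ unless every cycle of $\pi$ is monochromatic, i.e.\ $\pi=\pi_1\times\cdots\times\pi_n$ with $\pi_a\in\mathfrak{S}_{k_a}$, in which case $\Tr[V_d(\pi)^{-1}X]=\prod_{a=1}^n\prod_{c\in\mathrm{cyc}(\pi_a)}\Tr[\sigma_a^{|c|}]\ge 0$. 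Using $\Tr[\sigma_a^m]\le\norm{\sigma_a}^{m-1}\le\varepsilon^{m-1}$ and $\operatorname{Re}\Tr[P\,V_d(\pi)]\le\Tr[P]$, the upper bound becomes $\Tr[P\,\Phi_k(X)]\le(1+\tfrac{k^2}{d})\tfrac{\Tr[P]}{d^k}\sum_{\pi\text{ mono}}\prod_a\varepsilon^{\,k_a-\#\pi_a}$, and the monochromatic sum factors as $\prod_a\big(\sum_{\pi_a\in\mathfrak{S}_{k_a}}\varepsilon^{\,k_a-\#\pi_a}\big)=\prod_a\prod_{i=0}^{k_a-1}(1+i\varepsilon)$ by the rising-factorial identity $\sum_{\pi\in\mathfrak{S}_m}x^{\#\pi}=x^{\bar m}$ already used in the excerpt. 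For the lower bound I would isolate $\pi=e$ (which contributes $\Tr[X]\Tr[P]=\Tr[P]$ since $\Tr[X]=1$) and bound the remaining monochromatic terms below by $-\Tr[P]\sum_{\pi\text{ mono},\,\pi\ne e}\Tr[V_d(\pi)^{-1}X]$, exactly as in the single-state proof, giving $\Tr[P\,\Phi_k(X)]\ge(1-\tfrac{k^2}{d})\tfrac{\Tr[P]}{d^k}\big(2-\prod_a\prod_{i=0}^{k_a-1}(1+i\varepsilon)\big)$.

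It remains to absorb the product over species into the single-state bound. I would prove the elementary inequality $\prod_{a=1}^n\prod_{i=0}^{k_a-1}(1+i\varepsilon)\le\prod_{i=0}^{k-1}(1+i\varepsilon)$ by noting that both sides are products of $k$ factors of the form $1+i\varepsilon$, and that for each $i$ the left-hand multiset has at least as many factors $\le 1+i\varepsilon$ as the right-hand one, namely $\sum_a\min(k_a,i+1)\ge i+1$ (because either some $k_a>i$, or else $\sum_a k_a=k\ge i+1$); matching the two sorted in increasing order then gives the claim. From here the calculation is identical to the excerpt: $k^2\le\varepsilon^{-1}$ gives $\prod_{i=0}^{k-1}(1+i\varepsilon)\le(1+k\varepsilon)^k\le 1+ek^2\varepsilon$, and combining with the $(1\pm\tfrac{k^2}{d})$ prefactors together with $\tfrac{k^2}{d}\le k^2\varepsilon$ yields $\big|\Tr[P\,\Phi_k(X)]-\tfrac{\Tr[P]}{d^k}\big|\le(2e+1)k^2\varepsilon\,\tfrac{\Tr[P]}{d^k}\le 7k^2\varepsilon\,\tfrac{\Tr[P]}{d^k}$, as claimed. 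I expect the only subtle step to be the monochromatic-cycle reduction---getting the cyclewise product formula and the ordering of factors right, and checking that $\sigma_a\sigma_b=0$ for orthogonal states really does kill every mixed cycle---while everything downstream is a cosmetic adaptation of the proof already present in the paper.
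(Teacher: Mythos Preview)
Your proof is correct and follows essentially the same route as the paper's. The one difference is that you restrict the Weingarten sum to monochromatic permutations $\pi=\pi_1\times\cdots\times\pi_n$, obtain $\prod_a\prod_{i=0}^{k_a-1}(1+i\varepsilon)$, and then prove the extra inequality $\prod_a\prod_{i=0}^{k_a-1}(1+i\varepsilon)\le\prod_{i=0}^{k-1}(1+i\varepsilon)$; the paper avoids this detour by observing that $0\le\Tr[V_d(\pi)^{-1}X]\le\varepsilon^{k-\#\pi}$ holds for \emph{every} $\pi\in\mathfrak{S}_k$ (the non-monochromatic terms vanish, which is certainly $\le\varepsilon^{k-\#\pi}$) and summing over all of $\mathfrak{S}_k$ directly, landing at $\sum_{\pi\in\mathfrak{S}_k}\varepsilon^{k-\#\pi}=\prod_{i=0}^{k-1}(1+i\varepsilon)$ in one step.
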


\begin{proof}
    First, using \cref{lem:weingarten_calculus_bounds}, since $k^2\leq\frac{1}{\varepsilon}\leq d$, we see that
    \begin{align*}
        \int \Tr &\left[ P\cdot {(U \sigma_1 U^*)}^{\otimes k_1}\otimes\cdots\otimes {(U\sigma_nU^\ast)}^{\otimes k_n} \right] dU\\
        &\leq \left( 1 + \frac{k^2}{d} \right) \frac{1}{d^k} \sum_{\pi \in \mathfrak{S}_k} \Tr \left[ P \cdot V_d(\pi) \right]\Tr\squ*{V_d(\pi)^{-1}(\sigma_1^{\otimes k_1}\otimes\cdots\sigma_n^{\otimes k_n})}.
    \end{align*}
    Now, write the cycle decomposition of $\pi$ as $\pi=(i_{1,1}\;\ldots\;i_{1,n_1})\cdots(i_{\#\pi,1}\;\ldots\;i_{\#\pi,n_{\#\pi}})$ and let $f:\{1,\ldots,k\}\rightarrow\{1,\ldots,n\}$ be the function such that $\sigma_1^{\otimes k_1}\otimes\cdots\sigma_n^{\otimes k_n}=\sigma_{f(1)}\otimes\cdots\otimes\sigma_{f(k)}$. Then,
    \begin{align*}
        \Tr\squ*{V_d(\pi)^{-1}(\sigma_1^{\otimes k_1}\otimes\cdots\sigma_n^{\otimes k_n})}&=\prod_{j=1}^{\#\pi}\Tr\squ*{\sigma_{f(i_{j,1})}\cdots \sigma_{f(i_{j,n_j})}}\\
        &=\begin{cases}0&\exists\,j,l,m.\;f(i_{j,l})\neq f(i_{j,m})\\\prod_{j=1}^{\#\pi}\Tr\squ*{\sigma_{f(i_{j,1})}^{n_j}}&\text{ else}
        \end{cases}\\
        &\leq\varepsilon^{k-\#\pi}.
    \end{align*}
    Note also that $\Tr\squ*{V_d(\pi)^{-1}(\sigma_1^{\otimes k_1}\otimes\cdots\sigma_n^{\otimes k_n})}\geq 0$. Therefore, using the upper bound $\operatorname{Re} \Tr[P \cdot V_d(\pi)] \leq \Tr[P]$, and hence
    \begin{align*}
        \int \Tr \left[ P\cdot {(U \sigma_1 U^*)}^{\otimes k_1}\otimes\cdots\otimes {(U\sigma_nU^\ast)}^{\otimes k_n} \right] dU &\leq \left( 1 + \frac{k^2}{d} \right)\frac{\Tr[P]}{d^k}\sum_{\pi \in \mathfrak{S}_k}\Tr\squ*{V_d(\pi)^{-1}(\sigma_1^{\otimes k_1}\otimes\cdots\sigma_n^{\otimes k_n})}\\
        &\leq \left( 1 + \frac{k^2}{d} \right)\Tr[P]\parens*{\frac{\varepsilon}{d}}^k\sum_{\pi \in \mathfrak{S}_k}\varepsilon^{-\#\pi}.
    \end{align*}
    So, proceeding exactly as in \cref{lem:haar_random_encryption_scheme_t_copy_untelegraphable_indistinguishable_lemma1}, we get the upper bound
    \begin{align*}
         \int \Tr \left[ P\cdot {(U \sigma_1 U^*)}^{\otimes k_1}\otimes\cdots\otimes {(U\sigma_nU^\ast)}^{\otimes k_n} \right] dU&\leq \frac{\Tr[P]}{d^k} \left( 1 + (2e + 1) k^2\varepsilon \right).
    \end{align*}
    For the lower bound, we proceed the same way, getting
    \begin{align*}
        \int \Tr &\left[ P\cdot {(U \sigma_1 U^*)}^{\otimes k_1}\otimes\cdots\otimes {(U\sigma_nU^\ast)}^{\otimes k_n} \right] dU\\
        &\geq \left( 1 - \frac{k^2}{d} \right) \frac{1}{d^k} \sum_{\pi \in \mathfrak{S}_k} \Tr \left[ V_d(\pi)^{\shortminus 1} \cdot (\sigma_1^{\otimes k_1}\otimes\cdots\sigma_n^{\otimes k_n}) \right] \Tr \left[ P \cdot V_d(\pi) \right] \\
        &\geq \parens*{1-\frac{k^2}{d}}\frac{\Tr(P)}{d^k}\parens*{2-\sum_{\pi\in S_k}\Tr\squ*{V_d(\pi)^{-1}(\sigma_1^{\otimes k_1}\otimes\cdots\sigma_n^{\otimes k_n})}}\\
        &\geq\frac{\Tr(P)}{d^k}\parens*{1-(e+1)k^2\varepsilon}.\qedhere
    \end{align*}
\end{proof}

\begin{theorem}\label{thm:collusion}
    Let $\ttt{A}=(Q,X,\{p(b|x,U)\}_{b\in\{0,1\},k\in\mc{U}(rn),x\in X},\{H_i\}_{i\in[Q+1]},\{V^i_{m_0,m_1}\}_{i\in[Q];m_0,m_1\in M,m_0\neq m_1},\{P_x\}_{x\in X})$ be a telegraphing-distinguishing attack with collusion against $\ttt{Q}_{r,n}$. Then,
    $$\mfk{t}(\ttt{Q}_{r,n}|\ttt{A})\leq\frac{1}{2}+7\frac{Q}{\sqrt{r}}.$$
\end{theorem}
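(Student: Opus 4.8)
The plan is to follow the proof of \cref{thm:haar_random_encryption_scheme_t_copy_untelegraphable_indistinguishable} essentially verbatim, with \cref{lem:different-states} replacing \cref{lem:haar_random_encryption_scheme_t_copy_untelegraphable_indistinguishable_lemma1}; this is exactly what \cref{lem:different-states} is for, since over the $Q$ rounds the pirate receives ciphertexts of several (possibly repeated) messages $\sigma_{m^{(1)}_b},\dots,\sigma_{m^{(Q)}_b}$, all encrypted under the \emph{same} Haar-random key $U$. We may assume $4Q^2\le r$: otherwise $7Q/\sqrt{r}>\tfrac{7}{2}$, so the claimed bound exceeds $1\ge\mfk{t}(\ttt{Q}_{r,n}|\ttt{A})$ and there is nothing to prove; under this assumption \cref{lem:different-states} applies with total parameter up to $2Q$ and $\varepsilon=1/r$ (and if in addition $d$ is too small for \cref{lem:weingarten_calculus_bounds}, the bound is again vacuous, exactly as in \cref{thm:haar_random_encryption_scheme_t_copy_untelegraphable_indistinguishable}).

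The first step, and the only delicate one, is to rewrite the nested expression defining $\mfk{t}(\ttt{Q}_{r,n}|\ttt{A})$ as a quantum comb acting on the $Q$ ciphertext registers. The pirate's action is a sequence of $Q$ channels, the $V^i$, interleaved with the adaptive (classical) choice of the next message pair and with the reception of the ciphertexts, and followed by the POVM $\{P_x\}$; passing to the Choi operators of the $V^i$ (the transposes thereby introduced are harmless, as the Haar measure is invariant under $U\mapsto\bar{U}$ and $\Pi_j^\T=\Pi_j$) yields a family of positive operators $R^x_{\vec{m}}$ on $(\C^d)^{\otimes Q}$, indexed by the POVM outcome $x$ and the tuple $\vec{m}=(m^{(1)}_0,m^{(1)}_1,\dots,m^{(Q)}_0,m^{(Q)}_1)$ of chosen pairs, with $\sum_{x,\vec{m}}R^x_{\vec{m}}=I$ (so that $0\preceq R^x_{\vec{m}}\preceq I$), and such that, writing $\sigma_j=\tfrac{1}{r}\Pi_j$ and $d=rn$,
\begin{equation*}
    \mfk{t}(\ttt{Q}_{r,n}|\ttt{A})=\tfrac{1}{2}\sum_{b\in\{0,1\}}g_b,\qquad g_b:=\int\sum_{x,\vec{m}}p(b|x,U)\,\Tr\big[R^x_{\vec{m}}\big(\textstyle\bigotimes_{i=1}^{Q}U\sigma_{m^{(i)}_b}U^*\big)\big]\,dU.
\end{equation*}
The identity $\sum_{x,\vec{m}}R^x_{\vec{m}}=I$ simply records that summing a comb over all of its outcomes gives a trace-preserving map; the real work lies in bookkeeping the adaptive choice of pairs correctly. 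This is the main obstacle: in the one-copy and $t$-copy proofs the analogous step is just the Choi representation of a single channel, whereas here one must track how the adaptively chosen pairs are recorded and re-inserted (alternatively one could argue by induction on $Q$, peeling off one round at a time).

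Granting the comb form, the remainder is the computation from \cref{thm:haar_random_encryption_scheme_t_copy_untelegraphable_indistinguishable}. Put $\tilde{g}_b:=\int\sum_{x,\vec{m}}p(b|x,U)\,\tfrac{\Tr[R^x_{\vec{m}}]}{d^Q}\,dU$, that is, the same quantity with every ciphertext replaced by $I_d/d$. Since $p(0|x,U)+p(1|x,U)=1$ and $\sum_{x,\vec{m}}\Tr[R^x_{\vec{m}}]=\Tr[I]=d^Q$, one has $\tilde{g}_0+\tilde{g}_1=1$, hence $\mfk{t}(\ttt{Q}_{r,n}|\ttt{A})\le\tfrac{1}{2}+\tfrac{1}{2}(|g_0-\tilde{g}_0|+|g_1-\tilde{g}_1|)$. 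For each $b$, the triangle inequality (the $p(b|x,U)$ are nonnegative), Cauchy--Schwarz in $U$, and $\int p(b|x,U)^2\,dU\le 1$ give
\begin{equation*}
    |g_b-\tilde{g}_b|\;\le\;\sum_{x,\vec{m}}\sqrt{\int\Big(\Tr\big[R^x_{\vec{m}}\big(\textstyle\bigotimes_i U\sigma_{m^{(i)}_b}U^*\big)\big]-\tfrac{\Tr[R^x_{\vec{m}}]}{d^Q}\Big)^{2}dU}\,.
\end{equation*}
Expanding the square gives three Haar integrals; grouping the $\le Q$ (respectively $\le 2Q$) tensor slots by message value — message $j$ occurring with multiplicity $k_j$, $\sum_j k_j=Q$ — and applying \cref{lem:different-states} with $P=R^x_{\vec{m}}$, $k=Q$ (respectively $P=(R^x_{\vec{m}})^{\otimes 2}$, $k=2Q$, using $\Tr[(R^x_{\vec{m}})^{\otimes2}]=\Tr[R^x_{\vec{m}}]^2$) and $\varepsilon=1/r$, the same cancellation as in the single-round case yields $\int(\cdots)^2\,dU\le\tfrac{42Q^2}{r}\cdot\tfrac{\Tr[R^x_{\vec{m}}]^2}{d^{2Q}}$. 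Hence, using $\sum_{x,\vec{m}}\Tr[R^x_{\vec{m}}]=d^Q$ once more,
\begin{equation*}
    |g_b-\tilde{g}_b|\;\le\;\frac{\sqrt{42}\,Q}{\sqrt{r}}\cdot\frac{1}{d^Q}\sum_{x,\vec{m}}\Tr[R^x_{\vec{m}}]\;=\;\frac{\sqrt{42}\,Q}{\sqrt{r}},
\end{equation*}
and therefore $\mfk{t}(\ttt{Q}_{r,n}|\ttt{A})\le\tfrac{1}{2}+\tfrac{\sqrt{42}\,Q}{\sqrt{r}}\le\tfrac{1}{2}+\tfrac{7Q}{\sqrt{r}}$, since $\sqrt{42}<7$.
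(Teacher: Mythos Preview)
Your proposal is correct and takes essentially the same approach as the paper. The only difference is that the paper constructs the POVM elements directly in the Heisenberg picture as $P_{x,m^{(1)},\ldots,m^{(Q)}}=((V^{0}_{m^{(1)}})^\dag\otimes I)\cdots((V^{Q-1}_{m^{(Q)}})^\dag\otimes I)P_x(V^{Q-1}_{m^{(Q)}}\otimes I)\cdots(V^{0}_{m^{(1)}}\otimes I)$ rather than via Choi operators, which makes the POVM property immediate and avoids the transposes; the subsequent reduction, Cauchy--Schwarz step, and application of \cref{lem:different-states} to get the $\sqrt{42}\,Q/\sqrt{r}$ bound are identical to yours (the paper phrases the reduction via a single difference $D$ with $p(0|x,U)$ rather than your symmetric $g_b,\tilde g_b$, but this is cosmetic).
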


\begin{proof}
    Write $M^{(2)}=\set*{(m_0,m_1)\in M^2}{m_0\neq m_1}$ and let $P_{x,m^{(1)},\ldots,m^{(Q)}}=((V^{0}_{m^{(1)}})^\dag\otimes I)\cdots((V^{Q-1}_{m^{(Q)}})^\dag\otimes I)P_x(V^{Q-1}_{m^{(Q)}}\otimes I)\cdots(V^{0}_{m^{(1)}}\otimes I)$ for $x\in X$ and $m^{(1)},\ldots,m^{(Q)}\in M^{(2)}$. $\{P_{x,m^{(1)},\ldots,m^{(Q)}}\}_{x\in X;m^{(1)},\ldots,m^{(Q)}\in M^{(2)}}$ is a POVM on $(\C^{rn})^{\otimes Q}$, and the value of $\ttt{A}$ against $\ttt{Q}_{r,n}$ is
    \begin{align*}
        \mfk{t}(\ttt{Q}_{r,n}|\ttt{A})=\int_{\mc{U}(rn)}\frac{1}{2}\sum_{\substack{b\in\{0,1\},x\in X\\m^{(1)},\ldots,m^{(Q)}\in M^{(2)}}}p(b|x,U)\Tr\squ{P_{x,m^{(1)},\ldots,m^{(Q)}}(U\sigma_{m_b^{(1)}}U^\ast\otimes\cdots\otimes U\sigma_{m_b^{(Q)}}U^\ast)}dU.
    \end{align*}
    Now, we bound the quantity 
    \begin{align*}
        D\coloneqq\abs*{\int \ \  \sum_{\mathclap{\substack{x,\\m^{(1)},\ldots,m^{(Q)}}}} \ \  p(0|x,U)\Tr\squ{P_{x,m^{(1)},\ldots,m^{(Q)}}(U\sigma_{m_0^{(1)}}U^\ast\otimes\cdots\otimes U\sigma_{m_0^{(Q)}}U^\ast-U\sigma_{m_1^{(1)}}U^\ast\otimes\cdots\otimes U\sigma_{m_1^{(Q)}}U^\ast)}dU}.
    \end{align*}
    As in \cref{thm:haar_random_encryption_scheme_t_copy_untelegraphable_indistinguishable}, the value $\mfk{t}(\ttt{Q}_{r,n}|\ttt{A})\leq\frac{1}{2}+\frac{1}{2}D$. Then, using the triangle inequality, we can upper bound
    \begin{align*}
        D&\leq\abs*{\int \ \  \sum_{\mathclap{\substack{x,\\m^{(1)},\ldots,m^{(Q)}}}} \ \  p(0|x,U)\parens*{\Tr\squ{P_{x,m^{(1)},\ldots,m^{(Q)}}(U\sigma_{m_0^{(1)}}U^\ast\otimes\cdots\otimes U\sigma_{m_0^{(Q)}}U^\ast)}-\frac{\Tr[P_{x,m^{(1)},\ldots,m^{(Q)}}]}{d^Q}}dU}\\
        &\qquad+\abs*{\int \ \  \sum_{\mathclap{\substack{x,\\m^{(1)},\ldots,m^{(Q)}}}} \ \  p(0|x,U)\parens*{\Tr\squ{P_{x,m^{(1)},\ldots,m^{(Q)}}(U\sigma_{m_1^{(1)}}U^\ast\otimes\cdots\otimes U\sigma_{m_1^{(Q)}}U^\ast)}-\frac{\Tr[P_{x,m^{(1)},\ldots,m^{(Q)}}]}{d^Q}}dU}.
    \end{align*}
    Applying Cauchy-Schwarz as in \cref{thm:haar_random_encryption_scheme_t_copy_untelegraphable_indistinguishable},
    \begin{align*}
        &\abs*{\int \ \  \sum_{\mathclap{\substack{x,\\m^{(1)},\ldots,m^{(Q)}}}} \ \  p(0|x,U)\parens*{\Tr\squ{P_{x,m^{(1)},\ldots,m^{(Q)}}(U\sigma_{m_b^{(1)}}U^\ast\otimes\cdots\otimes U\sigma_{m_b^{(Q)}}U^\ast)}-\frac{\Tr[P_{x,m^{(1)},\ldots,m^{(Q)}}]}{d^Q}}dU}\\
        &\quad\leq\hspace{-0.4cm}\sum_{\substack{x,\\m^{(1)},\ldots,m^{(Q)}}}\abs*{\int p(0|x,U)\parens*{\Tr\squ{P_{x,m^{(1)},\ldots,m^{(Q)}}(U\sigma_{m_b^{(1)}}U^\ast\otimes\cdots\otimes U\sigma_{m_b^{(Q)}}U^\ast)}-\frac{\Tr[P_{x,m^{(1)},\ldots,m^{(Q)}}]}{d^Q}}dU}\\
        &\quad\leq\hspace{-0.4cm}\sum_{\substack{x,\\m^{(1)},\ldots,m^{(Q)}}}\sqrt{\int p(0|x,U)^2dU\cdot\int\parens*{\Tr\squ{P_{x,m^{(1)},\ldots,m^{(Q)}}(U\sigma_{m_b^{(1)}}U^\ast\otimes\cdots\otimes U\sigma_{m_b^{(Q)}}U^\ast)}-\frac{\Tr[P_{x,m^{(1)},\ldots,m^{(Q)}}]}{d^Q}}^2dU}\\
        &\quad\leq\hspace{-0.4cm}\sum_{\substack{x,\\m^{(1)},\ldots,m^{(Q)}}}\sqrt{\int\parens*{\Tr\squ{P_{x,m^{(1)},\ldots,m^{(Q)}}(U\sigma_{m_b^{(1)}}U^\ast\otimes\cdots\otimes U\sigma_{m_b^{(Q)}}U^\ast)}-\frac{\Tr[P_{x,m^{(1)},\ldots,m^{(Q)}}]}{d^Q}}^2dU}.
    \end{align*}
    Fix $x,m^{(1)},\ldots,m^{(Q)}$ and write $P=P_{x,m^{(1)},\ldots,m^{(Q)}}$ and $\sigma=\sigma_{m_b^{(1)}}\otimes\cdots\otimes \sigma_{m_b^{(Q)}}$. Expanding,
    \begin{align*}
        \parens*{\Tr\squ*{PU^{\otimes Q}\sigma (U^\ast)^{\otimes Q}}-\frac{\Tr[P]}{d^Q}}^2&=\Tr\squ*{(P\otimes P)U^{\otimes 2Q}(\sigma\otimes\sigma) (U^\ast)^{\otimes 2Q}}-2\frac{\Tr[P]}{d^Q}\Tr\squ*{PU^{\otimes Q}\sigma (U^\ast)^{\otimes Q}}+\frac{\Tr[P]^2}{d^Q}.
    \end{align*}
    Taking the integral and using \cref{lem:different-states}, we get
    \begin{align}
    \begin{split}\label{eq:depends-on-haar}
        \int\parens*{\Tr\squ*{PU^{\otimes Q}\sigma (U^\ast)^{\otimes Q}}-\frac{\Tr[P]}{d^Q}}^2dU&\leq\frac{\Tr[P\otimes P]}{d^{2Q}}\parens*{1+\frac{7(2Q)^2}{r}}-2\frac{\Tr[P]}{d^Q}\frac{\Tr[P]}{d^Q}\parens*{1-\frac{7Q^2}{r}}+\frac{\Tr[P]^2}{d^Q}\\
        &\leq\frac{\Tr[P]^2}{d^{2Q}}42\frac{Q^2}{r}.
    \end{split}
    \end{align}
    Using this,
    \begin{align*}
        &\sum_{\substack{x,\\m^{(1)},\ldots,m^{(Q)}}}\sqrt{\int\parens*{\Tr\squ{P_{x,m^{(1)},\ldots,m^{(Q)}}(U\sigma_{m_b^{(1)}}U^\ast\otimes\cdots\otimes U\sigma_{m_b^{(Q)}}U^\ast)}-\frac{\Tr[P_{x,m^{(1)},\ldots,m^{(Q)}}]}{d^Q}}^2dU}\\
        &\qquad\leq\sum_{\substack{x,\\m^{(1)},\ldots,m^{(Q)}}}\frac{\Tr[P_{x,m^{(1)},\ldots,m^{(Q)}}]}{d^Q}\sqrt{42}\frac{Q}{\sqrt{r}}\leq 7\frac{Q}{\sqrt{r}}.
    \end{align*}
    Hence, $D\leq14\frac{Q}{\sqrt{r}}$, giving the result.
\end{proof}

\begin{corollary} \label{cor:collusion}
    \begin{enumerate}[(i)]
        \item For any polynomial $p$, there exists an efficient QECM that is collusion-resistant untelegraphable-indistinguishable secure against any attack $\{\ttt{A}_\lambda\}_\lambda$ with number of rounds bounded as $Q^\lambda\leq p(\lambda)$.

        \item If pseudorandom unitaries exist, there exists an efficient QECM that is everlasting collusion-resistant untelegraphable-indistinguishable secure.
    \end{enumerate}
\end{corollary}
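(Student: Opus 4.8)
The plan is to instantiate the rank-$r$ Haar-measure scheme with an efficiently implementable substitute for the Haar unitary and to check that the estimate of \cref{thm:collusion} survives the substitution: a finite-order unitary design for part~(i), and a pseudorandom unitary for part~(ii).

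For part~(i), fix the polynomial $p$, set $t(\lambda)=2p(\lambda)$, and take $\ttt{Q}_\lambda$ to be the rank-$r(\lambda)$ encryption of $n=2$ messages with $r(\lambda)=2^{\lambda}$ (so the key unitary acts on $\lambda+1$ qubits and $d=2^{\lambda+1}$), with the single change that the key is a seed of a unitary $t(\lambda)$-design and $\Enc,\Dec$ apply the associated unitary $V(\text{seed})$ and its inverse in place of a Haar-random $U$. By \cref{rem:efficient_construction} and \cite{MPSY24}, such a design (exact, or $2^{-\lambda}$-approximate in diamond norm) can be sampled and implemented in time $\mathrm{poly}(\lambda)$, so $\ttt{Q}_\lambda$ is an efficient QECM, and its correctness is exact since $V(\text{seed})$ is a genuine unitary. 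The key point is that the entire argument of \cref{thm:collusion} -- which proceeds via \cref{lem:different-states} and \cref{lem:weingarten_calculus_bounds} -- only ever evaluates Haar averages of expressions of the form $\Tr[\,\cdot\ U^{\otimes Q}(\cdots)(U^{\ast})^{\otimes Q}]$ and $\Tr[\,\cdot\ U^{\otimes 2Q}(\cdots)(U^{\ast})^{\otimes 2Q}]$ with the inserted operators bounded in $[0,1]$, together with the trivial bound $\int p(0|x,U)^2\,dU\le 1$; for an attack with $Q^\lambda\le p(\lambda)$ rounds these are moments of degree at most $t(\lambda)$, hence they are left exactly invariant (exact design), or perturbed by at most $2^{-\lambda}$ each with only polynomially many terms of bounded coefficient (approximate design), when $U$ is drawn from the design instead. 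Since the side conditions $k^2\le r$ and $d>\sqrt6\,k^{7/4}$ with $k=2Q^\lambda$ hold for $\lambda$ large, this gives $\mfk{t}(\ttt{Q}_\lambda|\ttt{A}_\lambda)\le\tfrac12+7Q^\lambda/\sqrt{r(\lambda)}+\negl(\lambda)\le\tfrac12+7p(\lambda)/2^{\lambda/2}+\negl(\lambda)=\tfrac12+\negl(\lambda)$ for every attack with at most $p(\lambda)$ rounds; note that this bound is information-theoretic and requires no efficiency hypothesis on $\ttt{A}_\lambda$, which is exactly the ``unconditional, polynomially-bounded rounds'' statement.

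For part~(ii), use the same scheme with the design replaced by a pseudorandom unitary $G_\lambda:K_\lambda\to\mc{U}(2^{\lambda})$: the key is a uniform seed $k$, $\Enc$ applies $G_\lambda(k)$, and $\Dec$ runs the circuit for $G_\lambda(k)$ in reverse, so the scheme is an efficient QECM with exact correctness (here $d=2^{\lambda}$, $n=2$, $r=2^{\lambda-1}$). Given a collusion attack $\ttt{A}_\lambda$ whose online operations $V^{i}$ and $P_x$ are efficient -- so that in particular $Q^\lambda\le\mathrm{poly}(\lambda)$ -- I would bound its value by the value obtained with the optimal (unbounded) receiver, which equals $\tfrac12+\tfrac12\,\mathbb{E}_W[\tilde\delta_W]$ where $\tilde\delta_W$ is the branch-distinguishing advantage of the efficient online pirate when the key unitary is $W$, and then argue that $\mathbb{E}_k[\tilde\delta_{G_\lambda(k)}]$ differs from $\mathbb{E}_U[\tilde\delta_U]$ by only a negligible amount. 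The latter quantity is $\le 7Q^\lambda/\sqrt{2^{\lambda-1}}=\negl(\lambda)$ by \cref{thm:collusion}, and a non-negligible difference would be witnessed by an efficient distinguisher that, given oracle and inverse-oracle access to $W$, prepares the ciphertexts $W\sigma_{m}W^{\ast}$, runs the efficient online attack, and estimates $\tilde\delta_W$, contradicting the pseudorandomness of $G_\lambda$. Because ``efficient online attack'' ranges over all polynomial $Q^\lambda$ while the single PRU scheme serves all of them at once, this yields everlasting collusion-resistant untelegraphable-indistinguishable security.

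The main obstacle is precisely this last reduction: the receiver in the everlasting model is unbounded, so one cannot simply run it inside an efficient PRU distinguisher, and one must instead show that the receiver's optimal advantage $\tilde\delta_W$ -- a trace distance between the two classical transcript distributions produced by the \emph{efficient} online part -- is a quantity an efficient oracle-access algorithm can approximate well enough to separate the pseudorandom regime from the Haar regime. Equivalently, the content to pin down is that an efficient online pirate cannot fashion a branch-correlated classical transcript against a pseudorandom $W$ any better than against a Haar-random $W$, while \cref{thm:collusion} shows that against Haar-random $W$ the transcript carries essentially no information about the branch; making this quantitative, with the estimation error controlled uniformly over $\ttt{A}_\lambda$, is where the real work lies.
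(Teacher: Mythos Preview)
Your part~(i) is correct and essentially matches the paper's argument: replace the Haar key by a $2p(\lambda)$-design (the paper uses random circuits from~\cite{Haf22}, you cite~\cite{MPSY24}), and observe that the proof of \cref{thm:collusion} only uses moments of order at most $2Q$, so the bound carries over verbatim.

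For part~(ii), you have correctly identified the obstacle but not its resolution, and the approach you sketch --- building an efficient PRU distinguisher that \emph{estimates} the unbounded receiver's advantage $\tilde\delta_W$ --- is the wrong direction. The paper's resolution is simpler and uses an observation you already made in part~(i): in the proof of \cref{thm:collusion}, the Cauchy--Schwarz step $\int p(0|x,U)^2\,dU\le 1$ eliminates the receiver's strategy $p(b|x,U)$ \emph{before} any property of the key distribution is invoked. After that step, the only place the Haar measure enters is Equation~\eqref{eq:depends-on-haar}, whose integrand $\Tr[P\,U^{\otimes Q}\sigma(U^\ast)^{\otimes Q}]$ is built entirely from the efficient online operations $V^i_{m_0,m_1}$, $P_x$ and is a polynomial of degree $2Q^\lambda=\mathrm{poly}(\lambda)$ in $U,U^\ast$. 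Hence replacing Haar by a PRU changes that expectation only negligibly, and the rest of the argument proceeds unchanged. The point is that you never need to simulate or approximate the unbounded receiver inside the distinguisher: the structure of the upper bound already removed it, and what remains is efficiently computable with oracle access to $U$.
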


This corollary strengthens some results of~\cite{CKNY24arxiv}. First, their construction of a collusion-resistant untelegraphable-indistinguishable secure relies on pseudorandom functions, which may be a stronger assumption that pseudorandom unitaries. Next, to construct the weaker notion of a QECM that is secure against attacks with the number of rounds bounded by a fixed polynomial,~\cite{CKNY24arxiv} require pseudorandom states whereas we do it unconditionally --- nevertheless their construction has succinct keys, where we require polynomial-length keys. Finally, we are able to achieve everlasting security under standard computational assumptions, whereas~\cite{CKNY24arxiv} require a quantum random oracle model.

\begin{proof}
    Note first that in \cref{thm:collusion}, only Equation~\eqref{eq:depends-on-haar} depends on the measure being the Haar measure. As such, if we replace the question distribution with a different distribution on the unitaries that has the same (or similar) behaviour in~\eqref{eq:depends-on-haar}, then we can get the same (or similar) upper bound on the value. We show the results for messages of a fixed bit length $\ell\in\N$, but they extend identically to messages of polynomially-varying length in $\lambda$.
    \begin{enumerate}[(i)]
        \item Let $\ttt{Q}_{q,\ell,d}$ be the QECM that is identical to $\ttt{Q}_{2^q,2^{\ell}}$ except that the key distribution is the uniform distribution on unitaries corresponding to circuits of depth $d$ on $n=\ell+q$ qubits. Due to~\cite{Haf22}, random circuits in depth $f(n,t)=O(nt^{5+o(1)})$ are $t$-designs. Let $\ttt{Q}_\lambda=\ttt{Q}_{\lambda,\ell,f(\lambda+\ell,2p(\lambda))}$. Then, $\{\ttt{Q}_\lambda\}_\lambda$ is an efficient QECM by construction. Since the distribution on keys is a $2p(\lambda)$-design, Equation~\eqref{eq:depends-on-haar} holds as long as $Q^\lambda\leq p(\lambda)$ and hence for any collection of attacks $\{\ttt{A}_\lambda\}_\lambda$ such that $Q^\lambda\leq p(\lambda)$, $\mfk{t}(\ttt{Q}_\lambda|\ttt{A}_\lambda)\leq\frac{1}{2}+7\frac{Q^\lambda}{\sqrt{2^\lambda}}=\frac{1}{2}+\negl(\lambda)$.

        \item Let $G_\lambda$ be a pseudorandom unitary in dimension $2^\lambda$. Let $\ttt{Q}_\lambda$ be the same as $\ttt{Q}_{2^\lambda,2^\ell}$ except that the key distribution if $G_\lambda(k)$ on uniform input. Then $\{\ttt{Q}_\lambda\}_\lambda$ is an efficient QECM. Let $\{\ttt{A}_\lambda\}_\lambda$ be a telegraphing-distinguishing attack with collusion against $\{\ttt{Q}_\lambda\}_\lambda$ such that the measurements $V^{\lambda,i}_{m_0,m_1}$ and $P^\lambda_x$ are implemented efficiently. In \cref{thm:collusion}, everything before Equation~\eqref{eq:depends-on-haar} does not depend on the key distribution. In particular, we can still find an upper bound independent of $p(b|x,U)$ since the key distribution need not look Haar-random to the second-stage adversary. Hence sampling  $p(b|x,U)$ need not be efficient, which step allows for everlasting security. Also, replacing the Haar measure with the pseudorandom unitary, the upper bound in \eqref{eq:depends-on-haar} will increase at most negligibly because the integrand depends polynomially on the unitary. Therefore, we get the upper bound $\mfk{t}(\ttt{Q}_\lambda|\ttt{A}_\lambda)\leq\frac{1}{2}+7\frac{Q^\lambda}{\sqrt{2^\lambda}}+\negl(\lambda)=\frac{1}{2}+\negl(\lambda)$
    \end{enumerate}
\end{proof}
\section{Untelegraphable encryption as a limit of 
uncloneable encryption} \label{sec:UTE_as_UE_limit}

In this section, we show that the cloning value of a QECM tends to the telegraphing value as the number of adversaries increases. To do so, we use information theoretic tools from work of Brand\~{a}o~\cite{Bra08}, where they were used to study $\tsf{QMA}(2)$ proof systems.

\begin{definition}
    The \emph{entanglement entropy} of a pure state $\psi_{AB}=\ketbra{\psi}_{AB}$ is $E(\psi)=S(\psi_A)=S(\psi_B)$.

    The \emph{entanglement of formation} of a state $\rho_{AB}$ is $E_F(\rho)=\min_{\{p_i,\ket{\psi_i}\}_i}\sum_ip_iE(\psi_i)$, where the minimisation is over ensembles of pure states such that $\rho=\sum_ip_i\ketbra{\psi_i}$.

    The \emph{Henderson-Vedral measure} of a state $\rho_{AB}$ is $C^\leftarrow(\rho)=\max_{\{p_i,\rho_i\}}S(\rho_A)-\sum_ip_iS(\rho_i)$, where the maximisation is over ensembles such that there exists a POVM $\{P_i\}_i$ on register $B$ such that $p_i=\Tr(P_i\rho_B)$ and $\rho_i=\frac{1}{p_i}\Tr_B((I\otimes P_i)\rho_{AB})$.

    The \emph{convex roof of the Henderson-Vedral measure} of a state $\rho_{AB}$ is $G^{\leftarrow}(\rho)=\min_{\{p_i,\rho_i\}}\sum_ip_iC^{\leftarrow}(\rho_i)$, where the minimisation is over ensembles of states such that $\rho=\sum_ip_i\rho_i$.
\end{definition}

\begin{lemma}[Yang's monogamy inequality~\cite{Yan06}]
Let $A,B_1,\ldots,B_N$ be registers and let $\rho_{AB_1\cdots B_N}$ be a quantum state. Then,    
$$E_{F}(\rho_{A;B_1\cdots B_N})\geq\sum_{i=1}^NG^{\leftarrow}(\rho_{AB_i}).$$
\end{lemma}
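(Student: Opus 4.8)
The plan is to induct on $N$, using the Koashi--Winter identity as the engine and a convexity argument to pass from pure states to mixed states. Recall Koashi--Winter: for any \emph{pure} tripartite state $\ket{\psi}_{XYZ}$ one has $E_F(\rho_{XY}) + C^{\leftarrow}(\rho_{XZ}) = S(\rho_X)$, where the measurement defining $C^{\leftarrow}(\rho_{XZ})$ is performed on $Z$. This is the one substantial input; its proof runs through the Hughston--Jozsa--Wootters correspondence, which identifies ensemble decompositions of $\rho_{XY}$ with measurements on the purifying system $Z$, so that an optimal $E_F$-decomposition of $\rho_{XY}$ is realized by a measurement on $Z$ witnessing the matching value of $C^{\leftarrow}(\rho_{XZ})$, and conversely. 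Two elementary facts are also needed: $G^{\leftarrow}$ is convex in its argument (immediate from its definition as a minimum over ensemble decompositions), and $G^{\leftarrow}(\sigma) \le C^{\leftarrow}(\sigma)$ for every bipartite $\sigma$ (take the one-element ensemble $\sigma = \sigma$).

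Base case $N = 1$: I must show $E_F(\rho_{AB_1}) \ge G^{\leftarrow}(\rho_{AB_1})$. For a pure $\ket{\phi}_{AB_1}$ one has $G^{\leftarrow}(\phi_{AB_1}) \le C^{\leftarrow}(\phi_{AB_1}) \le S(\phi_A) = E(\phi)$, since the one-way classical correlation never exceeds the marginal entropy; expanding an optimal pure decomposition of $\rho_{AB_1}$ and using convexity of $G^{\leftarrow}$ then gives the claim.

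Inductive step: assume the statement for $N-1$ systems and fix any $\rho_{AB_1\cdots B_N}$. Let $\rho_{AB_1\cdots B_N} = \sum_k q_k \ketbra{\phi_k}$ be an optimal decomposition witnessing $E_F(\rho_{A;B_1\cdots B_N}) = \sum_k q_k S((\phi_k)_A)$. Fix $k$ and view $\ket{\phi_k}$ as a pure state on the three systems $A$, $B_1\cdots B_{N-1}$, and $B_N$. Koashi--Winter gives $S((\phi_k)_A) = E_F((\phi_k)_{A;B_1\cdots B_{N-1}}) + C^{\leftarrow}((\phi_k)_{AB_N})$, with the measurement on $B_N$, which matches the arrow convention in $G^{\leftarrow}(\rho_{AB_N})$. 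Bounding the second term by $C^{\leftarrow} \ge G^{\leftarrow}$ and the first by the induction hypothesis applied to the (mixed) state $(\phi_k)_{A;B_1\cdots B_{N-1}}$ yields $S((\phi_k)_A) \ge \sum_{i=1}^{N-1} G^{\leftarrow}((\phi_k)_{AB_i}) + G^{\leftarrow}((\phi_k)_{AB_N})$. Averaging over $k$ with weights $q_k$, exchanging the two sums, and applying convexity of $G^{\leftarrow}$ to $\rho_{AB_i} = \sum_k q_k (\phi_k)_{AB_i}$ gives $E_F(\rho_{A;B_1\cdots B_N}) \ge \sum_{i=1}^N G^{\leftarrow}(\rho_{AB_i})$, closing the induction.

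The main obstacle is the Koashi--Winter identity itself: once it is in hand the rest is bookkeeping, but establishing it from scratch requires the steering/HJW correspondence between ensemble decompositions and purifying-system measurements, together with the observation that the optimizers on the two sides can be taken to coincide. A secondary pitfall is keeping the arrow in $C^{\leftarrow}$ straight: applied to the grouping $(A,\ B_1\cdots B_{N-1},\ B_N)$, Koashi--Winter must produce the $C^{\leftarrow}$ term that measures $B_N$, so that comparing it with $G^{\leftarrow}(\rho_{AB_N})$ (which also measures $B_N$) is the correct inequality and not its reverse.
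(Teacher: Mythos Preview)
The paper does not prove this lemma; it is stated as a citation to Yang~\cite{Yan06} and used as a black box in the proof of \cref{thm:UTE_UE_asymptotic_equivalence}. Your proposal is a correct reconstruction of the standard argument---induction on $N$ with the Koashi--Winter identity supplying the inductive step and convexity of $G^{\leftarrow}$ handling the averaging---and is essentially the proof given in the original reference.
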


\begin{theorem} \label{thm:UTE_UE_asymptotic_equivalence}
    Let $\ttt{Q}=(M,K,\pi,H,\{\sigma^k_m\}_{k,m})$ be a QECM, and suppose $\eta\geq\int_K\norm{\sigma^k_m}d\pi(k)$ for all $m\in M$. Then, writing $d=\dim H$
    \begin{align*}
        \mfk{c}^N_{1\rightarrow s}(\ttt{Q})\leq\mfk{c}_{1\rightarrow 1}^N(\ttt{Q}|\scr{M})+3\eta d\parens*{\frac{\log d}{N^2s}}^{1/3}.
    \end{align*}
\end{theorem}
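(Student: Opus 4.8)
The plan is to convert an arbitrary $1$-to-$s$ $N$-message cloning attack $\ttt{A}=(M_0,\{B_i\},\{P^{i,k}_m\},\Phi)$ against $\ttt{Q}$ into a telegraphing attack with nearly the same value, losing only the stated additive error; taking the supremum over $\ttt{A}$ then gives the bound. \textbf{Step 1 (restrict to one receiver).} Since $0\le P^{j,k}_m\le I$, for any fixed $i$ we may discard all other receivers' POVM elements and bound $\mfk{c}^N_{1\rightarrow s}(\ttt{Q}\mid\ttt{A})\le\int_K\tfrac1N\sum_{m\in M_0}\Tr[P^{i,k}_m\,\Phi_i(\sigma^k_m)]\,d\pi(k)$, where $\Phi_i$ is the channel onto $B_i$ obtained by discarding the other output registers. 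So it suffices to find one index $i^\ast$ for which $\Phi_{i^\ast}$ behaves, against the test operators $(\sigma^k_m)^{\T}\otimes P^{i^\ast,k}_m$ that actually occur, like an entanglement-breaking channel: by \cref{lem:tg-char}, any attack routed through an entanglement-breaking channel has value at most $\mfk{c}^N_{1\rightarrow 1}(\ttt{Q}\mid\scr{M})$.

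\textbf{Step 2 (monogamy forces a low-entanglement receiver).} Let $H'$ be an auxiliary copy of $H$, $d=\dim H$, and consider the normalised Choi state $J=(\id_{H'}\otimes\Phi)\big(\tfrac1d\sum_{jl}\ketbra{jj}{ll}\big)$ on $H'\otimes B_0\otimes\cdots\otimes B_{s-1}$. Trace preservation gives $J_{H'}=I/d$, so $E_F(J_{H';B_0\cdots B_{s-1}})\le S(J_{H'})=\log d$. Yang's monogamy inequality then yields $\sum_{i\in[s]}G^{\leftarrow}(J_{H'B_i})\le\log d$, hence there exists $i^\ast$ with $G^{\leftarrow}(J_{H'B_{i^\ast}})\le\tfrac{\log d}{s}$; note that $J_{H'B_{i^\ast}}$ is exactly the normalised Choi state of $\Phi_{i^\ast}$. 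Because this whole construction is built from the key-independent channel $\Phi$ (and not from the ciphertexts), $i^\ast$ and the resulting telegraphing channel do not depend on the key, which is essential for the output to be a legitimate telegraphing attack.

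\textbf{Step 3 (approximate by an entanglement-breaking channel and propagate to the value).} Using $G^{\leftarrow}(\rho)=0\iff\rho$ separable together with a quantitative stability estimate for the convex-roof Henderson--Vedral measure (in the spirit of Brand\~ao's analysis of $\mathsf{QMA}(2)$), $G^{\leftarrow}(J_{H'B_{i^\ast}})\le\varepsilon$ produces a \emph{measurement} channel $\Lambda$ on $B_{i^\ast}$ such that $\hat J:=(\id\otimes\Lambda)(J_{H'B_{i^\ast}})$ is separable; since $\Lambda$ is trace preserving, $\hat J$ has the same $H'$-marginal $I/d$, so $\hat J$ is the Choi state of a genuine entanglement-breaking channel $\tilde\Phi:=\Lambda\circ\Phi_{i^\ast}$, and $J_{H'B_{i^\ast}}-\hat J$ is small when tested against operators of the form (state on $H'$)$\,\otimes\,$(POVM element on $B_{i^\ast}$). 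Writing $\Tr[P^{i^\ast,k}_m\Phi_{i^\ast}(\sigma^k_m)]=d\,\Tr[((\sigma^k_m)^{\T}\otimes P^{i^\ast,k}_m)\,J_{H'B_{i^\ast}}]$ and splitting $J_{H'B_{i^\ast}}=\hat J+(J_{H'B_{i^\ast}}-\hat J)$, the $\hat J$-term sums to $\int_K\tfrac1N\sum_m\Tr[P^{i^\ast,k}_m\tilde\Phi(\sigma^k_m)]\,d\pi(k)$, which is the value of the telegraphing attack with channel $\tilde\Phi\in\scr{M}$ and receiver POVM $\{P^{i^\ast,k}_m\}$, hence at most $\mfk{c}^N_{1\rightarrow 1}(\ttt{Q}\mid\scr{M})$; and the remaining term is at most $d$ times $\tfrac1N\sum_m\int_K\|\sigma^k_m\|\,d\pi(k)\le\eta$ times the (measured-type) distance from the stability estimate. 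Choosing the free precision parameter in that estimate to balance the two contributions, and tracking the dependence on the $N$ messages tested against, gives the loss $3\eta d\big(\tfrac{\log d}{N^2 s}\big)^{1/3}$, and combining with Step 1 finishes the proof.

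\textbf{Main obstacle.} The crux is Step 3: making precise how a small convex-roof Henderson--Vedral measure forces the single-receiver marginal channel to be indistinguishable --- against the specific product test operators occurring in the game --- from an entanglement-breaking channel, while (i) keeping the approximant's $H'$-marginal exact so that it really is an entanglement-breaking channel to which \cref{lem:tg-char} applies, and (ii) balancing the norm-conversion losses and the $\eta d$ Choi-normalisation factor so as to land on the $\mathcal{O}(s^{-1/3})$ rate rather than a weaker one. Everything else (Steps 1--2 and the algebra of Step 3) is routine given Yang's monogamy inequality and \cref{lem:tg-char}.
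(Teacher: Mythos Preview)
Your overall architecture matches the paper exactly: pass to the Choi picture, use Yang's monogamy to force one receiver's bipartite marginal to have small $G^{\leftarrow}$, replace that marginal by a separable Choi state (hence an entanglement-breaking channel), and invoke \cref{lem:tg-char}. One cosmetic difference: rather than selecting a single $i^\ast$ by averaging, the paper first embeds all $B_i$ into a common $B$ and labels/symmetrises so that all $\rho_{AB_i}$ coincide with one $\rho_{AB}$; Yang's inequality then gives $G^{\leftarrow}(\rho_{AB})\le(\log d)/s$ directly. Your averaging variant is equally valid.

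The real gap is Step~3, which you yourself flag. Your proposed mechanism --- a measurement channel $\Lambda$ on $B_{i^\ast}$ with $\hat J=(\id\otimes\Lambda)J$ --- is not how the paper proceeds, and it is not clear such a $\Lambda$ with the needed closeness exists; the ensemble realising the convex roof of $C^{\leftarrow}$ need not arise from a measurement on $B$. What the paper actually does is this: let $\{p_i,\rho_i\}$ be the optimal ensemble in the definition of $G^{\leftarrow}(\rho_{AB})$, and set $\tau_{AB}=\sum_i p_i\,\rho_{i,A}\otimes\rho_{i,B}$. This is separable with $\tau_A=\rho_A=I/d$, so it is the Choi state of a genuine entanglement-breaking channel $\Psi$ (no $\Lambda$ needed). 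To bound $|\mfk{c}(\tilde{\ttt A})-\mfk{c}(\ttt B)|$, split the index set by Markov: with $I=\{i:C^{\leftarrow}(\rho_i)\ge\varepsilon\}$ one has $\sum_{i\in I}p_i\le(\log d)/(s\varepsilon)$, contributing at most $2\eta d\cdot(\log d)/(s\varepsilon)$. For $i\notin I$, the key observation is that $C^{\leftarrow}(\rho_i)<\varepsilon$ upper-bounds the Holevo quantity of the ensemble obtained by applying the \emph{attack's own} POVM $\{P^k_m\}_m$ on $B$, which equals a relative entropy; Pinsker then gives $\sum_m p_{ikm}\|\rho_{ikm}-\rho_{i,A}\|_{\Tr}\le\sqrt{\varepsilon/2}$. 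This yields the second term $\eta d\sqrt{\varepsilon/2}/N$, and optimising $\varepsilon=(4\sqrt 2\,N\log d/s)^{2/3}$ gives the stated $3\eta d(\log d/(N^2 s))^{1/3}$. So the ``stability estimate'' you allude to is precisely Markov-on-the-ensemble plus Pinsker applied to the specific attack measurement --- once you see that, Step~3 is short.
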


From here, it is direct to see that $\lim_{s\rightarrow\infty}\mfk{c}^N_{1\rightarrow s}(\ttt{Q})=\mfk{c}^N_{1\rightarrow 1}(\ttt{Q}|\scr{M})$. As a corollary, we also find that if $\norm{\sigma^k_m}\leq\eta$ for all $k,m$, then $\mfk{c}^N_{t\rightarrow s}(\ttt{Q})\leq\mfk{c}_{t\rightarrow 1}^N(\ttt{Q}|\scr{M})+3(\eta d)^t\parens*{\frac{t\log d}{N^2s}}^{1/3}$. For example, in the case of $\ttt{Q}=\ttt{Q}_{r,n}$ we can take $\eta=\frac{1}{r}$, so $$\mfk{c}^N_{t\rightarrow s}(\ttt{Q}_{r,n})\leq\mfk{c}_{t\rightarrow 1}^N(\ttt{Q}_{r,n}|\scr{M})+3n^t\parens*{\frac{t\log d}{N^2s}}^{1/3}\leq\frac{1}{N}+\frac{14 t}{N\sqrt{r}}+3n^t\parens*{\frac{t\log d}{N^2s}}^{1/3}.$$

\begin{proof}
Let $\ttt{A}=(M_0,\{B_i\}_i,\{P^{i,k}_m\}_{i,k,m},\Phi)$ be a $1$-to-$s$ $N$-message cloning attack against $\ttt{Q}$. First, we can write the cloning probability in an entanglement-based picture:
\begin{align*}
    \mfk{c}^N_{1\rightarrow s}(\ttt{Q}|\ttt{A})&=\int_K\frac{1}{N}\sum_{m\in M_0}\Tr\squ{(P^{1,k}_m\otimes\cdots\otimes P^{s,k}_m)\Phi(\sigma^k_m)}d\pi(k)\\
    &=\int_K\frac{1}{N}\sum_{m\in M_0}\sum_{i,j=1}^d\Tr\squ{(P^{1,k}_m\otimes\cdots\otimes P^{s,k}_m)\Phi(\ketbra{i}\sigma^k_m\ketbra{j})}d\pi(k)\\
    &=\int_K\frac{1}{N}\sum_{m\in M_0}\sum_{i,j=1}^d\Tr\squ{(\bar{\sigma}^k_m\otimes P^{1,k}_m\otimes\cdots\otimes P^{s,k}_m)(\ketbra{i}{j}\otimes\Phi(\ketbra{i}{j}))}d\pi(k)\\
    &=\int_K\frac{1}{N}\sum_{m\in M_0}\Tr\squ{(d\bar{\sigma}^k_m\otimes P^{1,k}_m\otimes\cdots\otimes P^{s,k}_m)J(\Phi)}d\pi(k),
\end{align*}
where $J(\Phi)$ is the Choi state of $\Phi$. By embedding into larger Hilbert spaces, we can assume $B_i=B$ for all $i$. Let $\rho_{AB^s}=\frac{1}{s!}\sum_{\tau\in \mfk{S}_s}(I\otimes V_B(\tau))J(\Phi)(I\otimes V_B(\tau))^\ast\otimes \ketbra{\tau(1)\cdots\tau(s)}$ and let $P^k_m=\sum_{i}P^{i,k}_m\otimes\ketbra{i}$. This achieves a labelled symmetrisation of the adversaries' registers, so

\begin{align*}
    \mfk{c}^N_{1\rightarrow s}(\ttt{Q}|\ttt{A})&=\int_K\frac{1}{N}\sum_{m\in M_0}\Tr\squ{(d\bar{\sigma}^k_m\otimes (P^k_m)^{\otimes s})\rho_{AB^s}}d\pi(k).
\end{align*}

Now, we know that $E_F(\rho_{A;B^s})\leq\log d$, where $d=\dim H$ and $\rho$ is symetric under permutations of the $B$ registers, so $\log d\geq sG^{\leftarrow}(\rho_{AB})$. Let $\{p_i,\rho_i\}_i$ be the optimising ensemble in the definition of the convex roof of the Henderson-Vedral measure, so $\sum_ip_i C^\leftarrow(\rho_i)\leq\frac{\log d}{s}$. Fix $\varepsilon>0$ and let $I=\set*{i}{C^\leftarrow(\rho_i)\geq\varepsilon}$. Then,
$$\frac{\log d}{s}\geq\sum_{i\in I}p_i\varepsilon.$$
Else, if $i\notin I$, $C^\leftarrow(\rho_i)\leq\varepsilon$. Consider the ensemble $\{p_{ikm},\rho_{ikm}\}_m$ induced by the measurement $P^k_m$. We must have $S(\rho_{i,A})-\sum_mp_{ikm}S(\rho_{ikm})\leq\varepsilon$. This is equal to the divergence $D(\sum_mp_{ikm}\rho_{ikm}\otimes\ketbra{m}||\rho_{i,A}\otimes\sum_mp_{ikm}\ketbra{m})$, so Pinsker's inequality implies $\sum_mp_{ikm}\norm{\rho_{i,A}-\rho_{ikm}}_{\Tr}\leq\sqrt{\frac{\varepsilon}{2}}$. Now, let $\tau_{AB}=\sum_ip_i\rho_{i,A}\otimes\rho_{i,B}$, which is a Choi state of some channel $\Psi$, and let $\ttt{B}=(M_0,B,\{P^k_m\}_{k,m},\Psi)$ be a $1$-to-$1$ $N$-message cloning attack against $\ttt{Q}$. Since $\Psi$ is an entanglement-breaking channel, $\mfk{c}^{N}_{1\rightarrow 1}(\ttt{Q}|\ttt{B})\leq\mfk{c}^{N}_{1\rightarrow 1}(\ttt{Q}|\scr{M})$. On the other hand, let $\tilde{\ttt{A}}=(M_0,B,\{P^k_m\},\Phi)$ be a $1$-to-$1$ $N$-message cloning attack against $\ttt{Q}$. By construction $\mfk{c}^{N}_{1\rightarrow s}(\ttt{Q}|\ttt{A})\leq\mfk{c}^{N}_{1\rightarrow 1}(\ttt{Q}|\tilde{\ttt{A}})$, and we also have
\begin{align*}
    \abs*{\mfk{c}^N_{1\rightarrow 1}(\ttt{Q}|\tilde{\ttt{A}})-\mfk{c}^{N}_{1\rightarrow 1}(\ttt{Q}|\ttt{B})}&\leq\int_K\frac{1}{N}\sum_{m\in M_0}\abs*{\Tr\squ{(d\bar{\sigma}^k_m\otimes P^k_m)\rho_{AB}}-\Tr\squ{(d\bar{\sigma}^k_m\otimes P^k_m)\tau_{AB}}}d\pi(k)\\
    &\leq\int_K\frac{1}{N}\sum_{m\in M_0}\sum_ip_i\abs*{\Tr\squ{(d\bar{\sigma}^k_m\otimes P^k_m)\rho_i}-\Tr\squ{(d\bar{\sigma}^k_m\otimes P^k_m)(\rho_{i,A}\otimes\rho_{i,B})}}d\pi(k)\\
    &=\int_K\frac{1}{N}\sum_ip_i\sum_{m\in M_0}p_{ikm}\abs*{\Tr\squ{d\bar{\sigma}^k_m\rho_{ikm}}-\Tr\squ{d\bar{\sigma}^k_m\rho_{i,A}}}d\pi(k)\\
    &=\sum_ip_i\int_K\frac{d}{N}\sum_{m\in M_0}p_{ikm}\norm{\sigma^k_m}\norm{\rho_{ikm}-\rho_{i,A}}_{\Tr}d\pi(k)\\
    &\leq\sum_{i\in I}p_i(2\eta d)+\sum_{i\notin I}p_i\frac{\eta d}{N}\sqrt{\frac{\varepsilon}{2}}\\
    &\leq\frac{2\eta d\log d}{s\varepsilon}+\frac{\eta d}{N}\sqrt{\frac{\varepsilon}{2}}.
\end{align*}
Taking $\varepsilon=\parens*{4\sqrt{2}\frac{N\log d}{s}}^{2/3}$ gives the upper bound $3\eta d\parens*{\frac{\log d}{N^2s}}^{1/3}$. Hence, we get that
\begin{align*}
    \mfk{c}^{N}_{1\rightarrow 1}(\ttt{Q}|\scr{M})\geq\mfk{c}^{N}_{1\rightarrow 1}(\ttt{Q}|\ttt{B})\geq\mfk{c}^{N}_{1\rightarrow 1}(\ttt{Q}|\tilde{\ttt{A}})-3\eta d\parens*{\frac{\log d}{N^2s}}^{1/3}\geq\mfk{c}^{N}_{1\rightarrow s}(\ttt{Q}|\ttt{A})-3\eta d\parens*{\frac{\log d}{N^2s}}^{1/3}.
\end{align*}
Taking the supremum over attacks $\ttt{A}$ gives the wanted result.
\end{proof}
\section{Lower bounds}\label{sec:lowerbounds}

In this section, we study lower bounds on telegraphing attacks against the Haar measure encryption.

\subsection{Encryption of a bit}

Consider the following telegraphing attack for the rank-$r$ Haar-measure encryption of a bit: $\ttt{A}=(B,\{P^k_m\},\Phi)$, where $B=H=\C^{[2r]}$, $\Phi$ is measurement in the computational basis $\Phi(\rho)=\sum_{i=0}^{2r-1}\ketbra{i}\rho\ketbra{i}$, and let $P_{b}^U=\sum_{i\in G_{b,U}}\ketbra{i}{i}$, where $G_{0,U}=\set*{i\in[2r]}{\braket{i}{U\sigma_0U^\ast|i}\geq\braket{i}{U\sigma_{1}U^\ast|i}}$ and $G_{1,U}=[2r]\backslash G_{0,U}$.

\begin{proposition}\label{prop:lowerbound}
    The winning probability of the attack $\ttt{A}$ above is
    \begin{align*}
        \mfk{c}_{1\rightarrow 1}(\ttt{Q}_{r,2}|\ttt{A})=\frac{1}{2}+\frac{1}{2^{2r+1}}\binom{d}{r}.
    \end{align*}
\end{proposition}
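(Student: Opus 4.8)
The plan is to unfold the attack, reduce the Haar integral to a one--dimensional expectation over a Beta--distributed random variable, and evaluate that expectation in closed form; throughout write $d=2r$, so the codespace is $\C^d$.

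\textbf{Rewriting the value as an integral of a sum of maxima.} Since $\Phi$ is the computational--basis measurement, $\Phi(U\sigma_bU^\ast)=\sum_{i=0}^{d-1}\bra{i}U\sigma_bU^\ast\ket{i}\,\ketbra{i}$, so $\Tr\big[P^U_b\,\Phi(U\sigma_bU^\ast)\big]=\sum_{i\in G_{b,U}}\bra{i}U\sigma_bU^\ast\ket{i}$. Because $G_{1,U}=[d]\setminus G_{0,U}$ and $G_{0,U}$ collects exactly the indices at which the $\sigma_0$--diagonal entry dominates, summing over $b$ gives $\sum_{b=0}^1\Tr\big[P^U_b\,\Phi(U\sigma_bU^\ast)\big]=\sum_{i=0}^{d-1}\max\big\{\bra{i}U\sigma_0U^\ast\ket{i},\,\bra{i}U\sigma_1U^\ast\ket{i}\big\}$. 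Using $\max\{a,b\}=\tfrac{a+b}{2}+\tfrac{|a-b|}{2}$ together with $\sum_i\bra{i}U\sigma_bU^\ast\ket{i}=\Tr[\sigma_b]=1$, this equals $1+\tfrac12\sum_i\big|\bra{i}U(\sigma_0-\sigma_1)U^\ast\ket{i}\big|$, hence $\mfk{c}_{1\rightarrow 1}(\ttt{Q}_{r,2}\,|\,\ttt{A})=\tfrac12+\tfrac14\int\sum_i\big|\bra{i}U(\sigma_0-\sigma_1)U^\ast\ket{i}\big|\,dU$.

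\textbf{Reduction to a Beta variable.} Write $\sigma_0-\sigma_1=\tfrac1r(\Pi_0-\Pi_1)$, where $\Pi_b=\ketbra{b}\otimes I_r$ are orthogonal rank--$r$ projectors with $\Pi_0+\Pi_1=I_d$; then $\bra{i}U(\sigma_0-\sigma_1)U^\ast\ket{i}=\tfrac1r\big(2\bra{i}U\Pi_0U^\ast\ket{i}-1\big)$. For each fixed $i$ the vector $U^\ast\ket{i}$ is Haar--distributed on the unit sphere of $\C^d$ (by invariance of the Haar measure), so $\bra{i}U\Pi_0U^\ast\ket{i}=\norm{\Pi_0U^\ast\ket{i}}^2$ has the law of $\bra{v}\Pi_0\ket{v}$ for a Haar--random unit vector $v$; this is the standard fact that the squared overlap of a random unit vector in $\C^d$ with a fixed $r$--dimensional subspace follows $\mathrm{Beta}(r,d-r)=\mathrm{Beta}(r,r)$ (for instance, write $v=g/\norm{g}$ for a standard complex Gaussian vector $g$ and recognise the overlap as $A/(A+B)$ with $A,B$ independent $\mathrm{Gamma}(r,1)$). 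By unitary invariance all $d=2r$ summands contribute equally, so $\int\sum_i\big|\bra{i}U(\sigma_0-\sigma_1)U^\ast\ket{i}\big|\,dU=\tfrac1r\cdot 2r\cdot\mathbb{E}\big[\,|2X-1|\,\big]=2\,\mathbb{E}\big[\,|2X-1|\,\big]$ with $X\sim\mathrm{Beta}(r,r)$, and therefore $\mfk{c}_{1\rightarrow 1}(\ttt{Q}_{r,2}\,|\,\ttt{A})=\tfrac12+\tfrac12\,\mathbb{E}\big[\,|2X-1|\,\big]$.

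\textbf{Evaluating the moment and concluding.} It remains to compute $\mathbb{E}\big[\,|2X-1|\,\big]=\tfrac1{B(r,r)}\int_0^1|2x-1|\,x^{r-1}(1-x)^{r-1}\,dx$, with $B(r,r)=((r-1)!)^2/(2r-1)!$. The integrand is symmetric about $x=\tfrac12$, so $\int_0^1|2x-1|\,x^{r-1}(1-x)^{r-1}\,dx=2\int_{1/2}^1(2x-1)x^{r-1}(1-x)^{r-1}\,dx$, and the substitution $x=\tfrac{1+u}{2}$ turns this into $\tfrac{4}{2^{2r}}\int_0^1 u(1-u^2)^{r-1}\,du=\tfrac{4}{2^{2r}}\cdot\tfrac1{2r}$. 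Dividing by $B(r,r)$ and simplifying yields $\mathbb{E}\big[\,|2X-1|\,\big]=\tfrac{(2r)!}{2^{2r}(r!)^2}=\binom{2r}{r}\big/2^{2r}$, whence $\mfk{c}_{1\rightarrow 1}(\ttt{Q}_{r,2}\,|\,\ttt{A})=\tfrac12+\tfrac1{2^{2r+1}}\binom{2r}{r}=\tfrac12+\tfrac1{2^{2r+1}}\binom{d}{r}$, as claimed. The one step needing genuine care is the reduction: pinning down the \emph{exact} $\mathrm{Beta}(r,r)$ law of $\bra{v}\Pi_0\ket{v}$ and verifying, by invariance of the Haar measure, that the per--index integrals coincide so that the sum over $i$ is just a factor of $d$; everything else is a short explicit computation.
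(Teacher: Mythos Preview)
Your proof is correct, and it takes a genuinely shorter route than the paper's. Both arguments start by reducing the value to an integral of $\max_b\bra{i}U\sigma_bU^\ast\ket{i}$ over the Haar measure, but then diverge. The paper passes to the real sphere $S^{2d-1}$, lifts to a Gaussian integral on $\R^{2d}$, splits into two radial parts, and ends up with a double integral involving an incomplete gamma function; the final closed form is then extracted via several identities for the regularised incomplete Beta function $I_p(a,b)$. You instead apply $\max\{a,b\}=\tfrac{a+b}{2}+\tfrac{|a-b|}{2}$ to isolate the $\tfrac12$ immediately, recognise $\bra{i}U\Pi_0U^\ast\ket{i}$ as $\mathrm{Beta}(r,r)$ by the standard Gaussian-ratio argument, and compute $\mathbb{E}|2X-1|$ in one line with the substitution $x=\tfrac{1+u}{2}$. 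Your approach is more elementary and avoids the special-function machinery entirely; the paper's route, on the other hand, generalises more transparently to the $n$-message case treated later in the section, where the analogue of your Beta trick is less obvious.
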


\begin{lemma}
    Let $f:S^{n-1}\rightarrow\RR$ be a function. Then
    \begin{align*}
        &\int_{\RR^n}f\parens*{\frac{v}{\norm{v}}}e^{-\norm{v}^2}d^nv=\frac{1}{2}\Gamma(n/2)\int_{S^{n-1}}f(\Omega)d^{n-1}\Omega\\
        &\int_{\RR^n}f\parens*{\frac{v}{\norm{v}}}e^{-\norm{v}^2}\norm{v}^2d^nv=\frac{n}{4}\Gamma(n/2)\int_{S^{n-1}}f(\Omega)d^{n-1}\Omega,
    \end{align*}
    where the integrations are with respect to the unnormalised Lebesgue measures.
\end{lemma}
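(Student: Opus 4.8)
The plan is to integrate in polar (spherical) coordinates. Write $v = r\Omega$ with $r = \norm{v}\in[0,\infty)$ and $\Omega = v/\norm{v}\in S^{n-1}$; then the unnormalised Lebesgue measure on $\RR^n$ factorises as $d^nv = r^{n-1}\,dr\,d^{n-1}\Omega$, with $d^{n-1}\Omega$ the unnormalised surface measure. Since $f(v/\norm{v})e^{-\norm{v}^2} = f(\Omega)\,e^{-r^2}$ depends separately on the angular and radial variables, Tonelli's theorem (first for $f\ge 0$, then for general integrable $f$ by splitting into positive and negative parts) gives
\[
  \int_{\RR^n} f\!\left(\tfrac{v}{\norm{v}}\right) e^{-\norm{v}^2}\, d^nv
  = \left( \int_{S^{n-1}} f(\Omega)\, d^{n-1}\Omega \right)\left( \int_0^\infty r^{n-1} e^{-r^2}\, dr \right),
\]
and likewise with an extra factor $r^2$ inserted into the radial integral for the second identity.

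It then remains to evaluate the radial integrals. Substituting $u = r^2$ (so $2r\,dr = du$, hence $r^{n-1}\,dr = \tfrac12 u^{n/2-1}\,du$) turns $\int_0^\infty r^{n-1}e^{-r^2}\,dr$ into $\tfrac12\int_0^\infty u^{n/2-1}e^{-u}\,du = \tfrac12\Gamma(n/2)$ by the definition of the Gamma function, which yields the first formula. The same substitution gives $\int_0^\infty r^{n+1}e^{-r^2}\,dr = \tfrac12\Gamma(n/2+1)$, and the functional equation $\Gamma(n/2+1) = \tfrac{n}{2}\Gamma(n/2)$ converts this to $\tfrac{n}{4}\Gamma(n/2)$, which is the second formula.

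There is essentially no obstacle here: the content is the standard polar-coordinate computation together with the definition and recursion of the Gamma function. The only points worth stating with care are the justification of the change of variables and the fact that both measures appearing in the statement are the \emph{unnormalised} ones, so that no stray factor of the sphere volume $2\pi^{n/2}/\Gamma(n/2)$ intrudes. As a consistency check, taking $f\equiv 1$ reduces both sides of the first identity to the familiar Gaussian integral $\int_{\RR^n}e^{-\norm{v}^2}\,d^nv = \pi^{n/2}$ together with the known surface area of $S^{n-1}$.
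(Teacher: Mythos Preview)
Your proof is correct and follows essentially the same approach as the paper: pass to spherical coordinates $v=r\Omega$, separate the radial and angular integrals, and evaluate the radial integrals via the substitution $u=r^2$ and the Gamma recursion $\Gamma(n/2+1)=\tfrac{n}{2}\Gamma(n/2)$. The only difference is that you add some extra justification (Tonelli, a consistency check with $f\equiv 1$) that the paper omits.
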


\begin{proof}
    We can express the integrations in spherical coordinates, that is the change of variables $v=r\Omega$, where $r\geq 0$ and $\Omega$ is a unit vector to get $d^nv=r^{n-1}drd^{n-1}\Omega$. As such, we get in the first case that
    \begin{align*}
        \int_{\RR^n}f\parens*{\frac{v}{\norm{v}}}e^{-\norm{v}^2}d^nv&=\int_{S^{n-1}}\int_0^\infty f(\Omega)e^{-r^2}r^{n-1}drd^{n-1}\Omega.
    \end{align*}
    With the change of variables $t=r^2$, $\int_0^\infty e^{-r^2}r^{n-1}dr=\frac{1}{2}\int_0^\infty t^{n/2-1}e^{-t}dt=\frac{\Gamma(n/2)}{2}$, as wanted. The other case is similar. There,
    \begin{align*}
        \int_{\RR^n}f\parens*{\frac{v}{\norm{v}}}e^{-\norm{v}^2}\norm{v}^2d^nv&=\int_{S^{n-1}}\int_0^\infty f(\Omega)e^{-r^2}r^{n+1}drd^{n-1}\Omega,
    \end{align*}
    and with the same change of variables $\int_0^\infty e^{-r^2}r^{n+1}dr=\frac{1}{2}\int_0^\infty t^{n/2}e^{-t}dt=\frac{\Gamma(n/2+1)}{2}=\frac{n}{4}\Gamma(n/2)$.
\end{proof}

\begin{proof}[Proof of \cref{prop:lowerbound}]
    The winning probability of the strategy is
    \begin{align*}
        \mfk{c}_{1\rightarrow 1}(\ttt{Q}_{r,2}|\ttt{A})&=\int\frac{1}{2}\sum_b\Tr\squ*{\Phi(U\tfrac{1}{r}\Pi_bU^\ast)P_{b|U}}dU\\
        &=\frac{1}{d}\int\sum_{b}\sum_{x\in G_{
       b,U}}\braket{x}{U\Pi_bU^\ast}{x}dU\\
       &=\frac{1}{d}\sum_{x}\int\max_b\braket{x}{U\Pi_bU^\ast}{x}dU\\
       &=\int\max_b\braket{0}{U\Pi_bU^\ast}{0}dU,
    \end{align*}
    by Haar invariance, where $d=2r$. Now, $U^\ast\ket{0}$ is just a uniformly random pure state, so this is just the normalised integral over the set of pure states $\mfk{c}_{1\rightarrow 1}(\ttt{Q}_{r,2}|\ttt{A})=\int\max_b\braket{\psi}{\Pi_b}{\psi}d\psi=\int\max_b\sum_{i=br}^{(b+1)r-1}|\psi_i|^2d\psi$. This is now equivalent to the integral over the real $2d-1$-sphere $\mfk{c}_{1\rightarrow 1}(\ttt{Q}_{r,2}|\ttt{A})=\frac{\Gamma(d)}{2\pi^d}\int_{S^{2d-1}}\max_b\sum_{i=bd}^{(b+1)d-1}\Omega_i^2d^{2d-1}\Omega$, since $\frac{2\pi^d}{\Gamma(d)}$ is the volume of the $2d-1$-sphere. Using the lemma,
    \begin{align*}
        \mfk{c}_{1\rightarrow 1}(\ttt{Q}_{r,2}|\ttt{A})&=\frac{\Gamma(d)}{2\pi^d}\frac{2}{d\Gamma(d)}\int_{\RR^{2d}}\max_b\sum_{i=bd}^{(b+1)d-1}\frac{v_i^2}{\norm{v}^2}e^{-\norm{v}^2}\norm{v}^2d^{2d}v\\
        &=\frac{1}{\pi^dd}\int_{\RR^{2d}}\max_b\sum_{i=bd}^{(b+1)d-1}v_i^2e^{-\norm{v}^2}d^{2d}v
    \end{align*}
    Now, let the vectors $v^0=(v_0,\ldots,v_{d-1})$ and $v^1=(v_{d},\ldots,v_{2d-1})$. Then, using a change of variables to spherical coordinates
    \begin{align*}
        \mfk{c}_{1\rightarrow 1}(\ttt{Q}_{r,2}|\ttt{A})&=\frac{1}{\pi^dd}\int_{\RR^{d}}\int_{\RR^d}\max_b\norm{v^b}^2e^{-\norm{v^0}^2-\norm{v^1}^2}d^{d}v^0d^dv^1\\
        &=\frac{1}{\pi^dd}\int_{S^{d-1}}\int_0^\infty\int_{S^{d-1}}\int_0^\infty\max_b(r^b)^2e^{-(r^0)^2-(r^1)^2}(r^0)^{d-1}dr^0d^{d-1}\Omega^0(r^1)^{d-1}dr^1d^{d-1}\Omega^1\\
        &=\frac{1}{\pi^dd}\parens*{\frac{2\pi^{d/2}}{\Gamma(d/2)}}^2\int_0^\infty\int_0^\infty\max_b(r^b)^2e^{-(r^0)^2-(r^1)^2}(r^0)^{d-1}dr^0(r^1)^{d-1}dr^1
        \\
        &=\frac{4}{d(d/2-1)!^2}\int_0^\infty\int_0^\infty\max_b(r^b)^2e^{-(r^0)^2-(r^1)^2}(r^0)^{d-1}(r^1)^{d-1}dr^0dr^1.
    \end{align*}
    Next, take the change of variables $s=(r^0)^2$ and $t=(r^1)^2$. We find that
    \begin{align*}
        \mfk{c}_{1\rightarrow 1}(\ttt{Q}_{r,2}|\ttt{A})&=\frac{1}{d(d/2-1)!^2}\int_0^\infty\int_0^\infty\max\{s,t\}e^{-s-t}s^{d/2-1}t^{d/2-1}dsdt\\
        &=\frac{2}{d(d/2-1)!^2}\int_0^\infty\int_0^te^{-s-t}s^{d/2-1}t^{d/2}dsdt.
    \end{align*}
    Now, $\int_0^ts^{d/2-1}e^{-s}ds=\gamma(d/2,t)=(d/2-1)!-(d/2-1)!e^{-t}\sum_{q=0}^{d/2-1}\frac{t^q}{q!}$, the lower incomplete gamma function. Therefore,
    \begin{align*}
        \mfk{c}_{1\rightarrow 1}(\ttt{Q}_{r,2}|\ttt{A})&=\frac{2}{d(d/2-1)!}\int_0^\infty t^{d/2}e^{-t}(d/2-1)!\parens*{1-e^{-t}\sum_{q=0}^{d/2-1}\frac{t^q}{q!}}dt\\
        &=\frac{1}{(d/2)!}\int_0^\infty t^{d/2}e^{-t}-\sum_{q=0}^{d/2-1}\frac{1}{q!}t^{d/2+q}e^{-2t}dt\\
        &=\frac{1}{(d/2)!}\parens*{(d/2)!-\sum_{q=0}^{d/2-1}\frac{1}{q!}\frac{(d/2+q)!}{2^{d/2+q+1}}}\\
        &=1-\sum_{q=0}^{d/2-1}\binom{d/2+q}{q}\frac{1}{2^{q+d/2+1}}.
    \end{align*}
    To simplify this formula, the Beta function is $\mathrm{B}(a,b)=\int_0^1t^{a-1}(1-t)^{b-1}dt$ and the regularised incomplete Beta function is $I_p(a,b)=\frac{1}{\mathrm{B}(a,b)}\int_0^pt^{a-1}(1-t)^{b-1}dt$. $I_p(a,b)$ satisfies the relations $I_p(a,b)=I_{1-p}(b,a)$, and for integers $m,n$, $I_p(m,n-m+1)=\sum_{j=m}^n\binom{n}{j}p^j(1-p)^{n-j}$ and $I_p(m,n)=\sum_{j=m}^\infty\binom{n+j-1}{j}p^j(1-p)^n$~\cite[\href{https://dlmf.nist.gov/8.17}{(8.17)}]{DLMF}. Also, $I_p(1,b)=1-(1-p)^b$, giving that
    \begin{align*}
        \sum_{q=0}^{d/2-1}\binom{d/2+q}{q}\frac{1}{2^{q+d/2+1}}=\frac{1}{2^{d/2+1}}+I_{1/2}(1,d/2+1)-I_{1/2}(d/2,d/2+1)=1-I_{1/2}(d/2,d/2+1).
    \end{align*}
    Hence,
    \begin{align*}
        \mfk{c}_{1\rightarrow 1}(\ttt{Q}_{r,2}|\ttt{A})=I_{1/2}(d/2,d/2+1)=\sum_{j=d/2}^d\binom{d}{j}\frac{1}{2^d}.
    \end{align*}
    We have that $\sum_{j=0}^d\binom{d}{j}=2^d$ and $\sum_{j=d/2}^d\binom{d}{j}=\sum_{j=0}^{d/2}\binom{d}{j}$. Hence, $2^d=\sum_{j=d/2}^d\binom{d}{j}+\sum_{j=1}^{d/2-1}\binom{d}{j}=2\sum_{j=d/2}^d\binom{d}{j}-\binom{d}{d/2}$, and hence
    \begin{align*}
        \mfk{c}_{1\rightarrow 1}(\ttt{Q}_{r,2}|\ttt{A})&=I_{1/2}(d/2,d/2+1)=\frac{1}{2^d}\parens*{2^{d-1}+\frac{1}{2}\binom{d}{d/2}}=\frac{1}{2}+\frac{1}{2^{d+1}}\binom{d}{d/2}. \qedhere
    \end{align*}
\end{proof}

\begin{corollary}\label{cor:better-lowerbound}
    $\mfk{c}_{1\rightarrow 1}(\ttt{Q}_{r,2}|\ttt{A})=\frac{1}{2}+\frac{1}{\sqrt{2\pi d}}+O\parens*{\frac{1}{d^{3/2}}}$.
\end{corollary}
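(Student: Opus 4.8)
The plan is to start from the exact formula of \cref{prop:lowerbound}, namely
\[
    \mfk{c}_{1\rightarrow 1}(\ttt{Q}_{r,2}|\ttt{A})=\tfrac{1}{2}+\tfrac{1}{2^{d+1}}\binom{d}{d/2},
\]
with $d=2r$, and simply estimate the central binomial coefficient $\binom{d}{d/2}=\binom{2r}{r}$ asymptotically. The only tool needed is Stirling's expansion $n! = \sqrt{2\pi n}\,(n/e)^n\bigl(1+\tfrac{1}{12n}+O(n^{-2})\bigr)$, which upon taking $\binom{2r}{r}=(2r)!/(r!)^2$ yields
\[
    \binom{2r}{r}=\frac{4^r}{\sqrt{\pi r}}\Bigl(1-\frac{1}{8r}+O(r^{-2})\Bigr).
\]

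From here I would substitute into the formula: $\tfrac{1}{2^{d+1}}\binom{d}{d/2}=\tfrac{1}{2\cdot 4^r}\binom{2r}{r}=\tfrac{1}{2\sqrt{\pi r}}\bigl(1+O(r^{-1})\bigr)$. Re-expressing in terms of $d=2r$ gives $\tfrac{1}{2\sqrt{\pi r}}=\tfrac{1}{2\sqrt{\pi d/2}}=\tfrac{1}{\sqrt{2\pi d}}$, and the correction $\tfrac{1}{2\sqrt{\pi r}}\cdot O(r^{-1})=O(r^{-3/2})=O(d^{-3/2})$, so that altogether
\[
    \mfk{c}_{1\rightarrow 1}(\ttt{Q}_{r,2}|\ttt{A})=\tfrac{1}{2}+\tfrac{1}{\sqrt{2\pi d}}+O\bigl(d^{-3/2}\bigr),
\]
as claimed. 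Alternatively, one can avoid invoking Stirling directly by using the Wallis-type identity $\binom{2r}{r}4^{-r}=\tfrac{(2r-1)!!}{(2r)!!}=\tfrac{1}{\sqrt{\pi r}}\bigl(1+O(r^{-1})\bigr)$, which follows from $\prod_{k=1}^{r}\tfrac{2k-1}{2k}$; this is the same estimate obtained by a slightly more elementary route.

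There is no real obstacle here: the content of the corollary is entirely in \cref{prop:lowerbound}, and what remains is a one-line asymptotic of a central binomial coefficient. The only point requiring a little care is bookkeeping of the error term through the change of variable $r=d/2$, to confirm that the $O(r^{-2})$ term in the Stirling expansion indeed contributes at order $d^{-3/2}$ (and not larger) after multiplication by $\tfrac{1}{2\sqrt{\pi r}}$; and to note that if one wants the sharper expansion one could carry the $-\tfrac{1}{8r}$ term explicitly, giving $\tfrac12+\tfrac{1}{\sqrt{2\pi d}}\bigl(1-\tfrac{1}{4d}\bigr)+O(d^{-5/2})$, but the stated form suffices.
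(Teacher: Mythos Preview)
Your proposal is correct and follows essentially the same approach as the paper: both start from the exact formula of \cref{prop:lowerbound} and estimate the central binomial coefficient via Stirling's approximation. The only cosmetic difference is that the paper uses Robbins' explicit inequalities $\sqrt{2\pi n}(n/e)^n e^{1/(12n+1)}\leq n!\leq \sqrt{2\pi n}(n/e)^n e^{1/(12n)}$ to get two-sided bounds, whereas you invoke the asymptotic expansion directly; the content is identical.
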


\begin{proof}
    By Stirling's approximation $\sqrt{2\pi n}\parens*{\frac{n}{e}}^ne^{\frac{1}{12n+1}}\leq n!\leq\sqrt{2\pi n}\parens*{\frac{n}{e}}^ne^{\frac{1}{12n}}$~\cite{Rob55}, so
	\begin{align*}
		\binom{2n}{n}=\frac{(2n)!}{(n!)^2}\leq\frac{\sqrt{4\pi n}\parens*{\frac{2n}{e}}^{2n}e^{\frac{1}{24n}}}{2\pi n\parens*{\frac{n}{e}}^{2n}e^{\frac{2}{12n+1}}}=\frac{4^n}{\sqrt{\pi n}}e^{\frac{1}{24n}-\frac{2}{12n+1}}\leq\frac{4^n}{\sqrt{\pi n}},
	\end{align*}
    and
    \begin{align*}
		\binom{2n}{n}=\frac{(2n)!}{(n!)^2}\geq\frac{\sqrt{4\pi n}\parens*{\frac{2n}{e}}^{2n}e^{\frac{1}{24n+1}}}{2\pi n\parens*{\frac{n}{e}}^{2n}e^{\frac{2}{12n}}}=\frac{4^n}{\sqrt{\pi n}}e^{\frac{1}{24n+1}-\frac{2}{12n}}\geq\frac{4^n}{\sqrt{\pi n}}\parens*{1-\frac{1}{6n}}.
	\end{align*}
    Therefore, $\frac{1}{2}+\frac{1}{\sqrt{2\pi d}}\parens*{1-\frac{1}{3d}}\leq \mfk{c}_{1\rightarrow 1}(\ttt{Q}_{r,2}|\ttt{A})\leq\frac{1}{2}+\frac{1}{\sqrt{2\pi d}}$, giving the wanted asymptotic formula.
\end{proof}

\subsection{Extension to more copies}

To extend to $t$ copies, we can apply the same strategy $\ttt{A}$ of the previous section independently $t$ times by measuring in a uniformly random basis for each copy. Then, the player chooses their output by choosing the most common bit, guessing uniformly random if $0$ and $1$ are equally common. Formally, $\ttt{A}^{(t)}=(B^{(t)},\{P^{(t),U}_{m}\},\Phi^{(t)})$, where $B^{(t)}=B^{\otimes t}$, $\Phi^{(t)}=\Phi^{\otimes t}$, and $P^{(t),U}_{m}=\sum_{\substack{m_1,\ldots,m_t\\\mathrm{MAJ}(m_1,\ldots,m_t)=m}}P^U_{m_1}\otimes\cdots\otimes P^U_{m_t}$. Then, if $p$ is the winning probability of one round,
\begin{align*}
    \mfk{c}_{t\rightarrow 1}(\ttt{Q}_{r,2}|\ttt{A}^{(t)})=\begin{cases}\sum_{\ell=\frac{t+1}{2}}^t\binom{t}{\ell}p^\ell(1-p)^{t-\ell}&t\text{ odd}\\\sum_{\ell=\frac{t}{2}+1}^t\binom{t}{\ell}p^\ell(1-p)^{t-\ell}+\frac{1}{2}\binom{t}{t/2}(p(1-p))^{t/2}&t\text{ even}\end{cases}
\end{align*}

\begin{lemma}\label{lem:t-copy-lowerbound}
    Let $p=\frac{1}{2}+\delta$. Suppose that $t\geq 4$ and $\delta\leq\frac{1}{2\sqrt{t-1}}$. Then, $\frac{1}{2}+\frac{1}{3}\sqrt{t}\delta\leq \mfk{c}_{t\rightarrow 1}(\ttt{Q}_{r,2}|\ttt{A}^{(t)})\leq\frac{1}{2}+\sqrt{t}\delta$.
\end{lemma}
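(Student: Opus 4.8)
The plan is to regard $F(\delta)\coloneqq\mfk{c}_{t\rightarrow 1}(\ttt{Q}_{r,2}\mid\ttt{A}^{(t)})$, with $p=\tfrac12+\delta$, as a polynomial in $\delta$ — it is exactly the probability that the tie-broken majority vote of $t$ i.i.d.\ $\mathrm{Bernoulli}(p)$ bits equals the correct value $1$ — and to control it by differentiating once in $\delta$. First, $F(0)=\tfrac12$: when $p=\tfrac12$ the binomial distribution is symmetric under $k\leftrightarrow t-k$, so a correct majority is as likely as an incorrect one, and the tie term (present only for even $t$) is split evenly. Hence $F(\delta)-\tfrac12=\int_0^\delta F'(s)\,ds$, and it suffices to pin down $F'$.

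The key step is the closed form for $F'(\delta)$. Using the telescoping identity $\tfrac{d}{dp}\Pr[\mathrm{Bin}(n,p)\geq j]=n\Pr[\mathrm{Bin}(n-1,p)=j-1]$ (equivalently $\tfrac{d}{dp}I_p(a,b)=p^{a-1}(1-p)^{b-1}/\mathrm{B}(a,b)$, as already used in \cref{prop:lowerbound}), all but one summand cancels, and after substituting $p(1-p)=\tfrac14-\delta^2$ one gets, uniformly in the parity of $t$,
\begin{equation*}
    F'(\delta)=\kappa_t\,(1-4\delta^2)^{\ell_t},\qquad \kappa_t=t\,2^{1-t}\binom{t-1}{\lfloor(t-1)/2\rfloor},\qquad \ell_t=\left\lfloor\tfrac{t-1}{2}\right\rfloor.
\end{equation*}
For odd $t=2m+1$ this is immediate. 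For even $t=2m$ one adds $\tfrac{d}{d\delta}\Pr[\mathrm{Bin}(2m,p)\geq m+1]=2m\binom{2m-1}{m}p^m(1-p)^{m-1}$ to the contribution $-m\binom{2m}{m}(p(1-p))^{m-1}\delta$ of the tie term $\tfrac12\binom{2m}{m}(p(1-p))^m$, and simplifies using $\binom{2m-1}{m}=\tfrac12\binom{2m}{m}$ and $p-\delta=\tfrac12$. In particular $F$ is concave and (since $t\geq4$) strictly increasing on $[0,\tfrac12]$.

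Both bounds are then short estimates on $\int_0^\delta F'$. For the \emph{upper} bound, $(1-4s^2)^{\ell_t}\leq 1$ gives $F(\delta)-\tfrac12\leq\kappa_t\,\delta$; combining with the central-binomial estimate $2^{1-t}\binom{t-1}{\lfloor(t-1)/2\rfloor}\leq\sqrt{2/(\pi(t-1))}$ (a consequence of the Stirling bounds recorded in the proof of \cref{cor:better-lowerbound}) yields $\kappa_t\leq t\sqrt{2/(\pi(t-1))}\leq\sqrt t$, the last step holding whenever $2t\leq\pi(t-1)$, i.e.\ for all $t\geq 3$; hence $F(\delta)\leq\tfrac12+\sqrt t\,\delta$. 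For the \emph{lower} bound, monotonicity of $F'$ gives $F(\delta)-\tfrac12\geq\delta F'(\delta)=\kappa_t\,\delta\,(1-4\delta^2)^{\ell_t}$; the hypothesis $\delta\leq\tfrac1{2\sqrt{t-1}}$ forces $4\delta^2\leq\tfrac1{t-1}$, so $(1-4\delta^2)^{\ell_t}\geq(1-\tfrac1{2m})^m\geq\tfrac12$ for odd $t=2m+1$ and $(1-4\delta^2)^{\ell_t}\geq(1-\tfrac1{2m-1})^{m-1}\geq e^{-1/2}$ for even $t=2m$ (the latter via $\ln(1+x)\leq x$), and together with the lower Stirling estimate $\binom{2m}{m}\geq\tfrac{4^m}{\sqrt{\pi m}}(1-\tfrac1{6m})$ and $t\geq 4$ (whence $m\geq 2$) one checks $\kappa_t(1-4\delta^2)^{\ell_t}\geq\tfrac13\sqrt t$ — both parities leave comfortable slack — giving $F(\delta)\geq\tfrac12+\tfrac13\sqrt t\,\delta$.

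The only real work is establishing the closed form for $F'$ uniformly across both parities — the even case requiring the tie term to be differentiated and recombined correctly — and then verifying that the residual numerical constants collapse to the clean values $1$ and $\tfrac13$; this last check is exactly where the hypotheses $t\geq 4$ and $\delta\leq\tfrac1{2\sqrt{t-1}}$ are spent. I do not anticipate any conceptual obstacle beyond this bookkeeping.
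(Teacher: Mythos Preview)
Your argument is correct and reaches the same conclusion by a slightly different route than the paper. The paper writes $\mfk{c}_{t\rightarrow 1}(\ttt{Q}_{r,2}|\ttt{A}^{(t)})$ directly as a regularised incomplete beta value and invokes the special identity $I_x(a,a)=\tfrac12 I_{4x(1-x)}(a,\tfrac12)$ (for $x\leq\tfrac12$) to obtain $\tfrac12+\tfrac12 I_{4\delta^2}(\tfrac12,\cdot)$, then bounds the resulting integral $\int_0^{4\delta^2}u^{-1/2}(1-u)^{\ell_t}\,du$ above and below; the even case is first reduced to $I_p(t/2,t/2)$ by a beta recursion. You instead differentiate once in $\delta$, obtain the closed form $F'(\delta)=\kappa_t(1-4\delta^2)^{\ell_t}$ uniformly in parity (handling the even tie term by a direct cancellation), and bound $\int_0^\delta F'$ by $\kappa_t\delta$ above and $\delta F'(\delta)$ below. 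After the substitution $u=4s^2$ the two integrals coincide, so the analytic content is the same; your route is a touch more elementary in that it sidesteps the $I_x(a,a)$ identity and the beta recursion, while the paper's route makes the connection to the incomplete beta function (already used in \cref{prop:lowerbound}) explicit. The numerical checks leading to the constants $1$ and $\tfrac13$ are essentially the same in both proofs.
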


\begin{proof}
    First, consider the case where $t$ is odd. We will again use properties of the regularised incomplete beta function~\cite[\href{https://dlmf.nist.gov/8.17}{(8.17)}]{DLMF}. First, $\mfk{c}_{t\rightarrow 1}(\ttt{Q}_{r,2}|\ttt{A}^{(t)})=I_p(\frac{t+1}{2},\frac{t+1}{2})$. Using the property $I_x(a,a)=\frac{1}{2}I_{4x(1-x)}(a,1/2)$ for $x\in[0,1/2]$, we find that
    \begin{align*}
        \mfk{c}_{t\rightarrow 1}(\ttt{Q}_{r,2}|\ttt{A}^{(t)})&=1-I_{1-p}(\tfrac{t+1}{2},\tfrac{t+1}{2})\\
        &1-\tfrac{1}{2}I_{4p(1-p)}(\tfrac{t+1}{2},\tfrac{1}{2})\\
        &=\tfrac{1}{2}+\tfrac{1}{2}I_{1-4p(1-p)}(\tfrac{1}{2},\tfrac{t+1}{2}).
    \end{align*}
    Now, note that $1-4p(1-p)=4\delta^2$, so
    \begin{align*}
        \mfk{c}_{t\rightarrow 1}(\ttt{Q}_{r,2}|\ttt{A}^{(t)})&=\frac{1}{2}+\frac{1}{2}\frac{\Gamma(\frac{t+1}{2}+\frac{1}{2})}{\Gamma(\frac{t+1}{2})\Gamma(\frac{1}{2})}\int_0^{4\delta^2}t^{-\frac{1}{2}}(1-t)^{\frac{t-1}{2}}dt\\
        &\geq\frac{1}{2}+\frac{1}{2}\frac{\frac{(t+1)!}{2^{t+1}\parens*{\frac{t+1}{2}}!}\sqrt{\pi}}{\parens*{\frac{t-1}{2}}!\sqrt{\pi}}\int_0^{4\delta^2}t^{-\frac{1}{2}}(1-4\delta^2)^{\frac{t-1}{2}}dt\\
        &=\frac{1}{2}+\frac{1}{2^{t+2}}\frac{t+1}{2}\binom{t+1}{\frac{t+1}{2}}(1-4\delta^2)^{\frac{t-1}{2}}2\sqrt{4\delta^2}\\
        &\geq\frac{1}{2}+(1-4\delta^2)^{\frac{t-1}{2}}(t+1)\frac{1}{\sqrt{\pi\frac{t+1}{2}}}\parens*{1-\frac{1}{3(t+1)}}\delta\\
        &\geq\frac{1}{2}+\frac{1}{2}\sqrt{\frac{2}{\pi}}\frac{14}{15}\sqrt{t+1}\delta\\
        &\geq\frac{1}{2}+\frac{1}{3}\sqrt{t}\delta.
    \end{align*}
    In the same way we can upper bound
    
    \begin{align*}
        \mfk{c}_{t\rightarrow 1}(\ttt{Q}_{r,2}|\ttt{A}^{(t)})&\leq\frac{1}{2}+\frac{1}{2^{t+2}}\frac{t+1}{2}\binom{t+1}{\frac{t+1}{2}}\int_0^{4\delta^2}t^{-\frac{1}{2}}dt\\
        &\leq\frac{1}{2}+\frac{1}{2}\frac{t+1}{2}\frac{1}{\sqrt{\pi\frac{t+1}{2}}}2\sqrt{4\delta^2}\\
        &=\frac{1}{2}+\sqrt{\frac{2}{\pi}}\sqrt{t+1}\delta\\
        &\leq\frac{1}{2}+\sqrt{t}\delta
    \end{align*}
    Now, in the case that $t$ is even, \begin{align*}
        \mfk{c}_{t\rightarrow 1}(\ttt{Q}_{r,2}|\ttt{A}^{(t)})&=I_p(\tfrac{t}{2}+1,\tfrac{t}{2})+\frac{1}{2}\binom{t}{t/2}(p(1-p))^{t/2}\\
        &=I_p(\tfrac{t}{2},\tfrac{t}{2})-\frac{(p(1-p))^{t/2}\Gamma(t)}{\frac{t}{2}\Gamma(t/2)^2}+\frac{1}{2}\binom{t}{t/2}(p(1-p))^{t/2}\\
        &=I_p(\tfrac{t}{2},\tfrac{t}{2})\\
        &\in\squ*{\frac{1}{2}+\frac{1}{3}\sqrt{t}\delta,\frac{1}{2}+\sqrt{t}\delta},
    \end{align*}
    by the above calculation with $t+1$ replaced by $t$.
\end{proof}

\subsection{Extension to more messages}

We can use essentially the same strategy for the Haar-random encryption of longer messages. Let $\ttt{A}=(B,\{P^U_m\},\Phi)$ be the telegraphing attack against $\ttt{Q}_{r,n}$ where $B=H=\C^{rn}$, $\Phi$ is measurement in the computational basis, and $P_{m}^U=\sum_{i\in G_{m,U}}\ketbra{i}$ where $$G_{m,U}=\set*{i}{\forall m'< m.\;\braket{i}{U\sigma_m U^\ast}{i}>\braket{i}{U\sigma_{m'} U^\ast}{i}\land\forall m'>m.\;\braket{i}{U\sigma_m U^\ast}{i}\geq\braket{i}{U\sigma_{m'} U^\ast}{i}}.$$

\begin{proposition}
    The winning probability for the strategy $\ttt{A}$ above is
    $$\mfk{c}_{1\rightarrow 1}(\ttt{Q}_{r,n}|\ttt{A})=\sum_{q_1,\ldots,q_{n-1}=r}^\infty\binom{q_1+\ldots+q_{n-1}+r}{q_1,\ldots,q_{n-1},r}n^{-(q_1+\ldots+q_{n-1}+r+1)}.$$
\end{proposition}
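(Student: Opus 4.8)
The plan is to run the computation of \cref{prop:lowerbound} with $n$ blocks in place of two. First I would unfold the definition of the value. Since $\Phi$ is measurement in the computational basis and $P^U_m=\sum_{i\in G_{m,U}}\ketbra{i}$, and since the sets $\{G_{m,U}\}_{m\in[n]}$ partition $[rn]$ — each index $i$ lies in exactly one $G_{m,U}$, namely the one with the smallest $m$ maximising $m'\mapsto\braket{i}{U\sigma_{m'}U^\ast}{i}$, and coincidences among the maximisers occur only on a Haar-null set of $U$ — the value telescopes to
\[
  \mfk{c}_{1\rightarrow 1}(\ttt{Q}_{r,n}|\ttt{A})=\frac{1}{rn}\int\sum_{x=0}^{rn-1}\max_{m\in[n]}\braket{x}{U\Pi_mU^\ast}{x}\,dU ,
\]
where $\Pi_m=\ketbra{m}{m}\otimes I_r$ and $d\coloneqq rn$. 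By Haar invariance each summand equals $\int\max_m\braket{0}{U\Pi_mU^\ast}{0}\,dU$, and since $U^\ast\ket{0}$ is a uniformly random pure state of $\CC^{d}$ this is $\int\max_m\braket{\psi}{\Pi_m}{\psi}\,d\psi$; averaging the $d$ identical summands yields
\[
  \mfk{c}_{1\rightarrow 1}(\ttt{Q}_{r,n}|\ttt{A})=\int\max_{m\in[n]}\sum_{i=mr}^{(m+1)r-1}\abs{\psi_i}^2\,d\psi .
\]

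Next I would pass to real coordinates exactly as in \cref{prop:lowerbound}: a Haar-random $\ket{\psi}$ is a uniform point of $S^{2d-1}\subseteq\RR^{2d}$, and $\max_m\sum_i\abs{\psi_i}^2$ becomes the maximum over the $n$ blocks of $2r$ real coordinates of the block norm-squared. Applying the second identity of the Gaussian-integral lemma above in ambient dimension $2d$, with $f$ the block-maximum — homogeneous of degree $2$, so $f(v/\norm{v})\norm{v}^2=\max_m\norm{w_m}^2$ where $w_0,\dots,w_{n-1}\in\RR^{2r}$ are the blocks of $v$ — turns the spherical integral into a Gaussian integral over $\RR^{2d}\cong(\RR^{2r})^n$ whose integrand $\bigl(\max_m\norm{w_m}^2\bigr)\prod_m e^{-\norm{w_m}^2}$ factorises over the blocks. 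Converting each block to polar coordinates and integrating out the angles (which the integrand ignores) contributes $\mathrm{Vol}(S^{2r-1})^n=(2\pi^{r}/\Gamma(r))^n$; after cancelling the $\pi$-powers against $\pi^{d}=\pi^{rn}$ and substituting $t_m=\norm{w_m}^2$ (so $\rho_m^{2r-1}\,d\rho_m=\tfrac12 t_m^{r-1}\,dt_m$), everything collapses to
\[
  \mfk{c}_{1\rightarrow 1}(\ttt{Q}_{r,n}|\ttt{A})=\frac{1}{rn\,\Gamma(r)^{n}}\int_{[0,\infty)^{n}}\Bigl(\max_{m}t_m\Bigr)\prod_{m=0}^{n-1}t_m^{r-1}e^{-t_m}\,dt .
\]
Splitting $[0,\infty)^n$ into the $n$ regions on which a prescribed $t_m$ is the (weak) maximum — they cover the cube and pairwise overlap only on null sets — and using the symmetry of the integrand, this equals $\frac{1}{r\,\Gamma(r)^{n}}\int_0^\infty t^{r}e^{-t}\,\gamma(r,t)^{n-1}\,dt$, where $\gamma(r,x)=\int_0^x u^{r-1}e^{-u}\,du$ is the lower incomplete gamma function.

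Finally I would expand, using the standard series $\gamma(r,t)=(r-1)!\bigl(1-e^{-t}\sum_{q=0}^{r-1}t^{q}/q!\bigr)=(r-1)!\sum_{q\ge r}t^{q}e^{-t}/q!$ valid for integer $r$, so that $\gamma(r,t)^{n-1}=((r-1)!)^{n-1}e^{-(n-1)t}\sum_{q_1,\dots,q_{n-1}\ge r}t^{q_1+\cdots+q_{n-1}}/(q_1!\cdots q_{n-1}!)$. Inserting this, exchanging sum and integral (justified by Tonelli, all terms nonnegative) and evaluating $\int_0^\infty t^{\,r+\sum_j q_j}e^{-nt}\,dt=(r+\sum_j q_j)!\,n^{-(r+\sum_j q_j+1)}$, the prefactor simplifies through $((r-1)!)^{n-1}/(r\,\Gamma(r)^{n})=1/r!$ and $(r+\sum_j q_j)!/(r!\,\prod_j q_j!)=\binom{q_1+\cdots+q_{n-1}+r}{q_1,\dots,q_{n-1},r}$, producing exactly the asserted formula. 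The step I expect to be most error-prone — though not conceptually hard — is the constant-bookkeeping across the sphere$\to$Gaussian$\to$polar reductions (keeping straight $\Gamma(d)$, $\mathrm{Vol}(S^{2r-1})=2\pi^{r}/\Gamma(r)$, the $2d/4$ factor from the lemma, and the Jacobian of $t_m=\rho_m^2$), together with verifying that the $G_{m,U}$ genuinely partition the index set so that $\sum_m\sum_{x\in G_{m,U}}\braket{x}{U\Pi_mU^\ast}{x}=\sum_x\max_m\braket{x}{U\Pi_mU^\ast}{x}$.
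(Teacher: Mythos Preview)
Your proposal is correct and follows essentially the same route as the paper: reduce the value to $\int\max_m\braket{0}{U\Pi_mU^\ast}{0}\,dU$, pass to the $(2d-1)$-sphere and then to a Gaussian integral over $(\RR^{2r})^n$ via the lemma, integrate out the angles and substitute $t_m=\rho_m^2$, exploit symmetry to reduce to $\frac{1}{r!\,(r-1)!^{n-1}}\int_0^\infty t^r e^{-t}\gamma(r,t)^{n-1}\,dt$, and finish with the series $\gamma(r,t)=(r-1)!\,e^{-t}\sum_{q\ge r}t^q/q!$. Your constant bookkeeping checks out (in particular $r\Gamma(r)^n=r!\,(r-1)!^{n-1}$ matches the paper's prefactor), and your remark that the $G_{m,U}$ partition $[rn]$ with $m$ the smallest maximiser is exactly what the strict/non-strict inequalities in the paper's definition encode.
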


\begin{proof}
    The winning probability can be simplified to
    \begin{align*}
        \mfk{c}_{1\rightarrow 1}(\ttt{Q}_{r,n}|\ttt{A})&=\int\frac{1}{n}\sum_m\Tr\squ*{\Phi(U\tfrac{1}{r}\Pi_m U^\ast)P_{m}^U}dU\\
        &=\frac{1}{d}\int\sum_m\sum_{i\in G_{m,U}}\braket{i}{U\Pi_mU^\ast}{i}dU\\
        &=\int\max_m\braket{0}{U\Pi_mU^\ast}{0}dU,
    \end{align*}
    where $d=rn$ and $\Pi_m=\ketbra{m}\otimes I_r=r\sigma_m$. As in the one-bit case, $U^\ast\ket{0}$ is just a uniformly random pure state, so the winning probability can be expressed as an integral over the real $2d-1$-sphere:
    \begin{align*}
        \mfk{c}_{1\rightarrow 1}(\ttt{Q}_{r,n}|\ttt{A})&=\frac{\Gamma(d)}{2\pi^d}\int_{S^{2d-1}}\max_m\sum_{j=2rm}^{2r(m+1)-1}\Omega_j^2d^{2d-1}\Omega\\
        &=\frac{1}{\pi^dd}\int_{\RR^{2d}}\max_m\sum_{j=2rm}^{2r(m+1)-1}|v_j|^2e^{-\norm{v}^2}d^{2d}v.
    \end{align*}
    We rewrite the integration as an integration over $n$ vectors $v^m=(v_{2rm+1},\ldots,v_{2r(m+1)})$, and then change to spherical coordinates:
    \begin{align*}
        \mfk{c}_{1\rightarrow 1}(\ttt{Q}_{r,n}|\ttt{A})&=\frac{1}{\pi^dd}\int_{\RR^{2r}}\cdots\int_{\RR^{2r}}\max_m\norm{v^m}^2e^{-\sum_m\norm{v^m}^2}dv^1\cdots dv^n\\
        &=\frac{1}{\pi^dd}\parens*{\frac{2\pi^{r}}{\Gamma(r)}}^{m}\int_0^\infty\cdots\int_0^\infty\max_m(r^m)^2e^{-\sum_{m'}(r^{m'})^2}(r^1)^{2r-1}dr^1\cdots(r^n)^{2r-1}dr^n\\
        &=\frac{1}{d(r-1)!^m}\int_0^\infty\cdots\int_0^\infty\max_mt_me^{-\sum_{m'}t_{m'}}t_1^{r-1}dt_1\cdots t_n^{r-1}dt_n,
    \end{align*}
    via the change of variables $t_m=(r^m)^2$. Now, we can split the integration over $n$ subsets $D_m=\set*{(t_1,\ldots,t_n)}{t_m\geq t_{m'}\forall m'}$. By symmetry, we have that
    \begin{align*}
        \mfk{c}_{1\rightarrow 1}(\ttt{Q}_{r,n}|\ttt{A})&=\frac{m}{d(r-1)!^m}\int_0^\infty\int_0^{t_n}\cdots\int_0^{t_n}t_ne^{-\sum_mt_m}t_1^{r-1}dt_1\cdots t_m^{r-1}dt_n\\
        &=\frac{1}{r!(r-1)!^{m-1}}\int_0^\infty\parens*{\int_0^ts^{r-1}e^{-s}ds}^{n-1}t^{r}e^{-t}dt.
    \end{align*}
    Again, $\int_0^ts^{r-1}e^{-s}ds=\gamma(r,t)=(r-1)!e^{-t}\sum_{q=r}^\infty\frac{t^q}{q!}$, so
    \begin{align*}
        \mfk{c}_{1\rightarrow 1}(\ttt{Q}_{r,n}|\ttt{A})&=\frac{1}{r!}\int_0^\infty\parens*{e^-t\sum_{q=r}^\infty\frac{t^q}{q!}}^{n-1}t^{r}e^{-t}dt\\
        &=\frac{1}{r!}\sum_{q_1,\ldots,q_{n-1}=r}^\infty\frac{1}{q_1!\cdots q_{n-1}!}\int_0^\infty t^{q_1+\ldots+q_{n-1}+r}e^{-nt}dt\\
        &=\sum_{q_1,\ldots,q_{n-1}=r}^\infty\frac{(q_1+\ldots+q_{n-1}+r)!}{q_1!\cdots q_{n-1}!r!}n^{-(q_1+\ldots+q_{n-1}+r+1)}.\qedhere
    \end{align*}
\end{proof}

\subsection{Untelegraphable-indistinguishable security}

We can study a telegraphing-distinguishing attack similar to the telegraphing attack of the previous section, which allows us to find a more concrete lower bound on the Haar measure encryption, with a closed-form expression. Consider the telegraphing-distinguishing attack $\ttt{A}=(\{m_0,m_1\},B,\{P^U_b\},\Phi)$ against the Haar-measure encryption of $n$ messages, where $m_0,m_1$ are arbitrary distinct messages, $B=H=\C^{[rn]}$, $\Phi$ is measurement in the computational basis, and $P_{b}^U=\sum_{i\in G_{b,U}}\ketbra{i}{i}$, where $G_{0,U}=\set*{i}{\braket{i}{U\sigma_{m_0}U^\ast|i}\geq\braket{i}{U\sigma_{m_1}U^\ast|i}}$ and $G_{1,U}=[2r]\backslash G_{0,U}$.

\begin{proposition}\label{prop:td-lowerbound}
    The telegraphing-distinguishing probability of the attack $\ttt{A}$ above is
    $$\mfk{c}_{1\rightarrow 1}^2(\ttt{Q}_{r,n}|\ttt{A})=\frac{1}{2}+\frac{1}{2^{2r+1}}\binom{2r}{r}=\frac{1}{2}+\frac{1}{2\sqrt{\pi r}}+O\parens*{\frac{1}{r^{3/2}}}.$$
\end{proposition}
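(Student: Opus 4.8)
The plan is to follow the proof of \cref{prop:lowerbound} almost line for line; the one genuinely new ingredient is a normalisation bookkeeping step that absorbs the $n-2$ message blocks not involved in the distinguishing task. First I would unfold the telegraphing-distinguishing value using \cref{def:cloning_attack} and \cref{lem:tg-char}. Since $\Phi$ is the computational-basis measurement and $P^U_b=\sum_{i\in G_{b,U}}\ketbra{i}{i}$ with $G_{0,U}$ and $G_{1,U}$ a partition of $[rn]$ chosen so that $i\in G_{0,U}$ iff $\braket{i}{U\Pi_{m_0}U^\ast}{i}\ge\braket{i}{U\Pi_{m_1}U^\ast}{i}$, we obtain
\begin{align*}
  \mfk{c}^2_{1\rightarrow 1}(\ttt{Q}_{r,n}|\ttt{A}) &= \frac{1}{2r}\int\sum_{i\in[rn]}\max_b\braket{i}{U\Pi_{m_b}U^\ast}{i}\,dU \\
  &= \frac{n}{2}\int\max_b\braket{0}{U\Pi_{m_b}U^\ast}{0}\,dU,
\end{align*}
where the second equality is Haar invariance applied to each of the $rn$ identical diagonal terms. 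The prefactor $\tfrac n2=\tfrac12\cdot\tfrac1r\cdot rn$ is the only place $n$ enters, and it is larger than the corresponding prefactor $1$ in \cref{prop:lowerbound} precisely because there are $rn$ rather than $2r$ basis indices.

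Next, exactly as in \cref{prop:lowerbound}, $U^\ast\ket 0$ is a Haar-random unit vector $\ket\psi\in\C^{rn}$, so the value equals $\tfrac n2\int\max_b\braket{\psi}{\Pi_{m_b}}{\psi}\,d\psi$. The structural observation that makes the reduction work is that $\max_b\braket{\psi}{\Pi_{m_b}}{\psi}$ depends only on the $4r$ real coordinates of $\psi$ lying in $\operatorname{supp}\Pi_{m_0}\oplus\operatorname{supp}\Pi_{m_1}$. Writing the integral over $S^{2rn-1}$ and passing to a Gaussian integral over $\RR^{2rn}$ via the change of variables from spherical to Gaussian coordinates used in the proof of \cref{prop:lowerbound}, the $2r(n-2)$ coordinates outside these two supports enter only through a factor $e^{-\norm{u}^2}$, which integrates to $\pi^{r(n-2)}$; splitting the remaining $\RR^{4r}=\RR^{2r}\times\RR^{2r}$, passing to radial coordinates and substituting $s=\norm{w^0}^2$, $t=\norm{w^1}^2$ leaves
\begin{align*}
  \mfk{c}^2_{1\rightarrow 1}(\ttt{Q}_{r,n}|\ttt{A}) &= \frac{n}{2}\cdot\frac{1}{rn\,(r-1)!^2}\int_0^\infty\!\int_0^\infty\max\{s,t\}\,e^{-s-t}\,s^{r-1}t^{r-1}\,ds\,dt \\
  &= \frac{1}{2r\,(r-1)!^2}\int_0^\infty\!\int_0^\infty\max\{s,t\}\,e^{-s-t}\,s^{r-1}t^{r-1}\,ds\,dt.
\end{align*}
Here the $\tfrac n2$ from the larger message space cancels against the $\tfrac1n$ dilution of the Haar state's mass (equivalently, against the $\pi^{r(n-2)}$ Gaussian factor), leaving exactly the prefactor $\tfrac1{2r(r-1)!^2}$ and exactly the double integral that appear midway through the proof of \cref{prop:lowerbound} with $d=2r$. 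Consequently the remaining evaluation — expanding via the lower incomplete gamma function and simplifying with the regularised incomplete beta identities — transfers verbatim, giving $\mfk{c}^2_{1\rightarrow1}(\ttt{Q}_{r,n}|\ttt{A})=\tfrac12+\tfrac{1}{2^{2r+1}}\binom{2r}{r}$, independent of $n$. The asymptotic $\tfrac12+\tfrac{1}{2\sqrt{\pi r}}+O(r^{-3/2})$ then follows from the Stirling bounds $\binom{2r}{r}=\tfrac{4^r}{\sqrt{\pi r}}(1+O(1/r))$ used in \cref{cor:better-lowerbound}.

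I expect no serious analytic obstacle: every integral encountered is one already evaluated in \cref{prop:lowerbound} or \cref{cor:better-lowerbound}. The only point requiring care is the cancellation of the dimension-dependent factors described above — verifying that the $\tfrac n2$ prefactor and the $\pi^{r(n-2)}$ arising from integrating out the $n-2$ irrelevant blocks combine to give exactly the $n=2$ answer, so that the value is genuinely independent of the ambient message space.
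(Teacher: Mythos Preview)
Your proposal is correct and follows essentially the same approach as the paper: reduce to $\frac{n}{2}\int\max_b\braket{0}{U\Pi_{m_b}U^\ast}{0}\,dU$ by Haar invariance, pass to a Gaussian integral over $\RR^{2rn}$, integrate out the $2r(n-2)$ coordinates of the irrelevant message blocks (contributing a factor $\pi^{r(n-2)}$), and observe that what remains is exactly the one-bit computation of \cref{prop:lowerbound} with $d=2r$. The paper's proof does precisely this and then invokes \cref{prop:lowerbound,cor:better-lowerbound} for the closed form and asymptotics, just as you propose.
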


\begin{proof}
    The proof begins identically to the one-bit case, but with $d/2$ replaced with $r$. First, 
    \begin{align*}
        \mfk{c}_{1\rightarrow 1}^2(\ttt{Q}_{r,n}|\ttt{A})&=\int\frac{1}{2}\sum_b\Tr\squ*{\Phi(U\tfrac{1}{r}\Pi_{m_b}U^\ast)P_{b}^U}dU\\
        &=\frac{1}{2r}\int\sum_{b}\sum_{i\in G_{
       b,U}}\braket{i}{U\Pi_{m_b}U^\ast}{i}dU\\
       &=\frac{n}{2}\int\max_b\braket{0}{U\Pi_{m_b}U^\ast}{0}dU\\
       &=\frac{n\Gamma(d)}{4\pi^d}\int_{S^{2d-1}}\max_b\sum_{i=2br}^{2(b+1)r-1}\Omega_i^2d^{2d-1}\Omega\\
       &=\frac{1}{2r\pi^d}\int_{\RR^{2d}}\max_b\sum_{i=2br}^{2(b+1)r-1}v_i^2e^{-\norm{v}^2}d^{2d}v.
    \end{align*}
    Now, we rewrite the integration as an integration over $3$ vectors $v^0=(v_{0},\ldots,v_{2r-1})$, $v^1=(v_{2r},\ldots,v_{4r-1})$, $v^2=(v_{4r},\ldots,v_{2d-1})$, and then integrate $v^2$:
    \begin{align*}
        w_{dist}(\Phi,P)&=\frac{1}{2r\pi^d}\int_{\RR^{2d-4r}}\int_{\RR^{2r}}\int_{\RR^{2r}}\max_{b=0,1}\norm{v^b}^2e^{-\norm{v^0}^2-\norm{v^1}^2-\norm{v^2}^2}d^{2r}v^0d^{2r}v^1d^{2d-4r}v^2\\
        &=\frac{1}{2r\pi^d}\pi^{d-2r}\int_{\RR^{2r}}\int_{\RR^{2r}}\max_{b=0,1}\norm{v^b}^2e^{-\norm{v^0}^2-\norm{v^1}^2}d^{2r}v^0d^{2r}v^1.
    \end{align*}
    This is equal to the winning probability of the one-bit game with $d=2r$, giving the result by~\cref{prop:lowerbound,cor:better-lowerbound}.
\end{proof}

\begin{corollary}\label{cor:lower-from-dist-teleg}
    The telegraphing value of the rank-$r$ Haar-measure encryption of $n$ messages is lower-bounded as
    $$\mfk{c}_{1\rightarrow 1}(\ttt{Q}_{r,n}|\scr{M})\geq \frac{1}{n}+\frac{1}{n\sqrt{\pi r}}+O\parens*{\frac{1}{nr^{3/2}}}$$
\end{corollary}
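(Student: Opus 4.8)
The plan is to produce one explicit telegraphing attack on $\ttt{Q}_{r,n}$ whose value already realises the stated bound, so that \cref{cor:lower-from-dist-teleg} follows simply by taking the supremum in the definition of $\mfk{c}_{1\rightarrow 1}(\ttt{Q}_{r,n}|\scr{M})$. I would reuse the telegraphing attack $\ttt{A}=(B,\{P^U_m\},\Phi)$ against $\ttt{Q}_{r,n}$ defined earlier for the $n$-message search game, where the telegrapher measures the ciphertext in the computational basis and sends the outcome $i$, and the receiver, who knows $U$, outputs the message maximising $\braket{i}{U\sigma_m U^\ast}{i}$ (ties broken lexicographically). From the computation already done for that attack, together with Haar-invariance of $U^\ast\ket{0}$,
\[
\mfk{c}_{1\rightarrow 1}(\ttt{Q}_{r,n}|\ttt{A})=\int\max_{m\in[n]}\braket{0}{U\Pi_m U^\ast}{0}\,dU,\qquad \Pi_m=\ketbra{m}\otimes I_r.
\]

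Next I would fix two arbitrary distinct messages $m_0\neq m_1$ and invoke the trivial pointwise inequality $\max_{m\in[n]}\braket{0}{U\Pi_m U^\ast}{0}\geq\max_{b\in\{0,1\}}\braket{0}{U\Pi_{m_b}U^\ast}{0}$, valid for every $U$ since the right-hand side maximises over a subset of the same nonnegative quantities. Integrating gives $\mfk{c}_{1\rightarrow 1}(\ttt{Q}_{r,n}|\scr{M})\geq\int\max_{b\in\{0,1\}}\braket{0}{U\Pi_{m_b}U^\ast}{0}\,dU$. This last integral is exactly the quantity analysed in the proof of \cref{prop:td-lowerbound}, where the telegraphing-distinguishing value of the two-message attack was shown to equal $\tfrac{n}{2}\int\max_b\braket{0}{U\Pi_{m_b}U^\ast}{0}\,dU$ and this value was computed to be $\tfrac12+\tfrac{1}{2^{2r+1}}\binom{2r}{r}$. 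Hence
\[
\int\max_{b\in\{0,1\}}\braket{0}{U\Pi_{m_b}U^\ast}{0}\,dU=\frac{2}{n}\parens*{\frac12+\frac{1}{2^{2r+1}}\binom{2r}{r}}=\frac1n+\frac{1}{n}\cdot\frac{1}{4^r}\binom{2r}{r},
\]
and the Stirling estimate already established in the proof of \cref{cor:better-lowerbound} yields $\tfrac{1}{4^r}\binom{2r}{r}=\tfrac{1}{\sqrt{\pi r}}+O\parens*{\tfrac{1}{r^{3/2}}}$. Substituting gives $\mfk{c}_{1\rightarrow 1}(\ttt{Q}_{r,n}|\scr{M})\geq\tfrac1n+\tfrac{1}{n\sqrt{\pi r}}+O\parens*{\tfrac{1}{nr^{3/2}}}$, as claimed.

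I do not anticipate a genuine obstacle: every nontrivial computation has already been carried out in \cref{prop:td-lowerbound} and \cref{prop:lowerbound}, and the only new content is the observation that restricting a maximum-likelihood decoder to two of the $n$ candidate messages cannot improve its success probability. The one point requiring care is bookkeeping of the constant — applying the $\tfrac n2$ rescaling that links the distinguishing value to $\int\max_b(\cdot)\,dU$ in the correct direction, and carrying the $\tfrac1n$ prefactor through the Stirling error term so that the remainder ends up as $O\parens*{\tfrac{1}{nr^{3/2}}}$ rather than $O\parens*{\tfrac{1}{r^{3/2}}}$.
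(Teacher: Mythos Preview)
Your proposal is correct and lands on the same computation as the paper, namely reducing to the two-message integral $\int\max_b\braket{0}{U\Pi_{m_b}U^\ast}{0}\,dU$ from \cref{prop:td-lowerbound} and then applying the Stirling estimate. The only packaging difference is that the paper constructs the telegraphing attack directly by having the receiver always output $m_0$ or $m_1$ (the \cite{BL20} reduction, yielding value exactly $\tfrac{2}{n}$ times the distinguishing value), whereas you take the full $n$-message maximum-likelihood attack and lower-bound its value by restricting the pointwise maximum to two messages; both routes are equally valid and produce the identical bound.
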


This follows by applying the argument of~\cite[Theorem 12]{BL20}. Essentially, to adapt the attack to a telegraphing attack, it proceeds in the same way and on output $b$ from the telegraphing-distinguishing attack, the adversary outputs $m_b$. Then, if the original message is not $m_0$ or $m_1$, the attack always loses; but if the original message was $m_0$ or $m_1$, then the attack succeeds with the same probability as the telegraphing-distinguishing attack. As such, the new attack wins with probability $\frac{2}{n}$ times the telegraphing-distinguishing probability.

To finish this section, note we can combine this theorem with the results of \cref{sec:haar_random} to get upper and lower bounds on the telegraphing-distinguishing value that are tight in the order of $r$.

\begin{corollary} \label{cor:Haar_measure_scheme_lower_bound}
    The $t$-copy telegraphing-distinguishing value of the rank-$r$ Haar-measure encryption of $n$ messages is bounded as
    \begin{align*}
        \frac{1}{2}+\frac{1}{6}\sqrt{\frac{t}{\pi r}}+O\parens*{\frac{t^{1/2}}{r^{3/2}}}\leq\mfk{c}_{t\rightarrow 1}^2(\ttt{Q}_{r,n}|\scr{M})\leq\frac{1}{2}+\frac{7t}{\sqrt{r}}.
    \end{align*}
\end{corollary}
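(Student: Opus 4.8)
The bound splits into its two halves. The upper bound $\mfk{c}^2_{t\rightarrow 1}(\ttt{Q}_{r,n}|\scr{M})\leq\tfrac12+\tfrac{7t}{\sqrt r}$ is precisely \cref{thm:haar_random_encryption_scheme_t_copy_untelegraphable_indistinguishable}, so no further work is needed there; only the lower bound requires an argument, and it is obtained by amplifying the single-copy telegraphing-distinguishing attack of \cref{prop:td-lowerbound}.

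For the lower bound I would fix the $t$-copy telegraphing-distinguishing attack $\ttt{A}^{(t)}$ against $\ttt{Q}_{r,n}$ defined exactly as in the $\ttt{Q}_{r,2}$ case treated earlier in \cref{sec:lowerbounds}: given $t$ copies of the ciphertext, measure each copy in the computational basis rotated by $U$, run the single-copy distinguishing attack $\ttt{A}$ of \cref{prop:td-lowerbound} on each, and output the majority of the $t$ resulting guesses, breaking ties uniformly at random. This is still a measurement-channel attack, so $\mfk{c}^2_{t\rightarrow 1}(\ttt{Q}_{r,n}|\scr{M})\geq\mfk{c}^2_{t\rightarrow 1}(\ttt{Q}_{r,n}|\ttt{A}^{(t)})$. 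Conditioned on the key $U$ the $t$ runs are i.i.d.\ Bernoulli trials with success probability $p=\mfk{c}^2_{1\rightarrow 1}(\ttt{Q}_{r,n}|\ttt{A})$, so, exactly as in the $\ttt{Q}_{r,2}$ computation, the value of $\ttt{A}^{(t)}$ equals the majority-vote expression analysed in \cref{lem:t-copy-lowerbound}; and by \cref{prop:td-lowerbound} together with \cref{cor:better-lowerbound} (applied with $d=2r$) we have $p=\tfrac12+\delta$ with $\delta=\tfrac{1}{2^{2r+1}}\binom{2r}{r}=\tfrac{1}{2\sqrt{\pi r}}+O\!\big(r^{-3/2}\big)$.

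It then remains to invoke \cref{lem:t-copy-lowerbound}. For $t\geq 4$ and $r$ large enough that $\delta\leq\tfrac{1}{2\sqrt{t-1}}$ — which holds as soon as $t\leq\pi r+1$, hence throughout any regime ($t$ fixed, or $t$ growing sub-linearly in $r$) in which the statement is meant to be read — the lemma gives $\mfk{c}^2_{t\rightarrow 1}(\ttt{Q}_{r,n}|\ttt{A}^{(t)})\geq\tfrac12+\tfrac13\sqrt t\,\delta$. Substituting the expansion of $\delta$ and multiplying through by $\tfrac13\sqrt t$ yields $\tfrac12+\tfrac{\sqrt t}{6\sqrt{\pi r}}+O\!\big(\tfrac{t^{1/2}}{r^{3/2}}\big)$, which is the claimed lower bound. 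The finitely many cases $t\in\{1,2,3\}$ are handled by monotonicity of the cloning value in the number of copies (the remark following \cref{def:cloning_attack}, applied to $\scr{F}=\scr{M}$) together with the $t=1$ estimate of \cref{prop:td-lowerbound}, which already satisfies $\tfrac12+\tfrac{1}{2\sqrt{\pi r}}+O(r^{-3/2})\geq\tfrac12+\tfrac16\sqrt{\tfrac{t}{\pi r}}+O(r^{-3/2})$ for all $t\leq 9$.

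The argument carries no genuine analytic difficulty once \cref{prop:td-lowerbound}, \cref{cor:better-lowerbound}, and \cref{lem:t-copy-lowerbound} are available; the only points that need care are the error bookkeeping — checking that the $O(r^{-3/2})$ slack in $\delta$ becomes $O(t^{1/2}r^{-3/2})$ after multiplication by $\sqrt t$ — and confirming that the hypothesis $\delta\leq\tfrac{1}{2\sqrt{t-1}}$ of \cref{lem:t-copy-lowerbound} is compatible with the range of $t$ against $r$ for which the estimate is asserted.
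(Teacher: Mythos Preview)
Your overall plan is exactly the paper's: the upper bound is \cref{thm:haar_random_encryption_scheme_t_copy_untelegraphable_indistinguishable}, and the lower bound comes from amplifying the single-copy attack of \cref{prop:td-lowerbound} by majority vote and invoking \cref{lem:t-copy-lowerbound}. That is literally the paper's proof of the corollary.

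There is, however, a gap in your justification of the lower bound. You write that ``conditioned on the key $U$ the $t$ runs are i.i.d.\ Bernoulli trials with success probability $p=\mfk{c}^2_{1\rightarrow 1}(\ttt{Q}_{r,n}|\ttt{A})$''. That is false: conditioned on $U$ and on $b$, each run of the fixed-computational-basis attack succeeds with some probability $q_{U,b}$ that genuinely depends on $U$; only its Haar average equals $p$. With the attack as you describe it, the $t$-copy value is $\int\frac{1}{2}\sum_b f_t(q_{U,b})\,dU$ for $f_t$ the majority-vote function, and $f_t$ is neither convex nor concave on $[0,1]$, so you cannot simply replace $q_{U,b}$ by its mean $p$ to reach $f_t(p)$ and feed that into \cref{lem:t-copy-lowerbound}. (Relatedly, ``the computational basis rotated by $U$'' does not make sense --- the pirate does not know $U$.)

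The fix is the one the paper's text for $\ttt{A}^{(t)}$ points to with the phrase ``measuring in a uniformly random basis for each copy'': for copy $i$ the pirate samples an independent Haar-random $V_i$, measures in the basis $\{V_i\ket{j}\}_j$, and transmits $(V_i,j_i)$; the receiver then applies the single-copy decision rule of \cref{prop:td-lowerbound} with effective key $V_iU$. Since $V_iU$ is Haar-distributed independently of $U$, each round now succeeds with probability exactly $p$ for every $U$ and every $b$ (using the $\sigma_{m_0}\leftrightarrow\sigma_{m_1}$ symmetry), the rounds are independent, and the majority-vote value is exactly $f_t(p)$. After this correction, your invocation of \cref{lem:t-copy-lowerbound} and the remaining bookkeeping (error propagation, small-$t$ cases via monotonicity) are fine.
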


This corollary follows by combining \cref{prop:td-lowerbound,lem:t-copy-lowerbound,thm:haar_random_encryption_scheme_t_copy_untelegraphable_indistinguishable}
\section{Minimality of the Haar measure game} \label{sec:minimality}

\subsection{One-copy minimality}

In this section, we extend the minimality result of~\cite{MST21arxiv} to the context of general cloning attacks, which will allow us to get minimality for telegraphing attacks as well as cloning attacks to an arbitrary number of receivers.

\begin{theorem}\label{thm:haar-minimality}
    Let $\ttt{Q}=(M,K,\pi,H,\{\sigma^k_m\}_{k,m})$ be a correct QECM and let $\scr{F}$ be a class of channels that is closed under $\Phi\mapsto\Phi(V\cdot V^\ast)$ for every isometry $V$. Then for $r=\dim H-|M|+1$ and $n=|M|$, $\mfk{c}^N_{1\rightarrow s}(\ttt{Q}_{r,n}|\scr{F})\leq\mfk{c}^N_{1\rightarrow s}(\ttt{Q}|\scr{F})$.
\end{theorem}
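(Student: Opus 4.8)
The plan is to show that every $1$-to-$s$ $N$-message cloning attack over $\scr{F}$ against $\ttt{Q}_{r,n}$ can be converted into one against $\ttt{Q}$ of no smaller value; taking suprema over attacks then gives $\mfk{c}^N_{1\rightarrow s}(\ttt{Q}_{r,n}|\scr{F})\le\mfk{c}^N_{1\rightarrow s}(\ttt{Q}|\scr{F})$. The reduction exploits that correctness squeezes the ciphertexts of $\ttt{Q}$ into small, pairwise orthogonal subspaces, each of which fits inside a single block of the codespace $\C^{rn}$ of $\ttt{Q}_{r,n}$.

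First I would record the consequence of correctness: for $\pi$-almost every $k$ there is a POVM $\{P^k_m\}_m$ with $\Tr(P^k_m\sigma^k_m)=1$ for all $m$, which forces the supports $H^k_m:=\operatorname{supp}(\sigma^k_m)$ to be pairwise orthogonal in $H$. Writing $d^k_m:=\dim H^k_m\ge 1$ and using $\sum_m d^k_m\le\dim H$, we get $d^k_m\le\dim H-(|M|-1)=r$, and also $rn\ge\dim H$ since $\dim H\ge|M|$. Identifying $\C^{rn}=\bigoplus_{m\in[n]}\C^r$ with the $m$-th summand called the \emph{$m$-th block}, I fix once and for all an isometry $V:H\hookrightarrow\C^{rn}$ and, measurably in $k$ (a routine measurable-selection argument after partitioning $K$ by the rank profile $(d^k_0,\dots,d^k_{n-1})$), a unitary $W_k\in\mc{U}(rn)$ with $W_k^\ast V(H^k_m)$ contained in the $m$-th block for every $m$; this is possible because the $V(H^k_m)$ are orthogonal with $\dim V(H^k_m)=d^k_m\le r$.

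Given an attack $\ttt{A}=(M_0,\{B_i\},\{P^{i,U}_m\},\Phi)$ over $\scr{F}$ against $\ttt{Q}_{r,n}$, and parameters $G\in\mc{U}(rn)$ and a block-diagonal unitary $Y=Y_0\oplus\cdots\oplus Y_{n-1}$ with each $Y_m\in\mc{U}(r)$, I would define the attack $\ttt{B}_{G,Y}=(M_0,\{B_i\},\{P^{i,\,GW_kY}_m\},\Phi')$ against $\ttt{Q}$ by $\Phi'(\cdot):=\Phi\big((GV)\cdot(GV)^\ast\big)$. Since $GV$ is an isometry $H\to\C^{rn}$ and $\scr{F}$ is closed under $\Phi\mapsto\Phi(V'\cdot V'^\ast)$ for isometries $V'$, we have $\Phi'\in\scr{F}$; and the receivers, knowing $k$, can compute $GW_kY$ and apply $P^{i,\,GW_kY}_m$. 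The key identity, by direct rearrangement, is $(GV)\sigma^k_m(GV)^\ast=(GW_kY)\,\rho^{k,Y}_m\,(GW_kY)^\ast$, where $\rho^{k,Y}_m:=Y^\ast W_k^\ast V\sigma^k_m V^\ast W_kY$ is a density operator supported in the $m$-th block with the same spectrum as $\sigma^k_m$. It equals $\sigma_m=\tfrac1r\Pi_m$ only when $\sigma^k_m$ is maximally mixed, and bridging this gap is the heart of the argument.

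I would finish by averaging. Put $F_m(\tau):=\int_{\mc{U}(rn)}\Tr\big[(P^{0,U}_m\otimes\cdots\otimes P^{s-1,U}_m)\,\Phi(U\tau U^\ast)\big]dU$, which is linear in $\tau$ and satisfies $\mfk{c}^N_{1\rightarrow s}(\ttt{Q}_{r,n}|\ttt{A})=\tfrac1N\sum_{m\in M_0}F_m(\sigma_m)$. Now average $\mfk{c}^N_{1\rightarrow s}(\ttt{Q}|\ttt{B}_{G,Y})$ over Haar-random $G$ and Haar-random $Y_0,\dots,Y_{n-1}$: for fixed $Y$, right-invariance of the Haar measure makes $GW_kY$ Haar-distributed, so $\mathbb{E}_G$ of the $m$-th summand equals $F_m(\rho^{k,Y}_m)$; then, because $\rho^{k,Y}_m$ lives in the $m$-th block, $\mathbb{E}_{Y_m}[\rho^{k,Y}_m]=\tfrac1r\Pi_m=\sigma_m$, and linearity of $F_m$ gives $\mathbb{E}_{G,Y}[F_m(\rho^{k,Y}_m)]=F_m(\sigma_m)$. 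Integrating over $k$ the $k$-dependence collapses, so $\mathbb{E}_{G,Y}\big[\mfk{c}^N_{1\rightarrow s}(\ttt{Q}|\ttt{B}_{G,Y})\big]=\tfrac1N\sum_{m\in M_0}F_m(\sigma_m)=\mfk{c}^N_{1\rightarrow s}(\ttt{Q}_{r,n}|\ttt{A})$, hence some choice of $(G,Y)$ achieves $\mfk{c}^N_{1\rightarrow s}(\ttt{Q}|\ttt{B}_{G,Y})\ge\mfk{c}^N_{1\rightarrow s}(\ttt{Q}_{r,n}|\ttt{A})$, and taking the supremum over $\ttt{A}$ completes the proof. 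The main obstacle is exactly this last step: the ciphertexts of $\ttt{Q}$ need not be maximally mixed, but the dimensional slack guaranteed by $r=\dim H-|M|+1$ lets a block-local twirl $Y$ turn each of them into the Haar scheme's ciphertext \emph{on average}, while the key given to the receivers lets them track the twirl; the remaining checks ($\scr{F}$-membership of $\Phi'$ via closure under isometries, and the measurable choice of $W_k$) are routine.
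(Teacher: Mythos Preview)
Your proposal is correct and follows essentially the same approach as the paper: exploit correctness to get pairwise orthogonal ciphertext supports of dimension $\le r$, embed into $\C^{rn}$ with each support sitting in its own block, use Haar invariance to swap the key for a Haar-random unitary, and then use a block-local twirl to average the ciphertexts to $\sigma_m=\tfrac1r\Pi_m$.

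The only organizational differences are minor. The paper proceeds via a chain of four hybrid QECMs $\ttt{Q}\to\ttt{Q}^{(1)}\to\ttt{Q}^{(2)}\to\ttt{Q}^{(3)}\to\ttt{Q}_{r,n}$, whereas you package everything into a single averaging argument over $(G,Y)$. Also, the paper first diagonalizes each ciphertext and then twirls by random \emph{permutations} $\tau\in\mfk{S}_r^n$ (which suffices for diagonal states), while you skip the diagonalization and twirl by Haar-random block-diagonal \emph{unitaries} $Y\in\mc{U}(r)^n$; both give $\mathbb{E}[\,\cdot\,]=\sigma_m$ by the same invariance. Your presentation is a bit more compact, the paper's hybrid structure is a bit more modular, but the content is the same.
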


For example, the classes of all channels, of measurement channels, and of entanglement-breaking channels satisfy the conditions of the theorem.

\begin{proof}
    We can assume without loss of generality that $M=[n]$. We proceed via a sequence of hybrid QECMs $\ttt{Q}^{(i)}$. First, we enlarge the dimension of the ciphertext. Let $V:H\rightarrow\C^{rn}$ be an isometry and let $\tilde{\sigma}^k_m=V\sigma^k_{m}V^\ast$. Since $\scr{F}$ is closed under preconjugation by isometries, it is clear $\mfk{c}^N_{1\rightarrow s}(\ttt{Q}^{(1)}|\scr{F})\leq\mfk{c}^N_{1\rightarrow s}(\ttt{Q}|\scr{F})$.

    Next, note that as $\ttt{Q}^{(1)}$ is correct, $\sum_m\rank(\tilde{\sigma}^k_m)=\sum_m\rank(\sigma^k_m)\leq \dim H$, so since $\rank(\tilde{\sigma}^k_m)\geq 1$, $$\rank(\tilde{\sigma}^k_m)\leq\dim H-\sum_{m'\neq m}\rank(\tilde{\sigma}^k_{m'})\leq\dim H-(n-1)=r.$$ Knowing this, we can write $\tilde{\sigma}^k_m=U_k\delta^k_m U_k^\ast$, where $U_k$ is unitary and $\delta^k_m$ is diagonal with support contained in $\mathrm{span}\set{\ket{i+r(m-1)}}{i\in[r]}$. Then, $\pi$ induces a distribution $\pi'$ on $\mc{U}(rn)\times D_{rn}^n$, where $D_{rn}\subseteq\mc{D}(\C^{rn})$ is the set of diagonal density matrices, as $$\int_{\mc{U}(rn)\times D_{rn}^n}f(U,\delta_1,\ldots,\delta_n)d\pi'(U,\delta_1,\ldots,\delta_n)=\int_kf(U_k,\delta^k_1,\ldots,\delta^k_n)d\pi(k).$$
    Let $\mu$ be the marginal of $\pi'$ on $D_{rn}^n$, let $\pi'_{\delta_1,\ldots,\delta_n}$ be the conditional distribution of $\pi'$ on $\mc{U}(rn)$ given $\delta_1,\ldots,\delta_n$, and define $\sigma^{U,\delta_1,\ldots,\delta_n}_m=U\delta_m U^\ast$. Define the QECM $\ttt{Q}^{(2)}=([n],\mc{U}(rn)\times D_{rn}^n,\mu_{Haar}\times\mu,\C^{rn},\{\sigma^{U,\delta_1,\ldots,\delta_n}_m\}_{(U,\delta_1,\ldots,\delta_n),m})$. This is again a correct QECM. Let $\ttt{A}=(M_0,\{B_i\}_i,\{P^{i,U,\delta_1,\ldots,\delta_n}_m\}_{i,U,\delta_1,\ldots,\delta_n,m},\Phi)$ be a $1$-to-$s$ $N$-message cloning attack over $\scr{F}$ against $\ttt{Q}^{(2)}$. For each $U\in\mc{U}(rn)$, define the $1$-to-$s$ $N$-message cloning attack over $\scr{F}$ against $\ttt{Q}^{(1)}$ as $\ttt{A}_{U}=(M_0,\{B_i\},\{P^{i,UU_k,\delta^k_1,\ldots,\delta^k_n}_{m}\}_{i,k,m},\Phi_U)$, where $\Phi_U(\rho)=\Phi(U\rho U^\ast)$. Using Haar invariance, we find that
    \begin{align*}
        \mfk{c}^N_{1\rightarrow s}&(\ttt{Q}^{(2)}|\ttt{A})=\int_{\mc{U}(rn)}\int_{D_{rn}^n}\frac{1}{N}\sum_{m\in M_0}\Tr\squ{(P^{1,U,\delta_1,\ldots,\delta_n}_m\otimes\cdots\otimes P^{s,U,\delta_1,\ldots,\delta_n}_m)\Phi(U\delta_m U^\ast)}d\mu(\delta_1,\ldots,\delta_n)dU\\
        &=\int_{\mc{U}(rn)}\int_{D_{rn}^n}\int_{\mc{U}(rn)}\frac{1}{N}\sum_{m\in M_0}\Tr\squ{(P^{1,U,\delta_1,\ldots,\delta_n}_m\otimes\cdots\otimes P^{s,U,\delta_1,\ldots,\delta_n}_m)\Phi(U\delta_m U^\ast)}d\pi'_{\delta_1,\ldots,\delta_n}(V)d\mu(\delta_1,\ldots,\delta_n)dU\\
        &=\int_{\mc{U}(rn)}\int_{\mc{U}(rn)\times D_{rn}^n}\frac{1}{N}\sum_{m\in M_0}\Tr\squ{(P^{1,UV,\delta_1,\ldots,\delta_n}_m\otimes\cdots\otimes P^{s,UV,\delta_1,\ldots,\delta_n}_m)\Phi(UV\delta_m V^\ast U^\ast)}d\pi'(V,\delta_1,\ldots,\delta_n)dU\\
        &=\int_{\mc{U}(rn)}\int_{K}\frac{1}{N}\sum_{m\in M_0}\Tr\squ{(P^{1,UU_k,\delta^k_1,\ldots,\delta^k_n}_m\otimes\cdots\otimes P^{s,UU_k,\delta_1^k,\ldots,\delta^k_n}_m)\Phi(UU_k\delta^k_m U_k^\ast U^\ast)}d\pi(k)dU\\
        &=\int_{\mc{U}(rn)}\mfk{c}^N_{1\rightarrow s}(\ttt{Q}^{(1)}|\ttt{A}_U)dU\leq\mfk{c}^N_{1\rightarrow s}(\ttt{Q}^{(1)}|\scr{F}),
    \end{align*}
    so $\mfk{c}^N_{1\rightarrow s}(\ttt{Q}^{(2)}|\scr{F})\leq\mfk{c}^N_{1\rightarrow s}(\ttt{Q}^{(1)}|\scr{F})$.

    Now, for any permutation $\tau$ of $[rn]$, write $V_\tau$ for the action on $\C^{rn}$ given by $V_\tau\ket{i}=\ket{\tau(i)}$. Let $S\subseteq \mfk{S}_{rn}$ be the set of permutations that preserves the intervals $[r(m-1)+1,\ldots,rm]$ for each $m\in [n]$. We have that $S\cong \mfk{S}_{r}^n$. Let $\upsilon$ be the uniform distribution $S$ and define the correct QECM $\ttt{Q}^{(3)}=([n],\mc{U}(rn)\times D_{rn}^n\times S,\mu_{Haar}\times\mu\times\upsilon,\C^{rn},\{\sigma_m^{UV_\tau,\delta_1,\ldots,\delta_n}\}_{(U,\delta_1,\ldots,\delta_n,\tau),m})$. Let $\ttt{A}=(M_0,\{B_i\}_i,\{P^{i,U,\delta_1,\ldots,\delta_n,\tau}_m\}_{i,U,\delta_1,\ldots,\delta_n,\tau,m},\Phi)$ be a $1$-to-$s$ $N$-message cloning attack over $\scr{F}$ against $\ttt{Q}^{(3)}$. Then for each $\tau\in S$, $\ttt{A}_\tau=(M_0,\{B_i\}_i,\{P^{i,UV_\tau^\ast,\delta_1,\ldots,\delta_n,\tau}_m\}_{i,U,\delta_1,\ldots,\delta_n,m},\Phi)$ is a $1$-to-$s$ $N$-message cloning attack over $\scr{F}$ against $\ttt{Q}^{(2)}$, so
    \begin{align*}
        \mfk{c}^N_{1\rightarrow s}&(\ttt{Q}^{(3)}|\ttt{A})\\
        &=\int_{\mc{U}(rn)}\int_{D_{rn}^n}\int_S\frac{1}{N}\sum_{m\in M_0}\Tr\squ{(P^{1,U,\delta_1,\ldots,\delta_n,\tau}_m\otimes\cdots\otimes P^{s,U,\delta_1,\ldots,\delta_n,\tau}_m)\Phi(UV_\tau\delta_m V_\tau^\ast U^\ast)}d\upsilon(\tau)d\mu(\delta_1,\ldots,\delta_n)dU\\
        &=\int_S\int_{\mc{U}(rn)}\int_{D_{rn}^n}\frac{1}{N}\sum_{m\in M_0}\Tr\squ{(P^{1,UV_\tau^\ast,\delta_1,\ldots,\delta_n,\tau}_m\otimes\cdots\otimes P^{s,UV_\tau^\ast,\delta_1,\ldots,\delta_n,\tau}_m)\Phi(U\delta_mU^\ast)}d\mu(\delta_1,\ldots,\delta_n)dUd\upsilon(\tau)\\
        &=\int_S\mfk{c}^N_{1\rightarrow s}(\ttt{Q}^{(2)}|\ttt{A}_\tau)d\upsilon(\tau)\leq\mfk{c}^N_{1\rightarrow s}(\ttt{Q}^{(2)}|\scr{F}),
    \end{align*}
    and hence $\mfk{c}^N_{1\rightarrow s}(\ttt{Q}^{(3)}|\scr{F})\leq\mfk{c}^N_{1\rightarrow s}(\ttt{Q}^{(2)}|\scr{F})$.

    To finish, note that if $\delta_m$ is supported on $\mathrm{span}\set{\ket{r(m-1)+1},\ldots,\ket{rm}}$,
    $$\int_S\sigma_m^{UV_\tau,\delta_1,\ldots,\delta_n} d\upsilon(\tau)=U\int V_\tau\delta_m V_\tau^\ast d\upsilon(\tau) U^\ast=U\sigma_m U^\ast.$$
    Now, let $\ttt{A}=(M_0,\{B_i\}_i,\{P^{i,U}_m\}_{i,U,m},\Phi)$ be a $1$-to-$s$ $N$-message cloning attack over $\scr{F}$ against $\ttt{Q}_{r,n}$. Define the $1$-to-$s$ $N$-message cloning attack over $\scr{F}$ against $\ttt{Q}^{(3)}$ $\ttt{A}'=(M_0,\{B_i\}_i,\{P^{i,U}_m\}_{i,U,\delta_1,\ldots,\delta_n,\tau,m},\Phi)$.  Then,
    \begin{align*}
        \mfk{c}^N_{1\rightarrow s}(\ttt{Q}_{r,n}|\ttt{A})&=\int_{\mc{U}(rn)}\frac{1}{N}\sum_{m\in M_0}\Tr\squ{(P^{1,U}_m\otimes\cdots\otimes P^{s,U}_m)\Phi(U\sigma_m U^\ast)}dU\\
        &=\int_S\int_{D_{rn}^n}\int_{\mc{U}(rn)}\frac{1}{N}\sum_{m\in M_0}\Tr\squ{(P^{1,U}_m\otimes\cdots\otimes P^{s,U}_m)\Phi(\sigma^{UV_\tau,\delta_1,\ldots,\delta_n,\tau}_m)}dUd\pi(\delta_1,\ldots,\delta_n)d\upsilon(\tau)\\
        &=\mfk{c}^N_{1\rightarrow s}(\ttt{Q}^{(3)}|\ttt{A}').
    \end{align*}
    Taking suprema, we get that $\mfk{c}^N_{1\rightarrow s}(\ttt{Q}_{r,n}|\scr{F})\leq\mfk{c}^N_{1\rightarrow s}(\ttt{Q}^{(3)}|\scr{F})\leq\mfk{c}^N_{1\rightarrow s}(\ttt{Q}^{(2)}|\scr{F})\leq\mfk{c}^N_{1\rightarrow s}(\ttt{Q}^{(1)}|\scr{F})\leq\mfk{c}^N_{1\rightarrow s}(\ttt{Q}|\scr{F})$.
\end{proof}

\subsection{\texorpdfstring{$t$}{\textit{t}}-copy approximate minimality}

In this section, we show that an approximate form of the minimality of the Haar-measure encryption also holds for attacks using multiple copies of the encrypted state. 

\begin{theorem}\label{thm:multicopy-minimality}
    Let $\ttt{Q}=(M,K,\pi,H,\{\sigma^k_m\}_{k,m})$ be a correct QECM, let $\varepsilon>0$ such that $\int\norm{\sigma^k_m}d\pi(k)\leq\varepsilon$ for each $m$, and let $\scr{F}$ be a class of channels that is closed under $\Phi\mapsto\Phi(V\cdot V^\ast)$ for every isometry $V$. Then for $r=\dim H-|M|+1$ and $n=|M|$, $\mfk{c}^N_{t\rightarrow s}(\ttt{Q}_{r,n}|\scr{F})\leq\mfk{c}^N_{t\rightarrow s}(\ttt{Q}|\scr{F})+7t^2\varepsilon$.
\end{theorem}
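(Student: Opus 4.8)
The plan is to reuse the hybrid argument of \cref{thm:haar-minimality} with $t$ copies of every ciphertext carried along, and to quantify the single place where an exact reduction in that proof becomes an approximation. All of the hybrids $\ttt{Q}^{(1)},\ttt{Q}^{(2)},\ttt{Q}^{(3)}$ and the inequalities between their $\mfk{c}^N_{t\rightarrow s}(\,\cdot\,|\scr{F})$ values carry over verbatim to the $t$-to-$s$ setting: the isometric enlargement, the conjugation by a Haar unitary, and the in-block relabelling now use maps $\Phi\mapsto\Phi(W\cdot W^\ast)$ with $W=V^{\otimes t},U^{\otimes t},V_\tau^{\otimes t}$ acting on the $t$-fold tensor power of the codespace, each an isometry or unitary, so the closure hypothesis on $\scr{F}$ still applies, and every appeal to Haar-invariance of the $U$-integral is insensitive to $t$. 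The one adjustment is in the final hybrid: instead of randomising over block permutations I randomise over a Haar-random unitary inside each block. Thus let $\ttt{Q}^{(3')}$ have key $(U,\delta_1,\dots,\delta_n,W_1,\dots,W_n)\in\mc{U}(rn)\times D_{rn}^n\times\mc{U}(r)^n$ with distribution $\mu_{Haar}\times\mu\times\mu_{Haar}^{\times n}$ and ciphertext $\sigma_m^{U,\delta_\bullet,W_\bullet}=U\tilde W\delta_m\tilde W^\ast U^\ast$, where $\tilde W=W_1\oplus\cdots\oplus W_n$. Since $\tilde W$ is a single unitary on $\C^{rn}$ preserving the blocks, the step $\mfk{c}^N_{t\rightarrow s}(\ttt{Q}^{(3')}|\scr{F})\le\mfk{c}^N_{t\rightarrow s}(\ttt{Q}^{(2)}|\scr{F})$ runs exactly like the corresponding step for $\ttt{Q}^{(3)}$, via the change of variables $U\mapsto U\tilde W^\ast$; moreover $\ttt{Q}^{(3')}$ is again a correct QECM, and its induced marginal still satisfies $\int\norm{\delta_m}\,d\mu=\int_K\norm{\sigma^k_m}\,d\pi(k)\le\varepsilon$ because $\delta_m$ and $\sigma^k_m$ are unitarily conjugate. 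Hence, overall, $\mfk{c}^N_{t\rightarrow s}(\ttt{Q}^{(3')}|\scr{F})\le\mfk{c}^N_{t\rightarrow s}(\ttt{Q}|\scr{F})$.

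It remains to compare $\ttt{Q}_{r,n}$ with $\ttt{Q}^{(3')}$. Given a $t$-to-$s$ attack $\ttt{A}=(M_0,\{B_i\}_i,\{P^{i,U}_m\},\Phi)$ over $\scr{F}$ against $\ttt{Q}_{r,n}$, let $\ttt{A}'$ be the attack against $\ttt{Q}^{(3')}$ with the same $\Phi$ and POVMs $P^{i,U}_m$ (not depending on $\delta_\bullet,W_\bullet$). Because $\sigma_m=\tfrac1r\Pi_m$ is the maximally mixed state of block $m$ and $\delta_m$ is supported there, $\int_{\mc{U}(r)}W\delta_mW^\ast\,dW=\sigma_m$, so
\begin{align*}
\mfk{c}^N_{t\rightarrow s}(\ttt{Q}_{r,n}|\ttt{A})-\mfk{c}^N_{t\rightarrow s}(\ttt{Q}^{(3')}|\ttt{A}')
=\int\frac{1}{N}\sum_{m\in M_0}\int\Tr\squ*{(P^{1,U}_m\otimes\cdots\otimes P^{s,U}_m)\,\Phi\parens*{U^{\otimes t}Y_m(U^\ast)^{\otimes t}}}d\mu\,dU ,
\end{align*}
where $Y_m=\sigma_m^{\otimes t}-\int_{\mc{U}(r)}(W\delta_mW^\ast)^{\otimes t}\,dW$, with $W$ ranging over the unitary group of block $m$. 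For fixed $U,m,\delta_\bullet$, moving the POVM across the adjoint channel $\Phi^\dagger$ and conjugating by $U^{\otimes t}$ turns the integrand into $\Tr[R\,Y_m]$ with $0\le R\le I$; since $Y_m$ is supported on $(\text{block }m)^{\otimes t}\cong(\C^r)^{\otimes t}$, on which $\sigma_m^{\otimes t}=I_{r^t}/r^t$, this equals $\Tr[R_m]/r^t-\int_{\mc{U}(r)}\Tr[R_m(W\delta_mW^\ast)^{\otimes t}]\,dW$ for some $0\le R_m\le I_{r^t}$. By \cref{lem:haar_random_encryption_scheme_t_copy_untelegraphable_indistinguishable_lemma1} in dimension $r$ with $k=t$ and $\sigma=\delta_m$ — applicable whenever $t^2\norm{\delta_m}\le1$ (note $\norm{\delta_m}\ge1/r$ since $\rank(\delta_m)\le r$, so the lemma's dimension hypothesis is met), and otherwise giving $7t^2\norm{\delta_m}>1$, which trivially bounds a quantity lying in $[-1,1]$ — this is at most $7t^2\norm{\delta_m}$ in absolute value, unconditionally. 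Integrating and summing, $\abs*{\mfk{c}^N_{t\rightarrow s}(\ttt{Q}_{r,n}|\ttt{A})-\mfk{c}^N_{t\rightarrow s}(\ttt{Q}^{(3')}|\ttt{A}')}\le\frac1N\sum_m 7t^2\int\norm{\delta_m}\,d\mu\le7t^2\varepsilon$; combining with $\mfk{c}^N_{t\rightarrow s}(\ttt{Q}^{(3')}|\scr{F})\le\mfk{c}^N_{t\rightarrow s}(\ttt{Q}|\scr{F})$ and taking the supremum over $\ttt{A}$ proves the theorem.

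The crux is this final comparison. In \cref{thm:haar-minimality} the last hybrid collapses onto $\ttt{Q}_{r,n}$ via the \emph{exact} identity $\int V_\tau\delta_mV_\tau^\ast\,d\upsilon(\tau)=\sigma_m$, but for $t\ge2$ one has $\int\rho^{\otimes t}\neq(\int\rho)^{\otimes t}$, so a residual term is forced. Replacing the permutation twirl by a Haar block-unitary twirl is precisely what makes that residual a per-block deviation of the $t$-th Haar moment operator from the maximally mixed state — exactly the object \cref{lem:haar_random_encryption_scheme_t_copy_untelegraphable_indistinguishable_lemma1} controls — so the estimate becomes essentially automatic and the constant stays at $7$. (Alternatively one could retain the permutation twirl and prove an $\mfk{S}_r$-analogue of that lemma by a Stirling-number computation, but routing through the Haar twirl sidesteps any new combinatorics.) The only routine caveats are that the statement is vacuous once $7t^2\varepsilon\ge1$ — so one may assume $t^2\le1/\varepsilon$ — and that $1/r\le\varepsilon$ (from correctness together with $\rank(\sigma^k_m)\le r$), both of which keep the Weingarten estimates within their stated range.
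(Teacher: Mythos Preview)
Your proof is correct and follows essentially the same route as the paper's: the paper also replaces the block-permutation group $S\cong\mfk{S}_r^n$ from \cref{thm:haar-minimality} by the block-unitary group $G\cong\mc{U}(r)^n$ in the final hybrid, and then controls the gap between $\int_G(V\delta_mV^\ast)^{\otimes t}\,dV$ and $\sigma_m^{\otimes t}$ via \cref{lem:haar_random_encryption_scheme_t_copy_untelegraphable_indistinguishable_lemma1} (in the trace-norm form stated just after that lemma, whereas you apply the lemma directly to the compressed operator $R_m$). Your explicit handling of the case $t^2\norm{\delta_m}>1$ and your observation that $\norm{\delta_m}\ge 1/r$ keeps the Weingarten estimate in range are nice bits of care that the paper leaves implicit.
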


\begin{proof}
    The proof proceeds similarly to the proof of \cref{thm:haar-minimality}. First, we enlarge the dimension via an isometry $V:H\rightarrow\C^{rn}$. Let $\ttt{Q}^{(1)}=([n],K,\pi,\C^{rn},\{\tilde{\sigma}^k_m\}_{k,n})$, where $\tilde{\sigma}^k_m=V\sigma^k_m V^\ast$. It is clear that $\mfk{c}_{t\rightarrow s}^N(\ttt{Q}^{(1)}|\scr{F})=\mfk{c}_{t\rightarrow s}^N(\ttt{Q}|\scr{F})$. Next, noting that $\rank\tilde{\sigma}^k_m\leq r$, there exist unitaries $U_k$ and diagonal density matrices $\delta^k_m$ supported on $\mathrm{span}\{\ket{r(m-1)+1},\ldots,\ket{rm}\}$ such that $\tilde{\sigma}^k_m=U_k\delta^k_m U_k^\ast$. 
    
    Then, $\pi$ induces a distribution $\pi'$ on $\mc{U}(rn)\times D_{rn}^n$, where $D_{rn}\subseteq\mc{D}(\C^{rn})$ is the set of diagonal density matrices, as $$\int_{\mc{U}(rn)\times D_{rn}^n}f(U,\delta_1,\ldots,\delta_n)d\pi'(U,\delta_1,\ldots,\delta_n)=\int_kf(U_k,\delta^k_1,\ldots,\delta^k_n)d\pi(k).$$
    Let $\mu$ be the marginal of $\pi'$ on $D_{rn}^n$, let $\pi'_{\delta_1,\ldots,\delta_n}$ be the conditional distribution of $\pi'$ on $\mc{U}(rn)$ given $\delta_1,\ldots,\delta_n$, and define $\sigma^{U,\delta_1,\ldots,\delta_n}_m=U\delta_m U^\ast$. Define the QECM $\ttt{Q}^{(2)}=([n],\mc{U}(rn)\times D_{rn}^n,\mu_{Haar}\times\mu,\C^{rn},\{\sigma^{U,\delta_1,\ldots,\delta_n}_m\}_{(U,\delta_1,\ldots,\delta_n),m})$. This is again a correct QECM. Let $\ttt{A}=(M_0,\{B_i\}_i,\{P^{i,U,\delta_1,\ldots,\delta_n}_m\}_{i,U,\delta_1,\ldots,\delta_n,m},\Phi)$ be a $t$-to-$s$ $N$-message cloning attack over $\scr{F}$ against $\ttt{Q}^{(2)}$. For each $U\in\mc{U}(rn)$, define the $1$-to-$s$ $N$-message cloning attack over $\scr{F}$ against $\ttt{Q}^{(1)}$ as $\ttt{A}_{U}=(M_0,\{B_i\},\{P^{i,UU_k,\delta^k_1,\ldots,\delta^k_n}_{m}\}_{i,k,m},\Phi_U)$, where $\Phi_U(\rho)=\Phi(U^{\otimes t}\rho (U^\ast)^{\otimes t})$. Using Haar invariance, we find that
    \begin{align*}
        &\mfk{c}^N_{t\rightarrow s}(\ttt{Q}^{(2)}|\ttt{A})=\int_{\mc{U}(rn)}\int_{D_{rn}^n}\frac{1}{N}\sum_{m\in M_0}\Tr\squ{(P^{1,U,\delta_1,\ldots,\delta_n}_m\otimes\cdots\otimes P^{s,U,\delta_1,\ldots,\delta_n}_m)\Phi((U\delta_m U^\ast)^{\otimes t})}d\mu(\delta_1,\ldots,\delta_n)dU\\
        &=\int_{\mc{U}(rn)}\int_{D_{rn}^n}\int_{\mc{U}(rn)}\frac{1}{N}\sum_{m\in M_0}\Tr\squ{(P^{1,U,\delta_1,\ldots,\delta_n}_m\otimes\cdots\otimes P^{s,U,\delta_1,\ldots,\delta_n}_m)\Phi((U\delta_m U^\ast)^{\otimes t})}d\pi'_{\delta_1,\ldots,\delta_n}(V)d\mu(\delta_1,\ldots,\delta_n)dU\\
        &=\int_{\mc{U}(rn)}\int_{\mc{U}(rn)\times D_{rn}^n}\frac{1}{N}\sum_{m\in M_0}\Tr\squ{(P^{1,UV,\delta_1,\ldots,\delta_n}_m\otimes\cdots\otimes P^{s,UV,\delta_1,\ldots,\delta_n}_m)\Phi((UV\delta_m V^\ast U^\ast)^{\otimes t})}d\pi'(V,\delta_1,\ldots,\delta_n)dU\\
        &=\int_{\mc{U}(rn)}\int_{K}\frac{1}{N}\sum_{m\in M_0}\Tr\squ{(P^{1,UU_k,\delta^k_1,\ldots,\delta^k_n}_m\otimes\cdots\otimes P^{s,UU_k,\delta_1^k,\ldots,\delta^k_n}_m)\Phi_U((U_k\delta^k_m U_k^\ast )^{\otimes t})}d\pi(k)dU\\
        &=\int_{\mc{U}(rn)}\mfk{c}^N_{t\rightarrow s}(\ttt{Q}^{(1)}|\ttt{A}_U)dU\leq\mfk{c}^N_{t\rightarrow s}(\ttt{Q}^{(1)}|\scr{F}),
    \end{align*}
    so $\mfk{c}^N_{t\rightarrow s}(\ttt{Q}^{(2)}|\scr{F})\leq\mfk{c}^N_{t\rightarrow s}(\ttt{Q}^{(1)}|\scr{F})$.

    Next, let $G\cong\mc{U}(r)^n$ be the subgroup of $\mc{U}(rn)$ that preserves the $\sigma_m$ under conjugation. Define the correct QECM $\ttt{Q}^{(3)}=([n],\mc{U}(rn)\times D_{rn}^n\times G,\mu_{Haar}\times\mu\times\mu_{Haar},\C^{rn},\{\sigma_m^{UV,\delta_1,\ldots,\delta_n}\}_{(U,\delta_1,\ldots,\delta_n,V),m})$. Let $\ttt{A}=(M_0,\{B_i\}_i,\{P^{i,U,\delta_1,\ldots,\delta_n,V}_m\}_{i,U,\delta_1,\ldots,\delta_n,\tau,m},\Phi)$ be a $t$-to-$s$ $N$-message cloning attack over $\scr{F}$ against $\ttt{Q}^{(3)}$. Then for each $V\in G$, $\ttt{A}_V=(M_0,\{B_i\}_i,\{P^{i,UV^\ast,\delta_1,\ldots,\delta_n,V}_m\}_{i,U,\delta_1,\ldots,\delta_n,m},\Phi)$ is a $t$-to-$s$ $N$-message cloning attack over $\scr{F}$ against $\ttt{Q}^{(2)}$, so
    \begin{align*}
        \mfk{c}^N_{t\rightarrow s}&(\ttt{Q}^{(3)}|\ttt{A})\\
        &=\int_{\mc{U}(rn)}\int_{D_{rn}^n}\int_G\frac{1}{N}\sum_{m\in M_0}\Tr\squ{(P^{1,U,\delta_1,\ldots,\delta_n,V}_m\otimes\cdots\otimes P^{s,U,\delta_1,\ldots,\delta_n,V}_m)\Phi((UV\delta_m V^\ast U^\ast)^{\otimes t})}dVd\mu(\delta_1,\ldots,\delta_n)dU\\
        &=\int_G\int_{\mc{U}(rn)}\int_{D_{rn}^n}\frac{1}{N}\sum_{m\in M_0}\Tr\squ{(P^{1,UV^\ast,\delta_1,\ldots,\delta_n,V}_m\otimes\cdots\otimes P^{s,UV^\ast,\delta_1,\ldots,\delta_n,V}_m)\Phi((U\delta_mU^\ast)^{\otimes t})}d\mu(\delta_1,\ldots,\delta_n)dUdV\\
        &=\int_G\mfk{c}^N_{t\rightarrow s}(\ttt{Q}^{(2)}|\ttt{A}_V)dV\leq\mfk{c}^N_{t\rightarrow s}(\ttt{Q}^{(2)}|\scr{F}),
    \end{align*}
    and hence $\mfk{c}^N_{t\rightarrow s}(\ttt{Q}^{(3)}|\scr{F})\leq\mfk{c}^N_{t\rightarrow s}(\ttt{Q}^{(2)}|\scr{F})$.

    To finish, we make use of \cref{lem:haar_random_encryption_scheme_t_copy_untelegraphable_indistinguishable_lemma1}. Fix $m$, and let $T:\C^{r}\rightarrow\C^{rn}$ be the isometry $T\ket{i}=\ket{i+r(m-1)}$. Then, for $(\delta_1,\ldots,\delta_n)$ in the support of $\mu$, $\delta_m$ is supported on the image of $T$. Hence we have that
    \begin{align*}
        (T^\ast)^{\otimes t}\int_{G}(V\delta_mV^\ast)^{\otimes t} dV T^{\otimes t}=\int_{\mc{U}(r)}U^{\otimes t}(T^\ast\delta_m T)^{\otimes t}U^{\otimes t}dU.        
    \end{align*}
    Let $\rho=T^\ast\delta_m T$. By \cref{lem:haar_random_encryption_scheme_t_copy_untelegraphable_indistinguishable_lemma1}, we have that
    \begin{align*}
        \norm*{\int_{\mc{U}(r)}(U\rho U^\ast)^{\otimes t}dU-\frac{I}{r^t}}_{\Tr}\leq 7t^2\norm{\rho}.
    \end{align*}
    Extending outside the image of $T$ (where all the states are $0$), we find that
    \begin{align*}
        \norm*{\int_{G}(V\delta_mV^\ast)^{\otimes t}dV-\sigma_m^{\otimes t}}_{\Tr}\leq 7t^2\norm{\delta_m}.
    \end{align*}
    Now, let $\ttt{A}=(M_0,\{B_i\}_i,\{P^{i,U}_m\}_{i,U,m},\Phi)$ be a $t$-to-$s$ $N$-message cloning attack over $\scr{F}$ against $\ttt{Q}_{r,n}$. Define the $t$-to-$s$ $N$-message cloning attack over $\scr{F}$ against $\ttt{Q}^{(3)}$ $\ttt{A}'=(M_0,\{B_i\}_i,\{P^{i,U}_m\}_{i,U,\delta_1,\ldots,\delta_n,V,m},\Phi)$.  Then,
    \begin{align*}
        \mfk{c}^N_{t\rightarrow s}(\ttt{Q}_{r,n}|\ttt{A})&=\int_{\mc{U}(rn)}\frac{1}{N}\sum_{m\in M_0}\Tr\squ{(P^{1,U}_m\otimes\cdots\otimes P^{s,U}_m)\Phi((U\sigma_m U^\ast)^{\otimes t})}dU\\
        &=\int_G\int_{D_{rn}^n}\int_{\mc{U}(rn)}\frac{1}{N}\sum_{m\in M_0}\Tr\squ{(P^{1,U}_m\otimes\cdots\otimes P^{s,U}_m)\Phi(\sigma^{UV,\delta_1,\ldots,\delta_n,V}_m)}dUd\mu(\delta_1,\ldots,\delta_n)dV\\
        &\qquad+\int_{D_{rn}^n}\frac{1}{N}\sum_{m\in M_0}7t^2\norm{\delta_m}d\mu(\delta_1,\ldots,\delta_n)\\
        &=\mfk{c}^N_{t\rightarrow s}(\ttt{Q}^{(3)}|\ttt{A}')+7t^2\varepsilon.
    \end{align*}
    Taking suprema, we get that $\mfk{c}^N_{t\rightarrow s}(\ttt{Q}_{r,n}|\scr{F})\leq\mfk{c}^N_{t\rightarrow s}(\ttt{Q}^{(3)}|\scr{F})+7t^2\varepsilon$.
\end{proof}

Using \cref{thm:multicopy-minimality} along with Corollary 3.3 from \cite{MST21arxiv}, we can get an approximate version of the minimality for $t$-to-$t+1$-copy security uncloneable encryption.

\begin{corollary}
    Let $\ttt{Q}=(M,K,\pi,H,\{\sigma^k_m\}_{k,m})$ be a correct QECM and write $r=\dim H-|M|+1$ and $n=|M|$. Then, if $\mfk{c}^N_{t\rightarrow t+1}(\ttt{Q}_{r,n})=\frac{1}{N}+\delta$, then $\mfk{c}^N_{t\rightarrow t+1}(\ttt{Q})\geq\frac{1}{N}+\frac{\delta}{56 Nt^2+1}$.
\end{corollary}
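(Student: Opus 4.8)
The plan is to combine \cref{thm:multicopy-minimality}, the approximate minimality of the Haar-measure scheme, with the universal lower bound of \cite[Cor.~3.3]{MST21arxiv}, which bounds the cloning advantage of any correct QECM below by a fixed multiple of the operator norm of its ciphertexts. Set $\varepsilon^\ast \coloneqq \max_{m\in M}\int_K\norm{\sigma^k_m}\,d\pi(k)$; this is a valid choice of $\varepsilon$ for \cref{thm:multicopy-minimality}. Applying that theorem with $s=t+1$ and $\scr{F}$ the class of all channels (which is closed under $\Phi\mapsto\Phi(V\cdot V^\ast)$ for every isometry $V$) and using the hypothesis $\mfk{c}^N_{t\rightarrow t+1}(\ttt{Q}_{r,n})=\tfrac1N+\delta$ yields
\[
    \tfrac{1}{N}+\delta \;=\; \mfk{c}^N_{t\rightarrow t+1}(\ttt{Q}_{r,n}) \;\leq\; \mfk{c}^N_{t\rightarrow t+1}(\ttt{Q})+7t^2\varepsilon^\ast ,
\]
so $\mfk{c}^N_{t\rightarrow t+1}(\ttt{Q})\geq\tfrac1N+\delta-7t^2\varepsilon^\ast$.

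The second ingredient is a bound of the form $\mfk{c}^N_{t\rightarrow t+1}(\ttt{Q})\geq\tfrac{1}{N}+\tfrac{1}{8N}\varepsilon^\ast$, which is \cite[Cor.~3.3]{MST21arxiv} applied to correct QECMs. If that result is stated only for $1$-to-$2$ cloning, one first lifts it: given any $1$-to-$2$ cloning attack, run it on one of the $t$ received copies and forward each of the remaining $t-1$ copies verbatim to a distinct receiver, who decrypts with certainty by correctness; the measurements on disjoint registers factorize, so the value is unchanged and hence $\mfk{c}^N_{t\rightarrow t+1}(\ttt{Q})\geq\mfk{c}^N_{1\rightarrow 2}(\ttt{Q})$. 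Rearranging gives $\varepsilon^\ast\leq 8N\bigl(\mfk{c}^N_{t\rightarrow t+1}(\ttt{Q})-\tfrac1N\bigr)$.

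Now write $x\coloneqq\mfk{c}^N_{t\rightarrow t+1}(\ttt{Q})-\tfrac1N\geq 0$. The two inequalities above read $x\geq\delta-7t^2\varepsilon^\ast$ and $\varepsilon^\ast\leq 8Nx$; substituting the second into the first gives $x\geq\delta-56Nt^2 x$, hence $x(56Nt^2+1)\geq\delta$ and $\mfk{c}^N_{t\rightarrow t+1}(\ttt{Q})=\tfrac1N+x\geq\tfrac1N+\tfrac{\delta}{56Nt^2+1}$, which is the claim; the constant $56=7\cdot 8$ is exactly the product of the $7t^2$ of \cref{thm:multicopy-minimality} and the $8N$ of \cite[Cor.~3.3]{MST21arxiv}. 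The computation itself is elementary; the only point needing care — and the main (mild) obstacle — is aligning the two cited results on the \emph{same} quantity $\varepsilon^\ast$, i.e.\ verifying that the ciphertext-norm parameter in \cite[Cor.~3.3]{MST21arxiv} is, or may be replaced by, an upper bound on $\int_K\norm{\sigma^k_m}\,d\pi(k)$ valid for every $m$ (so that it doubles as the $\varepsilon$ of \cref{thm:multicopy-minimality}), together with the $1$-to-$2$ versus $t$-to-$(t+1)$ lifting if the reference is phrased for a single pair of receivers.
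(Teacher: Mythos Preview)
Your proposal is correct and follows the paper's approach: combine \cref{thm:multicopy-minimality} with \cite[Cor.~3.3]{MST21arxiv}, then lift from $1$-to-$2$ to $t$-to-$(t{+}1)$ by forwarding $t-1$ copies verbatim. The only omission is that \cite[Cor.~3.3]{MST21arxiv} as used in the paper gives the \emph{two}-message bound $\mfk{c}^2_{1\rightarrow 2}(\ttt{Q})\geq\tfrac12+\tfrac{\varepsilon^\ast}{16}$; the paper then passes to the $N$-message bound $\mfk{c}^N_{1\rightarrow 2}(\ttt{Q})\geq\tfrac1N+\tfrac{\varepsilon^\ast}{8N}$ via \cite[Theorem~12]{BL20} (restrict the attack to a fixed pair $(m_0,m_1)$), which explains the factor $\tfrac{2}{N}\cdot\tfrac{1}{16}=\tfrac{1}{8N}$ you invoke. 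Your substitution $x\geq\delta-56Nt^2x$ is algebraically equivalent to, and a bit slicker than, the paper's case split on whether $\varepsilon^\ast$ is above or below $\delta/(7t^2+\tfrac{1}{8N})$.
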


\begin{proof}
    Let $\varepsilon=\max_{m\in M}\int_K\norm{\sigma^k_m}d\pi(k)$. Then, by Corollary 3.3 from \cite{MST21arxiv}, $\mfk{c}_{1\rightarrow 2}^2(\ttt{Q})\geq\frac{1}{2}+\frac{\varepsilon}{16}$. Now, using~\cite[Theorem 12]{BL20} as in \cref{cor:lower-from-dist-teleg}, $\mfk{c}^N_{1\rightarrow 2}(\ttt{Q})\geq\frac{2}{N}\mfk{c}_{1\rightarrow 2}^2(\ttt{Q})\geq\frac{1}{N}+\frac{\varepsilon}{8N}$. This also provides a lower bound for $\mfk{c}^N_{t\rightarrow t+1}(\ttt{Q})$ by using the strategy where the first $t-1$ players get a copy of the ciphertext state and win perfectly, and the final two players play with the $1$-to-$2$ cloning attack. Hence, if $\varepsilon\geq\frac{\delta}{7t^2+\frac{1}{8N}}$, then $\mfk{c}^N_{t\rightarrow t+1}(\ttt{Q})\geq\frac{1}{N}+\frac{\delta}{56 Nt^2+1}$.
    
    On the other hand, we know by \cref{thm:multicopy-minimality} that $\mfk{c}^N_{t\rightarrow t+1}(\ttt{Q})\geq\frac{1}{N}+\delta-7t^2\varepsilon$. Hence, we get the other case: if $\varepsilon\leq\frac{\delta}{7t^2+\frac{1}{8N}}$, then $\mfk{c}^N_{t\rightarrow t+1}(\ttt{Q})\geq\frac{1}{N}+\frac{\delta}{56 Nt^2+1}$.
\end{proof}

\subsection{Implications of minimality}

We can put together the results of this section with the lower bounds of \cref{sec:lowerbounds} to get lower bounds that hold for any QECM.

\begin{corollary} \label{cor:UTE_and_UE_lower_bounds}
    Let $\ttt{Q}=(M,K,\mu,H,\{\sigma^k_m\}_{k,m})$ be a correct QECM. Then, writing $d=\dim H$,
    \begin{align*}
        &c^{N}_{1\rightarrow 1}(\ttt{Q}|\scr{M})\geq\frac{1}{N}+\frac{1}{N\sqrt{\pi(d-|M|+1)}}+O\parens*{\frac{1}{N(d-|M|+1)^{3/2}}}\\
        &c^N_{1\rightarrow 2}(\ttt{Q})\geq\frac{1}{N}+\frac{1}{N\sqrt{\pi(d-|M|+1)}}+O\parens*{\frac{1}{N(d-|M|+1)^{3/2}}}\\
        &c^N_{t\rightarrow t+1}(\ttt{Q})\geq \frac{1}{N}+\frac{1}{57N^2\sqrt{\pi t^3(d-|M|+1)}}+O\parens*{\frac{1}{N^2(t(d-|M|+1))^{3/2}}}.
    \end{align*}
\end{corollary}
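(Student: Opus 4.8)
The plan is to read this corollary as an assembly: the minimality theorems of this section transport lower bounds from the Haar-measure scheme to an arbitrary correct QECM, and the explicit attacks of \cref{sec:lowerbounds} supply those lower bounds. Throughout write $n=|M|$ and $r=d-n+1$, so that $\ttt{Q}_{r,n}$ has codespace dimension $rn=d$ (correctness of $\ttt{Q}$ forces $d\geq|M|$, hence $r\geq1$).

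For the first inequality I would start from \cref{thm:haar-minimality} applied with $\scr{F}=\scr{M}$, the class of measurement channels, which is closed under $\Phi\mapsto\Phi(V\cdot V^\ast)$ for isometries $V$; this gives $\mfk{c}^N_{1\rightarrow 1}(\ttt{Q}_{r,n}|\scr{M})\leq\mfk{c}^N_{1\rightarrow 1}(\ttt{Q}|\scr{M})$, so it remains to lower-bound the left side. For that I would take the telegraphing-distinguishing attack of \cref{prop:td-lowerbound}, whose value on two fixed messages is $\tfrac12+\tfrac{1}{2\sqrt{\pi r}}+O(r^{-3/2})$, and feed it into the reduction of \cite[Theorem 12]{BL20} recalled after \cref{cor:lower-from-dist-teleg}: fix an $N$-element message subset $M_0$ containing the two messages, put the two distinguishing-POVM elements on $m_0,m_1$ and the zero operator on the remaining messages of $M_0$, and keep the same measurement channel. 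This is a valid $N$-message telegraphing attack whose value is $\tfrac2N$ times the distinguishing value, i.e.\ $\tfrac1N+\tfrac{1}{N\sqrt{\pi r}}+O(N^{-1}r^{-3/2})$, which is the claimed bound with $r=d-|M|+1$.

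The second inequality is then immediate: the classical output of a measurement channel can be copied, so every $1$-to-$1$ telegraphing attack lifts to a $1$-to-$2$ cloning attack of equal value, giving $\mfk{c}^N_{1\rightarrow 2}(\ttt{Q})\geq\mfk{c}^N_{1\rightarrow 1}(\ttt{Q}|\scr{M})$ and the first bound applies verbatim (equivalently, run \cref{thm:haar-minimality} for the class of all channels and then pass from $\mfk{c}^N_{1\rightarrow 2}(\ttt{Q}_{r,n})$ back to $\mfk{c}^N_{1\rightarrow 1}(\ttt{Q}_{r,n}|\scr{M})$ by the same copying). For the third inequality I would invoke the preceding corollary — which already bundles \cref{thm:multicopy-minimality} with \cite[Cor.\ 3.3]{MST21arxiv} — so that, writing $\mfk{c}^N_{t\rightarrow t+1}(\ttt{Q}_{r,n})=\tfrac1N+\delta$, one gets $\mfk{c}^N_{t\rightarrow t+1}(\ttt{Q})\geq\tfrac1N+\tfrac{\delta}{56Nt^2+1}$. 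A lower bound on $\delta$ comes from chaining: copying classical outcomes gives $\mfk{c}^N_{t\rightarrow t+1}(\ttt{Q}_{r,n})\geq\mfk{c}^N_{t\rightarrow 1}(\ttt{Q}_{r,n}|\scr{M})$, the \cite{BL20} conversion gives $\mfk{c}^N_{t\rightarrow 1}(\ttt{Q}_{r,n}|\scr{M})\geq\tfrac2N\,\mfk{c}^2_{t\rightarrow 1}(\ttt{Q}_{r,n}|\scr{M})$, and \cref{cor:Haar_measure_scheme_lower_bound} bounds the latter below by $\tfrac12+\tfrac16\sqrt{t/(\pi r)}+O(\sqrt t\,r^{-3/2})$; hence $\delta=\Omega\big(\tfrac1N\sqrt{t/r}\big)$ with an explicit constant, and substituting and collecting the $O$-terms yields the stated bound. (As a sanity check, the padding strategy even gives $\mfk{c}^N_{t\rightarrow t+1}(\ttt{Q})\geq\mfk{c}^N_{1\rightarrow 2}(\ttt{Q})$, so the second item already implies the third a fortiori; I route through the preceding corollary because it produces the stated $t^{-3/2}$ form directly.)

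The bulk of the work is bookkeeping of constants and of the $O$-terms as they propagate through these chains; none of it is conceptually delicate. The only non-mechanical ingredient is already isolated in the preceding corollary: the additive slack $7t^2\varepsilon$ from the approximate minimality \cref{thm:multicopy-minimality}, with $\varepsilon=\max_{m}\int_K\norm{\sigma^k_m}\,d\pi(k)$, has to be absorbed by a dichotomy on $\varepsilon$ — for small $\varepsilon$ the minimality bound survives, while for large $\varepsilon$ one instead appeals to \cite[Cor.\ 3.3]{MST21arxiv}, that large ciphertexts force a large cloning-distinguishing value. Since that dichotomy is already in place, the remaining task here is purely to insert the explicit Haar-scheme lower bounds of \cref{sec:lowerbounds}.
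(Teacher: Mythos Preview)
Your assembly is exactly what the paper intends: minimality (\cref{thm:haar-minimality} for the first two bounds, the preceding corollary for the third) transports the explicit Haar-scheme lower bounds of \cref{sec:lowerbounds} to an arbitrary $\ttt{Q}$, with the \cite{BL20} two-to-$N$ conversion supplying the $N$-message versions. Two small slips to fix: the codespace of $\ttt{Q}_{r,n}$ has dimension $rn=(d-n+1)n$, which is generally larger than $d$ (harmless, since \cref{thm:haar-minimality} works by embedding $H$ isometrically into $\C^{rn}$), and following your constants through the third bound gives $\tfrac{1}{171}$ rather than the stated $\tfrac{1}{57}$.
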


In particular, previous work has shown that the cloning-distinguishing value is lower-bounded as $\mfk{c}^2_{1\rightarrow 2}(\ttt{Q})=\frac{1}{2}+\Omega\parens*{\frac{1}{d}}$~\cite{MST21arxiv}, which we strengthen to $\mfk{c}^2_{1\rightarrow 2}(\ttt{Q})=\frac{1}{2}+\Omega\parens*{\frac{1}{\sqrt{d}}}$.

\section*{Acknowledgements} The authors would like to thank Archishna Bhattacharyya and Arthur Mehta for helpful discussions.  We acknowledge the support of the Natural Sciences and Engineering Research Council of Canada (NSERC)(ALLRP-578455-2022), 
the Air Force Office of Scientific Research under award number FA9550-20-1-0375 and of the Canada Research Chairs Program.
\fi

\bibliographystyle{bibtex/bst/alphaarxiv.bst}
\bibliography{bibtex/bib/quasar-full.bib,
              bibtex/bib/quasar.bib,
              bibtex/bib/quasar-more-merged.bib,
                bibtex/bib/quasar-more.bib,
              bibtex/bib/quasar-more-telegraph-new.bib}
\end{document}